\documentclass[
aps,
prx,
reprint,
floatfix,
superscriptaddress,
amsmath,
amssymb,
amsfonts,
nofootinbib,
longbibliography
]{revtex4-2}  

\usepackage{graphicx}     
\usepackage{dcolumn}   
\usepackage{bm}           
\usepackage{amsmath}      
\usepackage{amssymb}      
\usepackage{amsfonts}     
\usepackage{amsthm}       
\usepackage{amstext}
\usepackage{amsbsy}
\usepackage{color}
\usepackage{xcolor}       
\usepackage{wasysym}
\usepackage{enumerate}
\usepackage{relsize}
\usepackage{pstricks}
\usepackage{tikz}
\usetikzlibrary{shapes.geometric, arrows}
\usepackage{siunitx}
\usepackage{mhchem}
\usepackage{stackengine}
\usepackage{scalerel}
\usepackage{textcomp}
\usepackage{float}
\usepackage{verbatim}
\usepackage{dsfont}
\usepackage{cancel}
\usepackage{comment}
\usepackage{array, multirow, bigdelim, makecell, booktabs}
\usepackage[misc]{ifsym}
\usepackage{sidecap}
\usepackage{pifont}
\usepackage{mathtools}
\usepackage{enumitem}

\usepackage[colorlinks=true,linkcolor=blue,urlcolor=blue,filecolor=magenta,citecolor=blue,breaklinks]{hyperref}
\hypersetup{
    pdftitle={Overleaf Example},
    pdfpagemode=FullScreen
}
\usepackage[capitalise]{cleveref}

\definecolor{darkblue}{HTML}{004D6B}
\definecolor{darkred}{HTML}{8c1515}
\definecolor{darkgreen}{HTML}{006400}

\newcommand{\vdimers}{\scalebox{0.8}{\,\pmb{{\textnormal{|\phantom{\tiny{m}}|}}}\,}}
\newcommand{\hdimers}{\,\pmb{\overline{\underline{\text{\phantom{\raisebox{2pt}{\tiny{ee}}}}}}}\,}
\newcommand{\svdimers}{\scalebox{0.5}{\,\pmb{{\text{|\phantom{\tiny{m}}|}}}\,}}
\newcommand{\shdimers}{\,\pmb{\overline{\underline{\text{\phantom{\raisebox{0.1pt}{\tiny{n}}}}}}}\,}
\newcommand{\crossdimers}{\pmb{\mathlarger{\times}}}

\newtheorem{theorem}{Theorem}[section]

\newtheorem{lemma}[theorem]{Lemma}

\theoremstyle{remark}

\allowdisplaybreaks

\begin{document}

\title{Microscopic Spin-1 Parent Hamiltonians for Emergent Valence-Bond Loop
Manifolds}
\author{Hari Borutta}
\affiliation{Department of Physics, Indian Institute of Technology Madras, Chennai 600036, India}
\author{Yasir Iqbal}
\affiliation{Department of Physics, Indian Institute of Technology Madras, Chennai 600036, India}
\author{Kirill Shtengel}
\affiliation{Department of Physics and Astronomy, University of California,
Riverside, CA 92521, USA}
\affiliation{Department of Physics, Indian Institute of Technology Madras, Chennai 600036, India}

\date{\today}

\begin{abstract}
We construct an exact spin-$1$ parent Hamiltonian for constrained
valence-bond loop manifolds on the checkerboard and pyrochlore
lattices. The Hamiltonian is local, SU(2)- and time-reversal-invariant,
and built from positive-semidefinite projectors acting on triangular
faces.  Each projector removes only the maximally polarized state of a
triangle, so the model is frustration-free.  Its zero-energy states are
generated by an AKLT-like construction in which each spin-$1$ moment is
resolved into two virtual spin-$\tfrac12$ degrees of freedom, singlets
are formed inside every crossed plaquette or tetrahedron, and the
physical spin-$1$ Hilbert space is recovered by projection.  The
resulting ground states are fully packed singlet-loop states on the
corner-sharing lattice.  Thus, a loop or dimer constraint, usually
introduced as part of an effective Rokhsar--Kivelson description,
appears here as the exact zero-energy manifold of a microscopic spin
Hamiltonian.  We analyze spin correlations within this manifold and
show that, for a fixed loop covering, they are determined by loop
connectivity.  We also project symmetry-allowed perturbations into the
ground-state manifold and derive the resulting low-energy pseudospin
dynamics.  The checkerboard and pyrochlore cases differ sharply.  On
the pyrochlore lattice, tetrahedral symmetry removes simple local bias
terms, and the leading nontrivial next-nearest-neighbour Heisenberg
perturbation gives an emergent spin-$\tfrac12$ XY model on the diamond
lattice of tetrahedron centers.  These results give an exact spin-$1$
microscopic starting point for constrained valence-bond physics in two
and three dimensions, and show how loop, dimer, and gauge-theoretic
descriptions can be approached from a small-spin, SU(2)-invariant
frustrated magnet.
\end{abstract}

\maketitle

\tableofcontents

\section{Introduction}

Many of the most useful descriptions of frustrated quantum magnets begin with constrained degrees of freedom. Quantum dimer models impose a local dimer constraint and then reveal resonating valence-bond physics, topological sectors, and emergent gauge structure. Quantum spin ice starts from an ice rule and gives rise to Coulombic correlations and fractionalized excitations. These effective descriptions have been central to our understanding of quantum-disordered phases in frustrated magnets~\cite{Balents2010}. They also pose a basic microscopic question: can such constrained manifolds arise directly as the exact ground-state space of a local spin Hamiltonian?

This question is especially sharp in three dimensions. Exact ground states are rare in frustrated quantum magnets, where geometric frustration and strong quantum fluctuations often suppress conventional order but make controlled microscopic analysis difficult. Parent-Hamiltonian constructions provide one of the few settings where firm statements can be made. The Affleck--Kennedy--Lieb--Tasaki (AKLT) construction gives exact valence-bond ground states from local spin Hamiltonians~\cite{Affleck1987}, but in three-dimensional lattices with large coordination number it typically requires large on-site spins fixed by the lattice connectivity~\cite{Parameswaran2009}. Klein-type projector Hamiltonians offer a complementary mechanism: local projectors enforce singlets on elementary motifs and can produce extensively degenerate ground-state manifolds with sector structures reminiscent of gauge theories~\cite{Nussinov2006,Nussinov2007}. The challenge is to combine these virtues in a small-spin, symmetry-preserving model on a genuinely frustrated three-dimensional lattice.

The pyrochlore lattice, and its two-dimensional checkerboard analogue, provide a natural setting for this problem. Their corner-sharing geometry imposes strong local constraints, while their loop structure connects them closely to valence-bond coverings, quantum dimer models, and emergent gauge descriptions. In particular, the pyrochlore antiferromagnet is a canonical setting for three-dimensional Coulomb phases and emergent U(1) gauge fields~\cite{Hermele2004a,Moessner2006a}. At the same time, recent work on constrained dimer models on checkerboard and related crossed-medial lattices has shown that non-planar dimer Hilbert spaces can support exactly solvable liquid states and gauge-theory descriptions~\cite{Wildeboer2020,Wildeboer2026}. These developments make it timely to ask whether analogous loop or dimer constraints can be obtained not by imposing an effective Hilbert space, but as the kernel of a microscopic SU(2)-invariant spin Hamiltonian.

In this work, we answer this question for spin one. We introduce a local SU(2)- and time-reversal-invariant spin-1 Hamiltonian on the checkerboard and pyrochlore lattices. The Hamiltonian is a sum of positive-semidefinite projectors acting on triangular faces of the corner-sharing units. Each projector penalizes only the maximally polarized total-spin sector of a triangle. The model is therefore frustration-free, and its zero-energy states can be constructed explicitly using an AKLT-inspired fractionalization procedure: each physical spin-1 moment is resolved into two virtual spin-$\tfrac12$ degrees of freedom, singlets are formed inside every crossed plaquette or tetrahedron, and the two virtual spins on each physical site are projected back into the triplet sector.

The resulting ground states are fully packed singlet-loop states on the underlying corner-sharing geometry. In this sense, the model combines features that are usually separated. Like AKLT Hamiltonians, it has an explicit valence-bond parent structure. Unlike the standard AKLT construction on these lattices, it works already for spin one, rather than requiring the larger on-site spins dictated by coordination number. Like Klein-type models, it has an extensively degenerate constrained ground-state manifold rather than an essentially unique valence-bond solid. And unlike Rokhsar--Kivelson quantum dimer models, the loop or dimer constraint is not assumed from the outset: it emerges as the exact zero-energy condition of a microscopic spin Hamiltonian.

Our first main result is an explicit characterization of this ground-state manifold in terms of singlet-loop states generated by the virtual-spin construction. The local triangular projector constraint eliminates the maximally symmetric spin sector, while the allowed zero-energy states are spanned by configurations in which every triangular face is protected by at least one virtual singlet. On the checkerboard and pyrochlore lattices this condition organizes the ground-state space into fully packed loop configurations. This provides a concrete microscopic realization of a constrained valence-bond manifold in both two and three dimensions.

Our second result concerns correlations inside this exact manifold. Using Schwinger bosons and spin coherent states, we obtain analytical control over overlaps and two-spin correlations for fixed singlet-loop coverings. In such states, spin correlations are governed by loop connectivity: two spins are correlated only when they lie on the same loop, with a strength that decays exponentially with their separation along that loop. This gives a direct geometric interpretation of correlations in the ground-state manifold.

Our third result is the derivation of effective low-energy dynamics generated by perturbations projected into the exact manifold. We focus on perturbations allowed by symmetry---specifically, additional Heisenberg exchanges---and show how they lift the degeneracy among loop states. The checkerboard and pyrochlore cases behave differently. On the checkerboard lattice, local terms can bias the valence-bond orientation on each crossed plaquette. On the pyrochlore lattice, the higher local symmetry suppresses such simple bias terms, pointing instead toward collective loop processes on the diamond lattice of tetrahedra. This distinction is important for understanding which valence-bond phases or liquid regimes may be selected away from the exactly solvable point.

The broader significance of the construction is twofold.  First, it
gives an exact small-spin parent Hamiltonian for a constrained
valence-bond loop manifold on corner-sharing lattices, including the
three-dimensional pyrochlore lattice.  Second, it provides a controlled
microscopic starting point from which effective loop, dimer, and
gauge-theoretic descriptions can be derived by projection, rather than
postulated as the initial Hilbert space.  In this way, the construction
connects AKLT-type fractionalization, Klein-type degeneracy, and
dimer-like constrained dynamics within a single local Hamiltonian.  It
also places the checkerboard and pyrochlore lattices on common footing:
the former links naturally to recent exact results on non-planar
constrained dimer models, while the latter brings the construction into
the three-dimensional setting most closely associated with emergent
gauge physics.  The constrained degrees of freedom are therefore not an
assumption of the theory, but an exact consequence of the local
spin Hamiltonian.

The remainder of the paper is organized as follows.  In
Sec.~\ref{sec:model}, we define the spin-$1$ projector Hamiltonian and
explain its relation to Klein and AKLT constructions.  In
Sec.~\ref{sec:ground_states}, we construct the zero-energy manifold,
prove its loop-state spanning structure, and derive spin correlations
in fixed loop states.  In Sec.~\ref{sec:eff_Hamiltians}, we project
additional Heisenberg interactions into the exact ground-state manifold
and obtain the corresponding effective pseudospin Hamiltonians on the
checkerboard and pyrochlore lattices.  In Sec.~\ref{sec:MC}, we study correlations in a
sign-free equal-amplitude loop superposition by Monte Carlo sampling.
We conclude in Sec.~\ref{sec:discussion_outlook} with a discussion of
the implications of the construction and possible directions for
extending it toward loop and gauge-theoretic descriptions.

\section{The model}
\label{sec:model}

We consider a spin-1 model defined on the checkerboard and pyrochlore lattices. The Hamiltonian is given by
\begin{equation}
    \mathcal{H}_0 = \mathcal{J} \sum_\triangle P_3\left({S}_\triangle\right)
    \label{eq:Hamiltonian}
\end{equation}
where the sum is taken over all triangular faces of the tetrahedra or crossed plaquettes (essentially, flat tetrahedra) forming the lattice [see Fig.~\ref{fig:lattices}] and $P_3\left({S}_\triangle\right)$ is a projector of the combined spin state of each triangle onto the states with the maximum possible value of their total spin, which is 3. (Here $\mathbf{S}_\triangle = \sum_{i\in\triangle}\mathbf{S}_i$.)

\begin{figure}[hbt]
  \centering
     \includegraphics[width=0.4 \columnwidth]{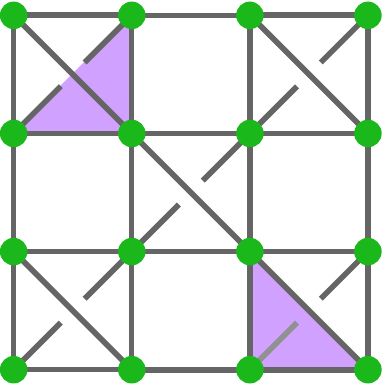}
    \hspace{5mm}
    \includegraphics[width=0.48 \columnwidth]{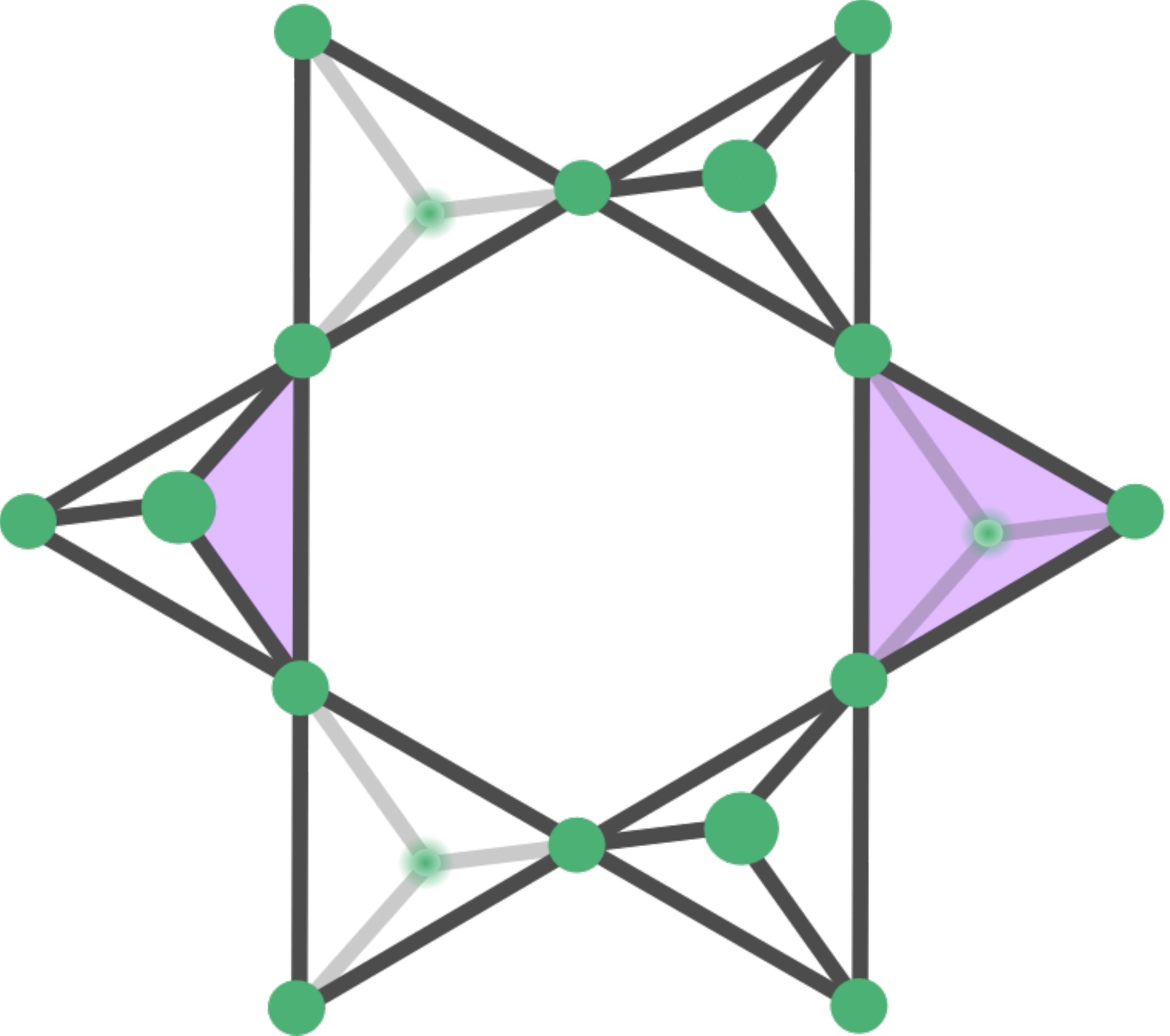}
  \caption{Checkerboard and pyrochlore lattices.  The checkerboard
lattice is the two-dimensional corner-sharing analogue of the
pyrochlore lattice, whose local units are tetrahedra.  Representative
triangular faces are shaded in purple; triangular faces of all
orientations carry the projector interaction in
Eq.~\eqref{eq:Hamiltonian}.  The shortest singlet loops have length
four on the checkerboard lattice and length six on the pyrochlore
lattice, where the latter are formed by bonds belonging to different
tetrahedra.}
\label{fig:lattices}
\end{figure}

This Hamiltonian has features in common with both
Klein~\cite{Klein1982,Nussinov2006} and AKLT~\cite{Affleck1987,Affleck1988}
Hamiltonians.  Since the projector onto the total-spin-$3$ sector of
three spin-$1$ moments can be written as a polynomial in
$\mathbf S_\triangle^2$,
\[
    P_3(\mathbf S_\triangle)
    =
    \frac{
    \mathbf S_\triangle^2
    \left(\mathbf S_\triangle^2-2\right)
    \left(\mathbf S_\triangle^2-6\right)
    }{720},
\]
the Hamiltonian contains multi-spin interactions when expanded in
powers of the microscopic spin operators.  Nevertheless, it is local,
SU(2)-invariant, and time-reversal invariant.  As in Klein and AKLT
constructions, its ground states can be constructed and analyzed
exactly.

A key difference from the standard AKLT construction is that the present
model is defined already for spin one on the checkerboard and
pyrochlore lattices.  The usual AKLT construction on a lattice with
coordination number $z$ requires on-site spin $z/2$, or an integer
multiple thereof; on the checkerboard and pyrochlore lattices this
would require spin three or larger~\cite{Parameswaran2009}.  This
difference is important because large-spin AKLT constructions in three
and higher dimensions are generally more susceptible to conventional
ordering, although the pyrochlore lattice itself appears to evade this
tendency because of its strong geometric frustration~\cite{Parameswaran2009}.

A second key difference is that the present Hamiltonian has an extensive
ground-state manifold rather than a unique valence-bond solid.  In this
respect it is closer in spirit to Klein-type projector Hamiltonians,
while retaining an explicit AKLT-inspired virtual-spin construction.
The structure and consequences of this ground-state manifold are the
subject of the next section.

\section{Properties of the ground states}
   \label{sec:ground_states}

\subsection{Ground state manifold}
   \label{sec:gs_manifold}
The ground states of the Hamiltonian given by Eq.~(\ref{eq:Hamiltonian}) are constructed through a procedure similar to that used for the original AKLT chain~\cite{Affleck1987,Affleck1988} and outlined below:
\begin{enumerate}[label=(\roman*)]
    \item Break each spin-1 degree of freedom into two spin-$\tfrac12$ constituents, splitting these constituents between two tetrahedra or crossed plaquettes that share the original spin.
    \item Form two singlets between four spin-$\tfrac12$ constituents within each tetrahedron/crossed plaquette. (There are three ways of doing this in each such plaquette or tetrahedron.)
    \item Project the combined state of all pairs of constituent spin-$\tfrac12$ on each site of the original lattice onto a triplet, thus restoring the spin-1 nature of the model's physical degrees of freedom.
\end{enumerate}
An example of this procedure in a single crossed plaquette is shown in Fig.~\ref{fig:AKLT_chequer}.
\begin{figure}[bht]
  \centering
     \includegraphics[width=0.45 \columnwidth]{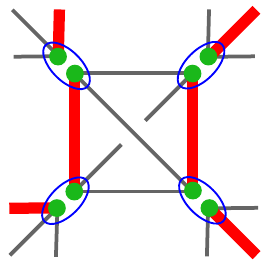}
  \caption{AKLT-like construction on a single crossed plaquette or
tetrahedron.  Each physical spin-$1$ degree of freedom is resolved into
two virtual spin-$\tfrac12$ variables.  Within the crossed plaquette or
tetrahedron, the virtual spins are paired into singlets, shown as thick
red bonds.  The two virtual spins associated with the same physical
site are then symmetrized, indicated by blue ovals, to recover the
physical spin-$1$ Hilbert space.}
  \label{fig:AKLT_chequer}
\end{figure}
The resulting three ground states for each crossed plaquette/tetrahedron can be written using Schwinger bosons~\cite{Auerbach1998}:
\begin{subequations}\label{eq:dimer_states}
\begin{equation}\label{eq:hor_dimers}
\lvert \hdimers \rangle = \frac{1}{2} \left(a^\dag_1  b^\dag_2 - b^\dag_1  a^\dag_2\right)\left(a^\dag_3  b^\dag_4 - b^\dag_3  a^\dag_4 \right)\lvert 0 \rangle \, ;
\end{equation}
\begin{equation}\label{eq:vert_dimers}
\lvert \vdimers\rangle =\frac{1}{2} \left(a^\dag_3  b^\dag_2 - b^\dag_3  a^\dag_2\right)\left(a^\dag_1  b^\dag_4 - b^\dag_1  a^\dag_4\right)\lvert 0 \rangle \, ;
\end{equation}
\begin{multline}\label{eq:diag_dimers}
\lvert \crossdimers\rangle =  \frac{1}{2} \left(a^\dag_1  b^\dag_3 - b^\dag_1  a^\dag_3\right)
\left(a^\dag_2  b^\dag_4 - b^\dag_2  a^\dag_4 \right)\lvert 0 \rangle\\ 
= \lvert \hdimers\rangle - \lvert \vdimers\rangle \, .
\end{multline}
\end{subequations}
with the following numbering convention: on the checkerboard lattice, the sites of the crossed plaquette are numbered anticlockwise beginning with the bottom left; this results in a sign convention where all horizontal and vertical singlets are oriented from sites of sublattice A of the underlying square lattice (sites 1 and 3) to the sites of sublattice B (sites 2 and 4). The diagonal singlets are oriented upward.
On the pyrochlore lattice, we can simply ``designate'' two pairs of mutually orthogonal edges as ``horizontal'' and ``vertical''; focusing on these four (out of six) edges renders the lattice effectively bipartite, as shown in Fig.~\ref{fig:pyrochlore_configs}. 
It is already clear from Eq.~(\ref{eq:diag_dimers}) that these states are not linearly independent. Furthermore, $\langle \hdimers | \hdimers \rangle = \langle \vdimers | \vdimers \rangle = 1$ whereas $\langle \hdimers | \vdimers \rangle = 1/2$. 
A pair of orthonormal basis states can then be defined as
\begin{subequations}\label{eq:basis_states}
\begin{equation}\label{eq:A_state}
\lvert A\rangle = \frac{\sqrt{3}}{3} \left(\lvert \hdimers\rangle + \lvert \vdimers\rangle\right) \, ;
\end{equation}
\begin{equation}\label{eq:B_state}
\lvert B\rangle =  \left(\lvert \hdimers\rangle - \lvert \vdimers\rangle\right) = \lvert \crossdimers\rangle \, .
\end{equation}
\end{subequations}

\begin{figure}[bht]
    \centering
    \includegraphics[width=\columnwidth]{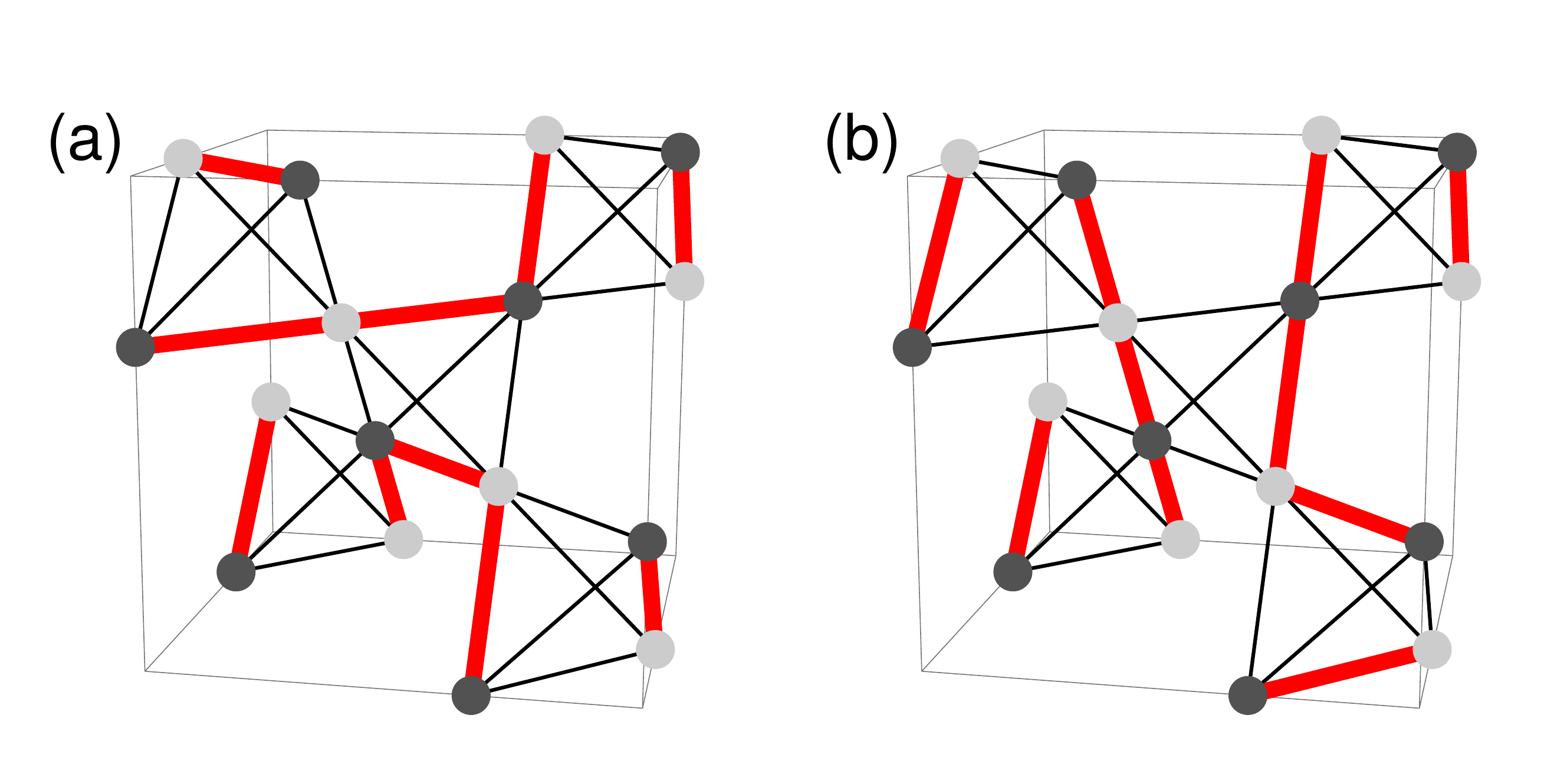}
    \caption{Two representative singlet configurations on the same portion
of the pyrochlore lattice.  The gray and black sites denote the two
sublattice types used to orient the singlets consistently.  In each
tetrahedron, the singlet pairing is chosen from the two independent
opposite-pair states $\lvert\hdimers\rangle$ and
$\lvert\vdimers\rangle$ of Eq.~\eqref{eq:dimer_states}, which are used
to construct the orthonormal local basis in Eq.~\eqref{eq:basis_states}.
All singlets shown connect opposite sublattices.  Boundary singlets
outside the displayed cluster are not shown.}
\label{fig:pyrochlore_configs}
\end{figure}

However, for the remainder of this section, we will ignore the orthonormal bases for each crossed plaquette and instead focus on the ``singlet loop'' states defined as a direct product of \emph{dimer} states $\lvert \hdimers \rangle$, $\lvert \vdimers \rangle$ and $\lvert \crossdimers\rangle$ on each crossed plaquette/tetrahedron. We claim that such states, henceforth denoted as $\lvert\mathcal{L} \rangle$, form an overcomplete basis for \emph{all} ground states of Hamiltonian in Eq.~(\ref{eq:Hamiltonian}). 

While the statement that all such singlet loop states are ground states is self-evident (the presence of a singlet in each triangle excludes the possibility of its three spins combining into a spin-3 state), to prove the (over)completeness we need to show that no additional ground states exist outside of the Hilbert subspace spanned by the loop states.

Inspired by the proof of uniqueness of the ground state of the standard AKLT model on a 1D chain laid out in~\cite{Tasaki2020}, we begin with the following lemmas (for both the checkerboard and pyrochlore lattices):

Before proving the local statement, it is useful to record the
angular-momentum structure on a single triangle.  Consider a triangle
$
\tikz[baseline=(current bounding box.center), scale=0.7]{
    \draw[gray, ultra thin] (0,0) -- (0.8,0) -- (0.4,0.69) -- (0,0);
    \node at (0.4,1.0) {$3$};
    \node at (-0.2,0) {$1$};
    \node at (1.0,0) {$2$};
}
$
of spin-$1$ degrees of freedom.  Each physical spin-$1$ is obtained
by symmetrizing two virtual spin-$\tfrac12$ variables.  Thus the local
Hilbert space may be written as
\begin{equation}
    \mathcal H_\triangle
    =
    \mathrm{Sym}^2 V_1
    \otimes
    \mathrm{Sym}^2 V_2
    \otimes
    \mathrm{Sym}^2 V_3,
    \qquad
    V_i \simeq \mathbb C^2 .
    \label{eq:triangle_virtual_hilbert_space}
\end{equation}
Equivalently, in terms of physical spin representations,
\begin{equation}
    \mathcal H_\triangle
    =
    1_1 \otimes 1_2 \otimes 1_3 .
\end{equation}
Here and below a symbol such as $s$ denotes the spin-$s$ irreducible
representation.  The decomposition into total-spin sectors is
\begin{equation}
    1_1 \otimes 1_2 \otimes 1_3
    =
    3
    \oplus
    2 \cdot 2
    \oplus
    3 \cdot 1
    \oplus
    0 .
    \label{eq:three_spin_one_decomposition}
\end{equation}
The full triangle Hilbert space has dimension $3^3=27$, while the
total-spin-$3$ sector has dimension $2\times 3+1=7$.  Hence
\begin{equation}
    \dim \ker P_3(\mathbf S_\triangle)
    =
    27-7
    =
    20 .
    \label{eq:triangle_kernel_dimension}
\end{equation}
For a fixed edge, say $(12)$, the subspace generated by one intersite
virtual singlet has the form
\begin{equation}
    \mathcal E_{12}
    =
    \left\{
    \mathcal S_{12}^\dagger |\phi\rangle
    \;\middle|\;
    |\phi\rangle \in
    \mathrm{Sym}^2 V_1\otimes \mathrm{Sym}^2 V_2\otimes \mathrm{Sym}^2 V_3
    \right\}
    \label{eq:E12_dimension_count}
\end{equation}
where
\begin{equation}
    \mathcal S_{12}^\dagger
    =
    a_1^\dagger b_2^\dagger
    -
    b_1^\dagger a_2^\dagger
    \label{eq:virtual_singlet_12}
\end{equation}
creates a virtual spin-$\tfrac12$ singlet between sites $1$ and $2$.
This single-edge subspace has dimension
\begin{equation}
    \dim \mathcal E_{12}
    =
    2 \times 2 \times 3
    =
    12 .
\end{equation}
Thus a single edge-singlet space is not the whole triangular kernel.
The correct statement is instead that the full kernel is the linear
span of the three edge-singlet spaces.  This is the content of the
following lemma.

\begin{lemma}[Local triangular kernel]
\label{lemma1}
Consider a triangle $\triangle=(1,2,3)$ of spin-$1$ degrees of
freedom.  Let
\begin{equation}
    \mathbf S_\triangle
    =
    \mathbf S_1+\mathbf S_2+\mathbf S_3 ,
\end{equation}
and let $P_3(\mathbf S_\triangle)$ be the projector onto the
total-spin $S_\triangle=3$ sector of $\mathcal H_\triangle$.  Then
\begin{equation}
    \ker P_3(\mathbf S_\triangle)
    =
    \mathcal E_{12}
    +
    \mathcal E_{23}
    +
    \mathcal E_{31},
    \label{eq:triangle_kernel_span}
\end{equation}
where $\mathcal E_{ij}$ is the subspace spanned by states containing
an intersite virtual spin-$\tfrac12$ singlet on the edge $(ij)$.
Equivalently, every local zero-energy state on a triangle can be
written as a linear combination of states in which at least one edge
of the triangle carries a virtual spin-$\tfrac12$ singlet.
\end{lemma}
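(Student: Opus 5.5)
The plan is to prove the two inclusions separately, and to get the nontrivial one from a representation-theoretic identification of the sum of edge-singlet spaces rather than a dimension count of their intersections.

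\emph{Easy inclusion.} Each space $\mathcal E_{ij}$ lies in $\ker P_3(\mathbf S_\triangle)$. A state $\mathcal S_{ij}^\dagger|\phi\rangle$ contains a virtual singlet, so it has total virtual spin at most $\tfrac12\cdot 4=2$ from the remaining four virtual spin-$\tfrac12$ variables. Hence it has no overlap with the $S_\triangle=3$ sector. Because $P_3$ is linear, the sum $\mathcal E_{12}+\mathcal E_{23}+\mathcal E_{31}$ is also contained in the kernel.

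\emph{Hard inclusion, via SU(2) equivariance.} Each $\mathcal S_{ij}^\dagger$ is an SU(2)-invariant operator, so each $\mathcal E_{ij}$ and the sum $W:=\sum\mathcal E_{ij}$ are SU(2)-invariant subspaces of $\mathcal H_\triangle$. By Eq.~\eqref{eq:three_spin_one_decomposition}, the kernel equals $2\cdot 2\oplus 3\cdot 1\oplus 0$. It therefore suffices to show that $W$ contains the full isotypic components of spin $2$, spin $1$ and spin $0$. Equivalently, $W^\perp$ inside the kernel must vanish. I would phrase this as an annihilation condition. A vector $|\psi\rangle$ orthogonal to $W$ satisfies $\mathcal S_{ij}\,|\psi\rangle=0$ for all three edges, where $\mathcal S_{ij}=a_ib_j-b_ia_j$ is the adjoint singlet-annihilation operator. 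The goal is then to show that the joint kernel of $\mathcal S_{12},\mathcal S_{23},\mathcal S_{31}$ on $\mathcal H_\triangle$ is exactly the spin-$3$ sector.

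\emph{Key step: the joint kernel.} In the Schwinger-boson / coherent-state picture, a state with two bosons per site is a polynomial $\psi(u_1,u_2,u_3)$, homogeneous of degree $2$ in each spinor $u_i=(x_i,y_i)$. The operator $\mathcal S_{ij}$ acts as the Cayley-type operator $\partial_{x_i}\partial_{y_j}-\partial_{y_i}\partial_{x_j}$. I would use the classical fact, first theorem of invariant theory / Clebsch--Gordan in polynomial form, that polynomials killed by all these $\Omega$-processes are exactly those depending only on the symmetrized variable. Concretely, these are $\psi=f(u_1)$-type products that are fully polarized, i.e.\ the highest-spin component $\mathrm{Sym}^6$. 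One can see this concretely by working at fixed $S^z_{\rm tot}=m$ and checking the claim on highest-weight vectors only. A highest-weight vector of spin $S<3$ in $\mathrm{Sym}^2\otimes\mathrm{Sym}^2\otimes\mathrm{Sym}^2$ is a polynomial that contains at least one bracket factor $[u_iu_j]=x_iy_j-y_ix_j$. Hence it is not killed by the corresponding $\Omega$-process, which lowers the bracket count.

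\emph{Main obstacle and fallback.} I expect the main obstacle to be making this last statement rigorous and self-contained. If the invariant-theory argument is too heavy, I will fall back on an explicit computation. The multiplicity structure is small: one spin-$0$ state, three spin-$1$ highest-weight states at $S^z=1$, and two spin-$2$ highest-weight states at $S^z=2$. Bracket monomials give explicit candidates: $[12][23][31]$ for spin $0$; $[12]^2x_3^2$ and its cyclic images for spin $1$; and $[12]x_1x_2x_3^2$ and its cyclic images for spin $2$, of which only two are independent. For each spin, I would check that these candidates span the full multiplicity. This is a rank check of small matrices in the monomial basis. Every such candidate manifestly lies in some $\mathcal E_{ij}$, because it carries an explicit factor $\mathcal S_{ij}^\dagger$. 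Together with SU(2) lowering, which maps each $\mathcal E_{ij}$ into itself, this shows that $W$ contains all of $2\cdot2\oplus3\cdot1\oplus0$. That gives $\dim W=20$ and proves Eq.~\eqref{eq:triangle_kernel_span}.
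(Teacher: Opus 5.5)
Your proof is correct, but only through the fallback computation, and it takes a different route from the paper. Your starting point is the same: a vector orthogonal to $W$ must satisfy $\mathcal S_{ij}|\psi\rangle=0$ for all three edges. The paper then settles the joint-kernel question without invariant theory. On two sites, $\mathcal S_{ij}^\dagger$ maps $V_i\otimes V_j$ (spin $1\oplus0$, dimension $4$) injectively onto the complement of the $S_{ij}=2$ sector. Hence $\mathcal E_{ij}^\perp=\{S_{ij}=2\}\otimes\mathrm{Sym}^2V_k$, and a vector orthogonal to all three edge spaces has $\mathbf S_i\cdot\mathbf S_j|\psi\rangle=|\psi\rangle$ on every edge. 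Then $\mathbf S_\triangle^2|\psi\rangle=(6+6)|\psi\rangle=12|\psi\rangle$, so $|\psi\rangle$ has spin $3$. That is two lines, and it extends verbatim to any number of spins of any size on a complete graph. By contrast, the justification you sketch for the key step does not hold as written. A spin-$2$ highest-weight vector need not be divisible by any single bracket. For instance, $x_1x_2x_3\bigl([12]x_3-[23]x_1\bigr)$ is divisible by none of $[12],[23],[31]$, as follows from the Pl\"ucker relation $[12]x_3+[23]x_1+[31]x_2=0$. So ``the corresponding $\Omega$-process'' is not defined for such a vector.

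Your fallback avoids this and is complete. $W$ is SU(2)-invariant, and each listed bracket polynomial is a highest-weight vector carrying an explicit factor $\mathcal S_{ij}^\dagger$. The rank checks also go through. First, $[12][23][31]\neq0$. Second, specializing $u_i=u_j$ in a vanishing combination $a[12]^2x_3^2+b[23]^2x_1^2+c[31]^2x_2^2$ gives $a+b=b+c=c+a=0$, so the three spin-$1$ candidates are independent. Third, the spin-$2$ candidates span two dimensions, with the Pl\"ucker relation as their only dependency. What your route buys over the paper is explicit highest-weight generators of the kernel inside specific edge spaces; for example, the singlet $[12][23][31]$ lies in all three $\mathcal E_{ij}$. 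What it costs is a case-by-case multiplicity check tied to three spin-$1$ sites, where the paper's Casimir argument needs no bookkeeping at all.
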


\begin{proof}
We use the Schwinger-boson representation of the virtual spinors.
On site $i$, let $a_i^\dagger$ and $b_i^\dagger$ create the two
components of a spin-$\tfrac12$ spinor.  The physical spin-$1$ Hilbert
space is the two-boson subspace on each site.  For an edge $(ij)$,
define the SU(2)-invariant virtual-singlet creation operator
\begin{equation}
    \mathcal S_{ij}^\dagger
    =
    a_i^\dagger b_j^\dagger
    -
    b_i^\dagger a_j^\dagger .
    \label{eq:virtual_singlet_operator}
\end{equation}
The subspace $\mathcal E_{ij}$ consists of states of the form
\begin{equation}
    \mathcal S_{ij}^\dagger
    F_{ij;k}
    \left(
    a_i^\dagger,b_i^\dagger;
    a_j^\dagger,b_j^\dagger;
    a_k^\dagger,b_k^\dagger
    \right)
    |0\rangle ,
    \label{eq:Eij_definition}
\end{equation}
where $k$ is the remaining site of the triangle.  The polynomial
$F_{ij;k}$ is homogeneous of degree $1$ in the bosons on site $i$,
degree $1$ in the bosons on site $j$, and degree $2$ in the bosons
on site $k$.  Thus the total boson number on each site is $2$, as
required for a physical spin-$1$ state.

We first show that
\begin{equation}
    \mathcal E_{12}
    +
    \mathcal E_{23}
    +
    \mathcal E_{31}
    \subseteq
    \ker P_3(\mathbf S_\triangle).
    \label{eq:E_subset_kernel}
\end{equation}
The total-spin $S_\triangle=3$ sector is the maximally symmetric
representation obtained by symmetrizing all six virtual
spin-$\tfrac12$ variables on the triangle.  A state in
$\mathcal E_{ij}$ contains an antisymmetric pair of virtual spinors
on sites $i$ and $j$, namely the singlet
$\mathcal S_{ij}^\dagger$.  Such a state has no component in the
completely symmetric six-spinor sector.  Therefore it is annihilated
by $P_3(\mathbf S_\triangle)$, proving Eq.~\eqref{eq:E_subset_kernel}.

It remains to prove the reverse inclusion.  We first recall the
corresponding two-site statement.  For two physical spin-$1$ sites,
\begin{equation}
    \mathcal H_i^{(S=1)}
    \otimes
    \mathcal H_j^{(S=1)}
    =
    1_i \otimes 1_j
    =
    2 \oplus 1 \oplus 0 .
    \label{eq:two_spin_decomposition}
\end{equation}
The $S_{ij}=2$ subspace is the completely symmetric subspace of the
four virtual spinors on sites $i$ and $j$.  Its orthogonal complement,
namely the $S_{ij}=1\oplus 0$ subspace, is precisely the space
generated by one intersite virtual singlet $\mathcal S_{ij}^\dagger$
multiplied by one remaining virtual spinor on each of the two sites.
This singlet construction is orthogonal to the fully symmetric
four-spinor sector and has dimension
\begin{equation}
    \dim (S_{ij}=1\oplus 0)
    =
    3+1
    =
    4 ,
\end{equation}
which is the dimension of the complement of the $S_{ij}=2$ subspace
inside $1_i\otimes 1_j$.  Hence, for each pair $(ij)$, the subspace
generated by an intersite virtual singlet is exactly the complement
of the maximal pair-spin sector $S_{ij}=2$.

Now let
\begin{equation}
    |\psi\rangle
    \in
    \mathcal H_\triangle
\end{equation}
be orthogonal to
\begin{equation}
    \mathcal E_{12}
    +
    \mathcal E_{23}
    +
    \mathcal E_{31}.
\end{equation}
Then $|\psi\rangle$ is orthogonal to the states in $\mathcal E_{ij}$ for every
edge $(ij)$.  By the two-site result above, this implies that every
pair of spins in $|\psi\rangle$ lies in its maximal pair-spin sector:
\begin{align}
    \left(\mathbf S_i+\mathbf S_j\right)^2 |\psi\rangle
    &=
    2(2+1)\,|\psi\rangle
    =
    6\,|\psi\rangle ,
    \nonumber\\
    &\hspace{2.0cm}
    (ij)=(12),(23),(31).
    \label{eq:pair_spin_maximal}
\end{align}
Since $\mathbf S_i^2=\mathbf S_j^2=1(1+1)=2$,
Eq.~\eqref{eq:pair_spin_maximal} is equivalent to
\begin{equation}
    \mathbf S_i\cdot \mathbf S_j
    |\psi\rangle
    =
    |\psi\rangle ,
    \qquad
    (ij)=(12),(23),(31).
    \label{eq:pair_dot_product}
\end{equation}
Therefore
\begin{align}
    \mathbf S_\triangle^2 |\psi\rangle
    &=
    \left(
    \sum_{i=1}^{3} \mathbf S_i^2
    +
    2\sum_{i<j}\mathbf S_i\cdot \mathbf S_j
    \right)
    |\psi\rangle
    \nonumber\\
    &=
    \left(
    3\times 2
    +
    2\times 3
    \right)
    |\psi\rangle
    \nonumber\\
    &=
    12\,|\psi\rangle .
    \label{eq:total_spin_three}
\end{align}
Thus $|\psi\rangle$ has total spin $S_\triangle=3$, since
$3(3+1)=12$.  We have therefore shown that the orthogonal complement
of
\begin{equation}
    \mathcal E_{12}
    +
    \mathcal E_{23}
    +
    \mathcal E_{31}
\end{equation}
is exactly the $S_\triangle=3$ subspace.  Consequently,
\begin{equation}
    \mathcal E_{12}
    +
    \mathcal E_{23}
    +
    \mathcal E_{31}
    =
    \left(
    \mathcal H_\triangle^{S_\triangle=3}
    \right)^\perp
    =
    \ker P_3(\mathbf S_\triangle).
\end{equation}
This proves the lemma.
\end{proof}

Diagrammatically, the lemma justifies representing the local kernel by
configurations in which at least one edge of the triangle carries a
virtual singlet, as in
$
\tikz[baseline=(current bounding box.center), scale=0.6]{
    \draw[gray, ultra thin] (0.8,0) -- (0.4,0.69) -- (0,0);
    \draw[ultra thick] (0,0) -- (0.8,0);
    \node at (0.4,0.9) {$1$};
    \node at (-0.2,0) {$\frac12$};
    \node at (1.0,0) {$\frac12$};
}
$;
this graphical language should be understood as a statement about a
spanning set, not as a unique bond assignment for an arbitrary
superposition in the kernel.

\begin{lemma}[Local patterns that extend through the lattice]
\label{lemma:extendable_patterns}
Consider a zero-energy state of $\mathcal H_0$ on the checkerboard or
pyrochlore lattice with periodic boundary conditions or in the infinite lattice.  Resolve each
physical spin-$1$ into two virtual spin-$\tfrac{1}{2}$ variables. Then one
out of the two virtual spin-$\tfrac{1}{2}$ has to be assigned to each of the two adjacent corner-sharing units and further any
virtual-singlet pattern that can be extended through the full lattice
while satisfying all triangular projector constraints has, on every
tetrahedron or crossed plaquette, one of the three opposite-pair
valence-bond patterns
\[
\begin{array}{ccc}
\tikz[baseline=(current bounding box.center), scale=0.6]{
\draw[gray, ultra thin] (0,0) -- (0.8,0) -- (0.8,0.8) -- (0,0.8) -- (0,0);
\draw[ultra thick] (0,0) -- (0.8,0.8);
\draw[ultra thick] (0.8,0) -- (0,0.8);
\node at (-0.2,-0.1) {$\frac{1}{2}$};
\node at (1.0,-0.1) {$\frac{1}{2}$};
\node at (1.0,0.9) {$\frac{1}{2}$};
\node at (-0.2,0.9) {$\frac{1}{2}$};
}
&
\tikz[baseline=(current bounding box.center), scale=0.6]{
\draw[gray, ultra thin] (0,0) -- (0.8,0);
\draw[gray, ultra thin] (0.8,0.8) -- (0,0.8);
\draw[gray, ultra thin] (0,0) -- (0.8,0.8);
\draw[gray, ultra thin] (0.8,0) -- (0,0.8);
\draw[ultra thick] (0,0) -- (0,0.8);
\draw[ultra thick] (0.8,0) -- (0.8,0.8);
\node at (-0.2,-0.1) {$\frac{1}{2}$};
\node at (1.0,-0.1) {$\frac{1}{2}$};
\node at (1.0,0.9) {$\frac{1}{2}$};
\node at (-0.2,0.9) {$\frac{1}{2}$};
}
&
\tikz[baseline=(current bounding box.center), scale=0.6]{
\draw[ultra thick] (0,0) -- (0.8,0);
\draw[ultra thick] (0.8,0.8) -- (0,0.8);
\draw[gray, ultra thin] (0,0) -- (0.8,0.8);
\draw[gray, ultra thin] (0.8,0) -- (0,0.8);
\draw[gray, ultra thin] (0,0) -- (0,0.8);
\draw[gray, ultra thin] (0.8,0) -- (0.8,0.8);
\node at (-0.2,-0.1) {$\frac{1}{2}$};
\node at (1.0,-0.1) {$\frac{1}{2}$};
\node at (1.0,0.9) {$\frac{1}{2}$};
\node at (-0.2,0.9) {$\frac{1}{2}$};
}
\end{array}
\]
corresponding respectively to
\begin{equation}
    |\hdimers\rangle,\qquad
    |\vdimers\rangle,\qquad
    |\crossdimers\rangle .
\end{equation}
Thus the local patterns that extend through the full lattice are
spanned by the opposite-pair valence-bond states.
\end{lemma}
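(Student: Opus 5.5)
The argument is a counting of virtual spinors combined with Lemma~\ref{lemma1}, applied simultaneously to all triangles. Each physical site carries exactly two virtual spin-$\tfrac12$ variables (two Schwinger bosons) and belongs to exactly two corner-sharing units, namely two crossed plaquettes or two tetrahedra. Each unit contains four sites, and its triangular faces are the four triples of sites obtained by omitting one site in turn; on the checkerboard, the crossed plaquette is treated as a flat tetrahedron in the same way. By Lemma~\ref{lemma1}, a zero-energy state lies, triangle by triangle, in the span of configurations with at least one virtual singlet on an edge of that triangle. Every edge of the lattice belongs to exactly one unit, so each virtual singlet is attached to a unique unit.

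\textbf{Step 1: one virtual spinor per unit.} Consider a spanning singlet configuration (a product of edge singlets, possibly with leftover unpaired symmetrized spinors). The total number of virtual spinors is $2N$, and the number of units is $N/2$. Suppose some unit $U$ received fewer than four spinors, for instance because a site $i\in U$ spent both of its spinors on singlets in the other unit $U'$. Then in $U$ there is a triangle not containing $i$ whose sites have only a limited supply of spinors available in $U$. I will show that a deficit in $U$ forces a surplus elsewhere, and that a surplus cannot be realized. Within a unit, a site can contribute at most one spinor to each edge singlet, and two spinors on the same site cannot form a singlet with each other since they are symmetrized. The cleaner formulation is a global one: each unit has four triangles, and covering all four triangles of a tetrahedron by edge singlets requires at least two disjoint-vertex edges or a pattern using at least $4$ spinor slots in that unit. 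The minimal edge covers of the triangles of $K_4$ are exactly the perfect matchings, because a triangle-covering edge set of $K_4$ is the complement of a triangle-free subgraph. A complement of size at most $2$ means the triangle-free subgraph has at least $4$ edges, which forces it to be a $4$-cycle, so the cover is a matching of size $2$ using $4$ spinors. Summing over units, at least $4\cdot N/2=2N$ spinors are needed, and exactly $2N$ are available. Hence every unit uses exactly four spinors, one from each of its sites, with no leftover unpaired spinors. This proves the first claim and forces each unit's singlet set to be a perfect matching of its four sites.

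\textbf{Step 2: identifying the patterns.} The perfect matchings of four sites are exactly the three opposite-pair patterns, namely $|\hdimers\rangle$, $|\vdimers\rangle$ and $|\crossdimers\rangle$. By Eq.~\eqref{eq:diag_dimers} the third is linearly dependent on the first two, which yields the spanning statement.

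\textbf{Main obstacle.} The delicate point is making the counting rigorous for superpositions rather than for individual diagrams. Lemma~\ref{lemma1} gives a spanning set per triangle, but neighbouring triangles share sites, and the decompositions need not be compatible a priori. I would handle this as in Tasaki's AKLT uniqueness argument. First, expand the state in the overcomplete product basis of singlet diagrams, with leftover symmetrized spinors, on each unit. Second, fix a basis choice that is linearly independent up to the relation \eqref{eq:diag_dimers}. Third, argue that any diagram violating the one-spinor-per-unit count leaves some triangle whose projection onto $S=3$ is nonzero and cannot cancel against other diagrams. That cancellation cannot occur because such diagrams differ in the boson number assigned to a unit, a quantity that is conserved under the triangle projectors and so labels orthogonal sectors. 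Periodic boundary conditions, or the infinite lattice, are needed so that the global count $2N$ versus $4\cdot N/2$ is exact with no boundary slack.
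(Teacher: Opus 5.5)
Your diagram-level counting is correct, and it takes a genuinely different route from the paper's proof.

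\textbf{How the two routes differ.} The paper argues locally. A site that spends both virtual spins in one unit becomes an unavailable corner $x$ of its other unit. That forces the three-edge singlet triangle on the remaining corners $a,b,c$, which are then exhausted for their own neighbouring units. The cascade closes around a short cycle of units (on the checkerboard, the four crossed plaquettes around an empty plaquette) until some unit has two unavailable corners and hence an unprotectable triangle. Only after this does the paper enumerate matchings on four vertices.

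Your count runs differently. A triangle cover of $K_4$ is the complement of a triangle-free graph, and the only triangle-free graph on four vertices with four edges is $C_4$. So every unit needs at least two edge singlets, and the minimal covers are exactly the perfect matchings. The $N/2$ units therefore consume at least the $2N$ available spinors, which forces equality everywhere.

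This gives both claims of the lemma at once and needs no lattice-specific cycle argument. It also disposes explicitly of unpaired spinors and of singlets on non-unit pairs, which the paper handles only implicitly. Its price is that it needs finite $N$. On the infinite lattice a finite box only bounds the total surplus $\sum_U (s_U-4)$ by the number of boundary sites, so your remark about ``no boundary slack'' does not cover that case. There you need a local argument such as the paper's cascade.

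\textbf{The step that fails: superpositions.} The assignment of a site's two virtual spinors to one unit or the other is not a quantum number. On site $i$ there is a single pair of Schwinger bosons $a_i,b_i$ with $n_i=2$. No operator counts ``bosons of site $i$ belonging to unit $U$'', let alone one that commutes with the $P_3(\mathbf S_\triangle)$.

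Diagrams that differ in this assignment are generically non-orthogonal; their overlap is a transition-loop sum computed by the coherent-state rules of Appendix~\ref{app:stat_mech_loops}. They are also linked by Pl\"ucker-type identities. So they do not label orthogonal sectors, and nothing in your argument prevents their $S_\triangle=3$ components from cancelling.

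\textbf{What is actually needed.} You must show that every element of $\bigcap_\triangle \ker P_3(\mathbf S_\triangle)$ can be expanded in diagrams that satisfy all triangle constraints simultaneously. Lemma~\ref{lemma1} only gives a separate expansion for each triangle, and an intersection of such spans need not coincide with the span of the diagrams common to all of them.

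The paper's proof does not supply this step either; it applies Lemma~\ref{lemma1} diagram by diagram as a ``local propagation rule''. So you have two options:
\begin{itemize}
\item state and prove the result at that diagrammatic level, where your counting argument is complete for periodic boundary conditions; or
\item replace the orthogonality claim with a genuine argument, for instance a Tasaki-style induction over units with explicit control of the kernel dimension.
\end{itemize}
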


\begin{proof}
Since $H_0$ is a sum of positive-semidefinite projectors, a
zero-energy state must be annihilated by every triangular projector:
\begin{equation}
    P_3(\mathbf S_\triangle)|\Psi\rangle=0
\end{equation}
for every triangular face $\triangle$.  By Lemma~\ref{lemma1}, the
kernel of a triangular projector is spanned by states in which at
least one edge of the triangle carries an intersite virtual
spin-$\tfrac12$ singlet.  We shall use this statement as a local
propagation rule: every triangular face must be protected by at least
one such edge singlet.

Let us first rule out a pattern in which a physical site uses both of
its virtual spin-$\tfrac12$ variables inside the same tetrahedron
or crossed plaquette.  Such a site is then unavailable to the other
corner-sharing unit that also contains it.  In that neighboring unit,
call the unavailable corner $x$, and call the other three corners
$a,b,c$.  The three triangular faces containing $x$ are
\begin{equation}
    (xab),\qquad (xac),\qquad (xbc).
\end{equation}
Because no singlet can use the unavailable corner $x$, these three
triangles must be protected by singlets on the opposite edges
\begin{equation}
    (ab),\qquad (ac),\qquad (bc),
\end{equation}
respectively.  Thus the neighboring unit is forced into the
three-edge singlet-triangle pattern
\[
\tikz[baseline=(current bounding box.center), scale=0.7]{
\draw[gray, ultra thin] (0,0) -- (1,0) -- (1,1) -- (0,1) -- (0,0);
\draw[gray, ultra thin] (0,0) -- (1,1);
\draw[gray, ultra thin] (1,0) -- (0,1);
\draw[ultra thick] (0,0) -- (1,0) -- (1,1) -- (0,0);
\node at (-0.2,1.1) {$x$};
\node at (-0.2,-0.1) {$a$};
\node at (1.2,-0.1) {$b$};
\node at (1.2,1.1) {$c$};
\node at (-0.2,0.55) {$\varnothing$};
}
\]
where $\varnothing$ indicates that no virtual spin remains available
at the corner $x$.

This singlet-triangle pattern uses both virtual spin-$\tfrac12$
variables on the three sites $a,b,c$ inside the same unit.  The
constraint therefore propagates to the neighboring units attached to
$a,b,c$.  On the checkerboard lattice this propagation is illustrated
below.  Starting from a singlet triangle in the crossed plaquette
labeled $1$, the adjacent crossed plaquettes labeled $2$ and $3$ are
forced into singlet-triangle patterns in order to protect all their
triangular faces:
\[
\begin{array}{c}
\begin{tikzpicture}[scale=0.8]
\draw[step=1cm,gray,very thin] (-1,-1) grid (2,2);
\draw[ultra thick] (-1,0) -- (0,0) -- (0,1) -- (-1,0);
\draw[gray, ultra thin] (-1,1) -- (0,0);
\draw[gray, ultra thin] (0,2) -- (1,1);
\draw[gray, ultra thin] (0,1) -- (1,2);
\draw[gray, ultra thin] (0,0) -- (1,-1);
\draw[gray, ultra thin] (0,-1) -- (1,0);
\draw[gray, ultra thin] (1,0) -- (2,1);
\draw[gray, ultra thin] (1,1) -- (2,0);
\node at (-1.2,0.5) {$1$};
\node at (0.5,2.2) {$2$};
\node at (0.5,-1.2) {$3$};
\node at (2.2,0.5) {$4$};
\node at (0.2,0.8) {$a$};
\node at (0.8,0.8) {$b$};
\node at (0.8,0.2) {$c$};
\node at (2.2,1.2) {$d$};
\node at (2.2,-0.2) {$e$};
\end{tikzpicture}
\\[2mm]
\begin{tikzpicture}[scale=0.8]
\draw[step=1cm,gray,very thin] (-1,-1) grid (2,2);
\draw[ultra thick] (-1,0) -- (0,0) -- (0,1) -- (-1,0);
\draw[gray, ultra thin] (-1,1) -- (0,0);
\draw[ultra thick] (0,2) -- (1,1) -- (1,2) -- (0,2);
\draw[gray, ultra thin] (0,1) -- (1,2);
\draw[gray, ultra thin] (0,0) -- (1,-1);
\draw[ultra thick] (0,-1) -- (1,0) -- (1,-1) -- (0,-1);
\draw[gray, ultra thin] (1,0) -- (2,1);
\draw[gray, ultra thin] (1,1) -- (2,0);
\node at (-1.2,0.5) {$1$};
\node at (0.5,2.2) {$2$};
\node at (0.5,-1.2) {$3$};
\node at (2.2,0.5) {$4$};
\node at (0.2,0.8) {$a$};
\node at (0.8,0.8) {$b$};
\node at (0.8,0.2) {$c$};
\node at (2.2,1.2) {$d$};
\node at (2.2,-0.2) {$e$};
\end{tikzpicture}
\end{array}
\]
However, after plaquettes $2$ and $3$ are forced in this way, the
sites $b$ and $c$ are already used up by those neighboring
singlet-triangle patterns.  In the crossed plaquette labeled $4$,
there are then triangular faces containing the unavailable corners
$b$ and $c$.  Such a triangle cannot be protected by any intersite
virtual singlet on one of its edges, because two of its corners have
no available virtual spin.  The same local obstruction occurs on the
pyrochlore lattice, where the corresponding crossed plaquettes are
replaced by tetrahedra.  The argument uses only the corner-sharing
connectivity and the triangular faces of the local units.

Equivalently, the forbidden local situation has the form
\[
\tikz[baseline=(current bounding box.center), scale=0.7]{
\draw[gray, ultra thin] (0,0) -- (1,0) -- (1,1) -- (0,1) -- (0,0);
\draw[gray, ultra thin] (0,0) -- (1,1);
\draw[gray, ultra thin] (1,0) -- (0,1);
\node at (-0.2,0.0) {$\varnothing$};
\node at (1.2,0.0) {$\varnothing$};
\node at (1.2,1.0) {$\frac{1}{2}$};
\node at (-0.2,1.0) {$\frac{1}{2}$};
\node at (0.5,-0.35) {forbidden};
}
\]
because at least one triangular face contains two unavailable corners.
This contradicts Lemma~\ref{lemma1}. Hence, no state that uses both
virtual spin-$\frac{1}{2}$ variables of a physical site inside one
tetrahedron or crossed plaquette can be continued to a zero-energy
state on the full periodic lattice.

It remains to identify the local patterns once such exhausted
arrangements are excluded.  Each tetrahedron or crossed plaquette then
receives exactly one available virtual spin-$\frac{1}{2}$ from each
of its four sites.  A local virtual-singlet pattern is therefore a
matching on four vertices.  To satisfy the triangular constraint on
all four faces, every triangular face must contain at least one edge
of this matching.  A single edge of a crossed plaquette/tetrahedron is not enough, since the triangle
formed by the two vertices not incident on that edge and either
endpoint is left unprotected.  Therefore the matching must contain
two disjoint edges covering all four vertices.  There are exactly
three such perfect matchings:
\begin{equation}
    (12)(34),\qquad
    (32)(14),\qquad
    (13)(24).
\end{equation}
These are precisely
\begin{equation}
    |\hdimers\rangle,\qquad
    |\vdimers\rangle,\qquad
    |\crossdimers\rangle .
\end{equation}
Thus the only local patterns that extend through the full lattice are
the three opposite-pair valence-bond patterns.  This proves the
lemma.
\end{proof}

The three local valence-bond states are not linearly independent.  In
Schwinger-boson notation, with
\begin{equation}
    \mathcal S_{ij}^{\dagger}
    =
    a_i^{\dagger} b_j^{\dagger}
    -
    b_i^{\dagger} a_j^{\dagger},
\end{equation}
the singlet operators obey the Pluecker identity
\begin{equation}
    \mathcal S_{12}^{\dagger}\mathcal S_{34}^{\dagger}
    -
    \mathcal S_{32}^{\dagger}\mathcal S_{14}^{\dagger}
    -
    \mathcal S_{13}^{\dagger}\mathcal S_{24}^{\dagger}
    =
    0 .
\end{equation}
With the definitions used above, this gives
\begin{equation}
    |\crossdimers\rangle
    =
    |\hdimers\rangle
    -
    |\vdimers\rangle .
\end{equation}
The three opposite-pair states therefore span a two-dimensional local
valence-bond space.

\begin{lemma}[Valence-bond loop manifold]
\label{lemma:loop_manifold}
On the checkerboard or pyrochlore lattice with periodic boundary
conditions or in the infinite limit, the zero-energy states obtained from the globally
extendable virtual-singlet patterns are fully packed singlet-loop
states.  They are obtained by choosing, on every tetrahedron or
crossed plaquette, one state from the local valence-bond space spanned
by
\[
|\hdimers\rangle,\qquad
|\vdimers\rangle,\qquad
|\crossdimers\rangle,
\]
and then projecting the two virtual spin-$\frac{1}{2}$ variables at
every physical site onto the symmetric spin-$1$ subspace.
\end{lemma}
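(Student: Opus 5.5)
Lemma~\ref{lemma:extendable_patterns} does most of the work, so the plan is to assemble it into a global statement and then check the loop geometry. First, I fix a zero-energy state $|\Psi\rangle$ built from a globally extendable virtual-singlet pattern. By Lemma~\ref{lemma:extendable_patterns}, each physical site $i$ sends exactly one virtual spin-$\tfrac12$ into each of the two corner-sharing units $\alpha,\beta$ that contain it. Inside each unit, the four incoming virtual spins are paired by one of the three perfect matchings $(12)(34)$, $(32)(14)$, $(13)(24)$. The pre-projection virtual state is therefore a tensor product over units of local states $|\hdimers\rangle_\alpha$, $|\vdimers\rangle_\alpha$ or $|\crossdimers\rangle_\alpha$. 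Superpositions within a unit lie in the two-dimensional local space, by the Pl\"ucker identity $|\crossdimers\rangle=|\hdimers\rangle-|\vdimers\rangle$. Multilinearity of the tensor product then shows that any choice of one local state per unit, followed by the on-site projection $\prod_i P^{\mathrm{sym}}_i$ onto $\mathrm{Sym}^2 V_i$, gives a state in this family. Conversely, every state in the family is a linear combination of product choices of the three dimer states.

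Second, I verify that each such state is annihilated by $\mathcal H_0$. A triangle lies in a single unit $\alpha$. Every perfect matching on the four corners of $\alpha$ places a singlet on at least one edge of each of its four triangular faces, because the face omits exactly one vertex and so contains one full pair. The on-site symmetrizers act on sites, not on the singlet structure, and they commute with the SU(2) generators. So the projected state still lies in $\mathcal E_{ij}$ for that face, in the Schwinger-boson form of Eq.~\eqref{eq:Eij_definition}. By the easy inclusion in Lemma~\ref{lemma1} it is killed by $P_3(\mathbf S_\triangle)$. Summing over triangles gives zero energy.

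Third, I establish the loop structure. I form the graph whose vertices are physical sites and whose edges are the virtual singlets. Each site carries exactly two virtual spins, one in each adjacent unit, and each spin is in exactly one singlet. Hence every vertex has degree two, and consecutive edges of a path alternate between the two units at each site. The graph is therefore a disjoint union of closed loops covering every site, i.e.\ a fully packed loop covering. On a finite periodic lattice, finiteness plus degree two forces closed cycles. In the infinite lattice, components are either cycles or bi-infinite paths, and I would state that both are allowed as ``loops'' in the infinite-volume sense. A loop can never backtrack within a unit, since the two singlets at a site lie in different units. A loop of length two would require two sites sharing two units, which does not happen in corner-sharing lattices. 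This gives minimal lengths four and six respectively, consistent with Fig.~\ref{fig:lattices}.

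The main obstacle is the first step's implicit claim that a general zero-energy superposition decomposes into such patterns, rather than merely being locally spanned by edge singlets triangle by triangle. Lemma~\ref{lemma1} is a spanning statement for each face separately, and different faces may use different spanning decompositions. I would handle this by restricting the lemma, as its wording does, to ``states obtained from the globally extendable virtual-singlet patterns''. I would then invoke Lemma~\ref{lemma:extendable_patterns} pattern by pattern, using linearity of the projection. I would leave the full completeness claim, that the kernel of $\mathcal H_0$ equals this span, to a separate argument, for instance a Tasaki-style dimension count on growing clusters.
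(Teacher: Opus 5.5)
Your proposal is correct and follows the paper's proof essentially step for step: Lemma~\ref{lemma:extendable_patterns} supplies one opposite-pair matching per unit, the easy inclusion of Lemma~\ref{lemma1} gives zero energy on every triangular face, and the degree-two singlet graph splits into closed loops. Your added remarks---that in the infinite lattice the components may also be bi-infinite paths, and that this lemma does not by itself establish completeness of the kernel---are accurate refinements that the paper's proof passes over.
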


\begin{proof}
By Lemma~\ref{lemma:extendable_patterns}, every locally allowed
pattern that extends through the full lattice is one of the three
opposite-pair valence-bond patterns.  Hence each corner-sharing unit
contributes two singlet bonds, connecting its four sites in pairs.
Since each physical site belongs to exactly two adjacent
corner-sharing units, it is touched by exactly two virtual singlet
bonds, one from each adjacent unit.  After projection onto the
physical spin-$1$ Hilbert space, every triangular face still lies in
the span of states with an intersite virtual singlet on at least one
of its edges.  By Lemma~\ref{lemma1}, every triangular projector
$P_3(\mathbf S_\triangle)$ annihilates the resulting state.  Since
$H_0$ is a sum of positive-semidefinite triangular projectors, all
such states have zero energy.

The singlet bonds form a graph in which every physical site has
boson degree two.  On a lattice with periodic boundary conditions, every
connected component of this graph is a closed loop.  The resulting
zero-energy states are therefore fully packed singlet-loop states.
This proves the lemma.
\end{proof}

The three lemmas together should be read as a spanning statement for the
global zero-energy space.  Lemma~III.1 shows that the kernel of each
triangular projector is spanned by states containing at least one
virtual singlet on an edge of that triangle.  Hence, any global
zero-energy state can be expanded in terms of virtual-singlet patterns
that locally satisfy the triangular constraints.  Lemma~III.2 then
identifies which such local patterns can be extended consistently
through the full corner-sharing lattice with the prescribed assignment
of one virtual spin-$\tfrac12$ to each adjacent unit.  Lemma~III.3 finally
shows that these globally extendable patterns are precisely fully packed
singlet-loop configurations.  Thus, the loop states form a spanning set,
generally overcomplete, for the zero-energy manifold obtained from the
virtual-spin construction.

It is worth mentioning that the preceding statement concerns states that can be extended through the full corner-sharing lattice.  It is not a claim that the isolated four-spin kernel of a single tetrahedron or crossed plaquette is only
two-dimensional.  Rather, the point is that the additional local
patterns, such as three-edge singlet-triangle arrangements or other
patterns that use both virtual spins of a site inside the same unit,
cannot be continued through the full lattice while satisfying all
triangular projector constraints.  The sector that survives this
extension is the valence-bond loop manifold described above.

\subsection{Spin--spin correlations in loop ground states}
\label{sec:gs_correlations}

We now compute spin correlations in a fixed loop state. This calculation is useful for two reasons. First, it gives a direct diagnostic of the magnetic correlations present in each basis state of the exact valence-bond manifold. Second, it shows that the correlations are controlled entirely by the geometry of the loop passing through the two spins. This is the spin-$1$ analogue of the familiar AKLT result:
along a loop the correlations are short-ranged and staggered, while spins on different loops are exactly uncorrelated in a fixed loop state.

Let $\mathcal L$ denote a fully packed singlet-loop configuration, and let $|\mathcal L\rangle$ be the corresponding valence-bond state. The state may be written in Schwinger-boson notation as
\begin{multline}
\label{eq:loop_state}
|\mathcal L\rangle
=
\prod_{\langle i,j\rangle\in\mathcal L}
\frac{1}{\sqrt{2}}
\left(
a_i^\dagger b_j^\dagger
-
b_i^\dagger a_j^\dagger
\right)|0\rangle
\\
=
\frac{1}{2^{N/2}}
\prod_{\langle i,j\rangle\in\mathcal L}
\left(
a_i^\dagger b_j^\dagger
-
b_i^\dagger a_j^\dagger
\right)|0\rangle .
\end{multline}
Here, $N$ is the number of physical spin-$1$ sites. Since every physical site is touched by two virtual singlet bonds in a fully packed loop configuration, the number of singlet bonds in $\mathcal L$ is also $N$. The product in Eq.~\eqref{eq:loop_state} runs over all virtual singlet bonds in the loop configuration, including diagonal bonds across crossed plaquettes on the checkerboard lattice when such bonds are present. On the checkerboard lattice, two
loops may cross geometrically at a crossed plaquette through diagonal bonds; this does not imply a branching of the loop graph, since every site still has degree two.

The definition of each singlet requires an ordering of the two sites on the bond. While definitions for the checkerboard lattice with diagonal bonds and pyrochlore lattice were made previously [see under Eqs.~\eqref{eq:dimer_states}], retaining the same definition or changing it does not affect the results below. All diagonal quantities computed below are therefore independent of any choice of singlet definition both in checkerboard and pyrochlore lattices.

We evaluate the correlations using normalized spin-$1$ coherent states,
\begin{equation}
    |\boldsymbol{\Omega}_i\rangle
    =
    \frac{1}{\sqrt{2}}
    \left(
    u_i a_i^\dagger + v_i b_i^\dagger
    \right)^2
    |0\rangle ,
\end{equation}
where
\begin{equation}
    u_i=\cos\left(\frac{\theta_i}{2}\right)
    e^{i\phi_i/2},
    \qquad
    v_i=\sin\left(\frac{\theta_i}{2}\right)
    e^{-i\phi_i/2},
\end{equation}
and
\begin{equation}
    \boldsymbol{\Omega}_i
    =
    (\sin\theta_i\cos\phi_i,
     \sin\theta_i\sin\phi_i,
     \cos\theta_i).
\end{equation}
For the coherent product state
$|\Omega\rangle=\prod_i|\boldsymbol{\Omega}_i\rangle$, the overlap
with a loop state is
\begin{multline}
\label{eq:overlap1}
\Psi_{\mathcal L}(\Omega)
\equiv
\langle \mathcal L|\Omega\rangle
\\
=
\left\langle 0\right|
\frac{1}{2^{N/2}}
\prod_{\langle i,j\rangle\in\mathcal L}
\left(
a_i b_j-b_i a_j
\right)
\prod_{k=1}^{N}
\frac{
\left(
u_k a_k^\dagger+v_k b_k^\dagger
\right)^2
}{\sqrt{2}}
\left|0\right\rangle
\\
=
\frac{1}{2^N}
\prod_{\langle i,j\rangle\in\mathcal L}
\left(
u_i v_j-v_i u_j
\right).
\end{multline}
The elementary identity
\begin{equation}
\label{eq:overlap_aux}
\left|
v_i u_j-u_i v_j
\right|^2
=
\frac{1}{2}
\left(
1-\boldsymbol{\Omega}_i\cdot\boldsymbol{\Omega}_j
\right)
\end{equation}
then gives
\begin{equation}
\label{eq:overlap2}
\left|
\Psi_{\mathcal L}(\Omega)
\right|
=
\frac{1}{2^{3N/2}}
\prod_{\langle i,j\rangle\in\mathcal L}
\sqrt{
1-\boldsymbol{\Omega}_i\cdot\boldsymbol{\Omega}_j
}\, .
\end{equation}

Following the coherent-state representation of spin operators,
the spin--spin correlation function in a fixed loop state can be
written as
\begin{multline}
\label{eq:corr_function_gen}
\frac{
\langle\mathcal L|
\mathbf S_i\cdot\mathbf S_j
|\mathcal L\rangle
}{
\langle\mathcal L|\mathcal L\rangle
}
\\
=
\frac{2(2-\delta_{ij})}{Z_{\mathcal L}}
\int
\prod_k d\boldsymbol{\Omega}_k\,
\left|
\Psi_{\mathcal L}(\Omega)
\right|^2
\,
\boldsymbol{\Omega}_i\cdot\boldsymbol{\Omega}_j ,
\end{multline}
where
\begin{equation}
    Z_{\mathcal L}
    =
    \int
    \prod_k d\boldsymbol{\Omega}_k\,
    \left|
    \Psi_{\mathcal L}(\Omega)
    \right|^2 .
\end{equation}
The measure is normalized so that
$\int d\boldsymbol{\Omega}=4\pi$.  The prefactor gives the correct
spin-$1$ coherent-state value: for $i\neq j$ it is $4$, while for
$i=j$ it gives $\langle \mathbf S_i^2\rangle=2$.

We now focus on $i\neq j$.  Because
$\left|\Psi_{\mathcal L}(\Omega)\right|^2$ factorizes into a product
over loops, all loops that do not contain either $i$ or $j$ cancel
between numerator and denominator.  Thus
\begin{multline}
\label{eq:corr_function_mod}
\frac{
\langle\mathcal L|
\mathbf S_i\cdot\mathbf S_j
|\mathcal L\rangle
}{
\langle\mathcal L|\mathcal L\rangle
}
\\
=
4\,
\frac{
\displaystyle
\int
\prod_k' d\boldsymbol{\Omega}_k\,
\boldsymbol{\Omega}_i\cdot\boldsymbol{\Omega}_j
\prod_{\langle m,n\rangle\in\mathcal L}'
\left(
1-\boldsymbol{\Omega}_m\cdot\boldsymbol{\Omega}_n
\right)
}{
\displaystyle
\int
\prod_k' d\boldsymbol{\Omega}_k\,
\prod_{\langle m,n\rangle\in\mathcal L}'
\left(
1-\boldsymbol{\Omega}_m\cdot\boldsymbol{\Omega}_n
\right)
}.
\end{multline}
The primes indicate that the products are restricted to the loop or loops containing $i$ and $j$.

If $i$ and $j$ belong to different loops, the numerator factorizes into a product of two independent loop integrals, one carrying a single vector insertion from the loop containing $i$ and the other carrying a single vector insertion from the loop containing $j$. Each such integral is odd under $\boldsymbol{\Omega}_k\to-\boldsymbol{\Omega}_k$ on that loop and therefore vanishes. Hence, in a fixed loop state, two spins are correlated only if they lie on the same loop. This is the first key physical consequence of the loop representation: spin correlations
are not controlled by Euclidean distance alone, but by loop connectivity.

Let the common loop be denoted by $\Lambda$, and let its length be $\ell$. The loop length is even. This follows from the fact that the centers of crossed plaquettes on the checkerboard lattice and tetrahedra on the pyrochlore lattice form bipartite lattices (square and diamond, respectively); the singlet loops alternate between the two sublattices of this dual lattice.

The denominator in Eq.~\eqref{eq:corr_function_mod} is then
\begin{multline}
\label{eq:corr_function_denom}
\int
\prod_{k\in\Lambda} d\boldsymbol{\Omega}_k
\prod_{\langle m,n\rangle\in\Lambda}
\left(
1-\boldsymbol{\Omega}_m\cdot\boldsymbol{\Omega}_n
\right)
\\
=
\int
\prod_{k\in\Lambda} d\boldsymbol{\Omega}_k
\left[
1+
\prod_{\langle m,n\rangle\in\Lambda}
\boldsymbol{\Omega}_m\cdot\boldsymbol{\Omega}_n
\right]
\\
=
(4\pi)^\ell
\left(
1+3^{1-\ell}
\right).
\end{multline}
In the second line, only two terms survive the angular integrations: the empty term and the term containing every bond of the loop. All other terms contain at least one site with an odd number of $\boldsymbol{\Omega}$ factors and vanish.  The last line follows from
\begin{equation}
\label{eq:omega_integrals}
    \int d\boldsymbol{\Omega}\,
    \Omega^\mu
    =
    0,
    \qquad
    \int d\boldsymbol{\Omega}\,
    \Omega^\mu\Omega^\nu
    =
    \frac{4\pi}{3}\delta_{\mu\nu}.
\end{equation}

Now suppose that the two sites $i$ and $j$ are separated by $s\neq0$ steps along the shorter arc of the loop. The numerator receives two nonzero contributions, corresponding to the two paths on the loop that connect $i$ and $j$. These paths have lengths $s$ and $\ell-s$.
Using Eq.~\eqref{eq:omega_integrals}, one obtains
\begin{multline}
\label{eq:corr_function_num}
\int
\prod_{k\in\Lambda} d\boldsymbol{\Omega}_k\,
\boldsymbol{\Omega}_i\cdot\boldsymbol{\Omega}_j
\prod_{\langle m,n\rangle\in\Lambda}
\left(
1-\boldsymbol{\Omega}_m\cdot\boldsymbol{\Omega}_n
\right)
\\
=
(4\pi)^\ell
(-1)^s
\left(
3^{-s}+3^{s-\ell}
\right).
\end{multline}
The factor $(-1)^s$ is the staggered sign accumulated along the loop. The two terms in parentheses come from the two arcs connecting the two sites. Since, $\ell$ is even, the two paths carry the same overall staggered sign.

Combining Eqs.~\eqref{eq:corr_function_denom} and \eqref{eq:corr_function_num}, the spin--spin correlation function in a fixed loop state is
\begin{equation}
\label{eq:corr_function_loop}
\frac{
\langle\mathcal L|
\mathbf S_i\cdot\mathbf S_j
|\mathcal L\rangle
}{
\langle\mathcal L|\mathcal L\rangle
}
=
4(-1)^s
\frac{
3^{-s}+3^{s-\ell}
}{
1+3^{1-\ell}
},
\qquad
i,j\in\Lambda .
\end{equation}
If $i$ and $j$ lie on different loops, the same correlation function vanishes exactly:
\begin{equation}
\label{eq:corr_function_different_loops}
\frac{
\langle\mathcal L|
\mathbf S_i\cdot\mathbf S_j
|\mathcal L\rangle
}{
\langle\mathcal L|\mathcal L\rangle
}
=
0,
\qquad
i,j \ \text{on different loops}.
\end{equation}

Equation~\eqref{eq:corr_function_loop} reduces, in the limit $\ell\to\infty$, to the usual spin-$1$ AKLT-chain result along an infinite loop,
\begin{equation}
\label{eq:aklt_limit}
\frac{
\langle\mathcal L|
\mathbf S_i\cdot\mathbf S_j
|\mathcal L\rangle
}{
\langle\mathcal L|\mathcal L\rangle
}
\longrightarrow
4(-1)^s 3^{-s}.
\end{equation}
Thus, the correlation length along a loop is
\begin{equation}
    \xi_{\rm loop}=\frac{1}{\ln 3}.
\end{equation}
The importance of this result is that each fixed loop state is magnetically short-ranged, and the only memory of the underlying geometry is the connectivity of the loop passing through the two spins. In particular, two sites that are close in real space but lie on different loops are exactly uncorrelated in a fixed loop state, whereas two sites far apart in real space can remain weakly correlated
if they belong to the same long loop.

The form of Eq.~\eqref{eq:corr_function_loop} is reminiscent of the loop rules that appear in the valence-bond basis, where overlaps and spin correlations are organized by transition graphs~\cite{BeachSandvik2006}. There is, however, an important distinction from quantum dimer models: in the latter, the constrained dimer Hilbert space is imposed as the starting point \cite{Rokhsar1988,
MoessnerRaman2011QDM}, whereas here the loop manifold arises as the zero-energy sector of a microscopic SU(2)-invariant spin-$1$ parent Hamiltonian.

The extension from a fixed loop state to a general state in the ground-state manifold is more subtle. A product state in the local orthonormal basis,
\begin{equation}
    |\mathrm{GS}\rangle
    =
    \bigotimes_{\alpha\in A}
    |A_\alpha\rangle
    \bigotimes_{\beta\in B}
    |B_\beta\rangle ,
\end{equation}
with $A\cup B$ equal to the set of crossed plaquettes or tetrahedra, can be rewritten as a superposition of loop states. In such a superposition, the loop configurations do not all appear with the same sign. For example, when the state $|B\rangle$ is written as a difference of horizontal and vertical valence-bond configurations, vertical dimers contribute relative minus signs. Consequently,
\[
    |\Psi_{\mathrm{GS}}(\Omega)|^2
\]
contains off-diagonal terms of the form
\begin{equation}
    \left[
    \Psi_{\mathcal L'}(\Omega)
    \right]^*
    \Psi_{\mathcal L}(\Omega),
    \qquad
    \mathcal L'\neq\mathcal L,
\end{equation}
which are not positive term by term. Moreover, these terms involve
the complex factors $u_i v_j-v_i u_j$ themselves, rather than only their absolute values. The replacement used in Eq.~\eqref{eq:overlap2} is therefore valid for diagonal loop contributions or fixed loops (when $\mathcal L' = \mathcal L$), but not by itself for interference terms between different loop states.

For this reason, Eq.~\eqref{eq:corr_function_loop} should be read as an exact result for fixed loop states. It strongly suggests short-ranged spin correlations throughout the loop manifold, but a separate analysis is needed for general superpositions, where off-diagonal loop overlaps enter. In Appendix~\ref{app:stat_mech_loops} we show how these off-diagonal
coherent-state factors can be handled.  This distinction is the reason
for the numerical analysis in Sec.~\ref{sec:MC}: the fixed-loop result
gives exact geometric correlations for each basis state, while the
equal-amplitude superposition probes how these correlations survive
after summing over off-diagonal loop overlaps with positive weights.

\section{Additional interactions and effective Hamiltonians}
\label{sec:eff_Hamiltians}

\subsection{Additional nearest-neighbour Heisenberg interactions}
\label{sec::nn_Heisenberg}

The results of Sec.~\ref{sec:gs_correlations} show that spin
correlations in a fixed loop state are controlled by loop
connectivity.  For a general superposition of loop states, however,
the correlations depend on the relative amplitudes and signs with
which the different loop configurations enter the wavefunction.  It is
therefore useful to ask which superpositions are selected once the
parent Hamiltonian is perturbed.  We address this question by
projecting symmetry-allowed additional interactions into the
zero-energy manifold of $\mathcal H_0$.

The simplest perturbations are additional Heisenberg exchanges between
pairs of physical spin-$1$ moments.  The parent Hamiltonian
$\mathcal H_0$ already contains bilinear antiferromagnetic exchange
terms as part of the polynomial projector, together with higher-order
spin interactions.  Adding explicit nearest-neighbor Heisenberg terms therefore changes
the relative weight of the bilinear part within the full microscopic
Hamiltonian.  We write the nearest-neighbour perturbation as
\begin{equation}
\mathcal{H}_1 = \sum_{\langle i,j\rangle} J_{ij} \mathbf{S}_i\cdot\mathbf{S}_j
\label{eq:nn_Heisenberg}    
\end{equation}
and consider the total Hamiltonian $\mathcal H_0+\mathcal H_1$.

Throughout this subsection, the additional exchanges are assumed to
respect the spatial symmetries of the lattice.  On the checkerboard
lattice this allows two distinct nearest-neighbour couplings: $J_s$
for the side bonds of a crossed square and $J_d$ for the diagonal
bonds.  On the pyrochlore lattice, all nearest-neighbour bonds of a
tetrahedron are symmetry equivalent, so the corresponding perturbation
has a single nearest-neighbour coupling.

Our aim is to obtain the leading effective Hamiltonian generated by
$\mathcal H_1$ within the zero-energy manifold of $\mathcal H_0$.  The
global loop states form a convenient spanning set, but they are not an
orthonormal global basis.  We therefore use them to extract local
matrix elements, which are then expressed in the local orthonormal
basis $|A\rangle,|B\rangle$ defined in Eq.~\eqref{eq:basis_states}.
This gives the leading local contribution to the projected
perturbation.

The normalized matrix element of a spin bilinear between two loop
states can be written in the coherent-state representation as
\begin{multline}\label{eq:matrix_element_gen}
\langle{\mathcal{L}'} \lvert \mathbf{S}_i\cdot\mathbf{S}_j\rvert {\mathcal{L}}\rangle =
  \frac{\langle\Psi^{\mathcal{L}'} \lvert \mathbf{S}_i\cdot\mathbf{S}_j\rvert\Psi^{\mathcal{L}}\rangle}{\sqrt{\langle\Psi^{\mathcal{L}'} |\Psi^{\mathcal{L}'}\rangle\langle\Psi^{\mathcal{L}} |\Psi^{\mathcal{L}}\rangle}}\\
  = \frac{4}{\sqrt{Z'Z}} \int \prod_{k} d \boldsymbol{\Omega}_k  \left( \Psi^{\mathcal{L}'}(\Omega) \right)^\ast \!\Psi^{\mathcal{L}}(\Omega) \,\boldsymbol{\Omega}_i\cdot\boldsymbol{\Omega}_j \, ,
\end{multline}
where $Z$ and $Z'$ are the coherent-state norms of the
corresponding loop states.  For diagonal matrix elements,
$\mathcal L'=\mathcal L$, this expression reduces to the fixed-loop
correlation function derived in Eq.~\eqref{eq:corr_function_loop}.
These diagonal elements already determine a substantial part of the
local projected Hamiltonian.

Let us focus on a crossed plaquette or tetrahedron containing sites
$1,2,3,4$, with the labelling convention shown in
Fig.~\ref{fig:two_chequers}.  The same labelling will be used in the
next subsection, where we consider interactions between neighbouring
crossed plaquettes or tetrahedra.

\begin{figure}[h!]
  \centering
  \includegraphics[width=3cm]{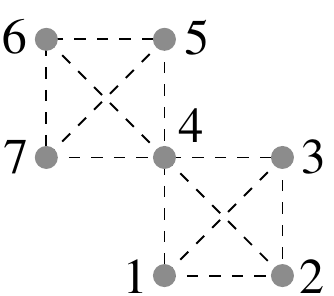}
  \caption{Two neighbouring crossed plaquettes with the site-labeling
convention used for the projected matrix elements.  Sites $1,2,3,4$
belong to the lower crossed plaquette, while sites $4,5,6,7$ belong to
the upper crossed plaquette.  The shared site is labelled $4$.}
\label{fig:two_chequers}
\end{figure}

In applying Eq.~\eqref{eq:corr_function_loop} to local matrix elements,
we keep the leading contribution from the shortest segment of the loop
connecting the two spins under consideration.  Equivalently, we neglect
the contribution from the complementary path around the loop, together
with the corresponding finite-loop correction in the denominator of
Eq.~\eqref{eq:corr_function_loop}.  The matrix elements below should
therefore be read as leading local-loop contributions to the projected
perturbation.  We indicate this by approximate equalities and comment
on the size of the omitted contribution where it is relevant.

We begin with the diagonal matrix element of
$\mathbf{S}_1\cdot\mathbf{S}_3$ in the diagonal valence-bond state
$\lvert\crossdimers\rangle$.  In this state, sites $1$ and $3$ are
nearest neighbours along the loop.  Thus Eq.~\eqref{eq:corr_function_loop}
with $s=1$ gives
\begin{equation}\label{eq:diag_element_13}
  \langle\crossdimers\left|\mathbf{S}_1\cdot\mathbf{S}_3\right|\crossdimers\rangle \approx - \frac{4}{3}
\end{equation}
within the leading local approximation.  The shortest loop containing
a diagonal dimer on the checkerboard lattice has length six, and the
shortest loop on the pyrochlore lattice also has length six.  Thus, the
omitted contribution from the complementary path along the loop is at
most of order $1/3^5$ in this case.  Similarly,
\begin{subequations}\label{eq:diag_element_1x}
\begin{equation}\label{eq:diag_element_12}
  \langle\hdimers\left|\mathbf{S}_1\cdot\mathbf{S}_2\right|\hdimers\rangle \approx - \frac{4}{3} \, ,
\end{equation}
\begin{equation}\label{eq:diag_element_14}
  \langle\vdimers\left|\mathbf{S}_1\cdot\mathbf{S}_4\right|\vdimers\rangle \approx - \frac{4}{3} \, ,
\end{equation}
\end{subequations}
where the corresponding correction on the checkerboard lattice can be
larger, of order $1/3^3$, because the shortest loop made only of side
bonds has length $\ell=4$.

With the same level of approximation, pairs of spins that are not
connected by the local segment of the loop give
\begin{subequations}\label{eq:offdiag_element_1x}
\begin{equation}\label{eq:offdiag_element_12}
  \langle\vdimers\left|\mathbf{S}_1\cdot\mathbf{S}_2\right|\vdimers\rangle
  = \langle\crossdimers\left|\mathbf{S}_1\cdot\mathbf{S}_2\right|\crossdimers\rangle \approx 0
\end{equation}
\begin{equation}\label{eq:offdiag_element_14}
 \langle\hdimers\left|\mathbf{S}_1\cdot\mathbf{S}_4\right|\hdimers\rangle
  = \langle\crossdimers\left|\mathbf{S}_1\cdot\mathbf{S}_4\right|\crossdimers\rangle \approx 0
\end{equation}
\begin{equation}\label{eq:offdiag_element_13}
 \langle\hdimers\left|\mathbf{S}_1\cdot\mathbf{S}_3\right|\hdimers\rangle
  = \langle\vdimers\left|\mathbf{S}_1\cdot\mathbf{S}_3\right|\vdimers\rangle \approx 0 \, .
\end{equation}
\end{subequations}
These vanishing matrix elements should also be understood in the
leading local-loop sense.  A nonzero contribution can arise only from
a longer path through the rest of the loop, which is not retained in
this local estimate.

We now translate these local loop-state matrix elements into the
orthonormal basis $|A\rangle,|B\rangle$.  The relevant relations are
\begin{equation}
\begin{aligned}
    |\crossdimers\rangle &= |B\rangle,\\
    |\hdimers\rangle &=
    \frac{\sqrt{3}|A\rangle+|B\rangle}{2},\\
    |\vdimers\rangle &=
    \frac{\sqrt{3}|A\rangle-|B\rangle}{2}.
\end{aligned}
\end{equation}
First consider $\mathbf{S}_1\cdot\mathbf{S}_3$.  The preceding matrix
elements imply
\begin{subequations}\label{eq:ort_elements_13}
\begin{equation}\label{eq:hor_element_13}
\frac{1}{4} \left[\sqrt{3}\langle A \vert + \langle B \vert \right]\mathbf{S}_1\cdot\mathbf{S}_3 \left[\sqrt{3}\vert A \rangle + \vert B \rangle \right]
   = 0
\end{equation}
\begin{equation}\label{eq:vert_element_13}
 \frac{1}{4} \left[\sqrt{3}\langle A \vert - \langle B \vert \right]\mathbf{S}_1\cdot\mathbf{S}_3 \left[\sqrt{3}\vert A \rangle - \vert B \rangle \right]
   =0
\end{equation}
\begin{equation}\label{eq:cross_element_13}
 \langle B\left|\mathbf{S}_1\cdot\mathbf{S}_3\right| B \rangle  = -\frac{4}{3} \, ,
\end{equation}
\end{subequations}
from which one obtains
\begin{subequations}\label{AB_elements_13}
\begin{equation}\label{eq:AA_element_13}
 \langle A\left|\mathbf{S}_1\cdot\mathbf{S}_3\right| A \rangle  = \frac{4}{9} \, ,
\end{equation}
\begin{equation}\label{eq:BB_element_13}
 \langle B\left|\mathbf{S}_1\cdot\mathbf{S}_3\right| B \rangle  = -\frac{4}{3} \, ,
\end{equation}
\begin{equation}\label{eq:AB_element_13}
\operatorname{Re}\left(\langle A\left|\mathbf{S}_1\cdot\mathbf{S}_3\right| B \rangle\right)  = 0 \, .
\end{equation}
\end{subequations}
The first two equations in Eq.~\eqref{eq:ort_elements_13} give the
same diagonal constraint with opposite signs for the real
off-diagonal matrix element.  Together with
Eq.~\eqref{eq:cross_element_13}, they fix the matrix elements in
Eq.~\eqref{AB_elements_13}.

The same procedure for the side bond $(12)$ gives
\begin{subequations}\label{eq:ort_elements_12}
\begin{equation}\label{eq:hor_element_12}
\frac{1}{4} \left[\sqrt{3}\langle A \vert + \langle B \vert \right]\mathbf{S}_1\cdot\mathbf{S}_2 \left[\sqrt{3}\vert A \rangle + \vert B \rangle \right]
   = -\frac{4}{3}
\end{equation}
\begin{equation}\label{eq:vert_element_12}
 \frac{1}{4} \left[\sqrt{3}\langle A \vert - \langle B \vert \right]\mathbf{S}_1\cdot\mathbf{S}_2 \left[\sqrt{3}\vert A \rangle - \vert B \rangle \right]
   =0
\end{equation}
\begin{equation}\label{eq:cross_element_12}
 \langle B\left|\mathbf{S}_1\cdot\mathbf{S}_2\right| B \rangle =0 \, ,
\end{equation}
\end{subequations}
and therefore
\begin{subequations}\label{AB_elements_12}
\begin{equation}\label{eq:AA_element_12}
 \langle A\left|\mathbf{S}_1\cdot\mathbf{S}_2\right| A \rangle  = -\frac{8}{9} \, ,
\end{equation}
\begin{equation}\label{eq:BB_element_12}
 \langle B\left|\mathbf{S}_1\cdot\mathbf{S}_2\right| B \rangle  = 0 \, ,
\end{equation}
\begin{equation}\label{eq:AB_element_12}
\operatorname{Re}\left(\langle A\left|\mathbf{S}_1\cdot\mathbf{S}_2\right| B \rangle\right)  = -\frac {4\sqrt{3}}{9} \, .
\end{equation}
\end{subequations}
For the other side bond attached to site $1$, one obtains
\begin{subequations}\label{AB_elements_14}
\begin{equation}\label{eq:AA_element_14}
 \langle A\left|\mathbf{S}_1\cdot\mathbf{S}_4\right| A \rangle  = -\frac{8}{9} \, ,
\end{equation}
\begin{equation}\label{eq:BB_element_14}
 \langle B\left|\mathbf{S}_1\cdot\mathbf{S}_4\right| B \rangle  = 0 \, ,
\end{equation}
\begin{equation}\label{eq:AB_element_14}
\operatorname{Re}\left(\langle A\left|\mathbf{S}_1\cdot\mathbf{S}_4\right| B \rangle\right)  = +\frac {4\sqrt{3}}{9} \, .
\end{equation}
\end{subequations}
The opposite signs of the real off-diagonal matrix elements in
Eqs.~\eqref{eq:AB_element_12} and \eqref{eq:AB_element_14} are
important: they are what allows these terms to cancel once the two
side bonds are assigned the same coupling by lattice symmetry.

The diagonal loop-state matrix elements determine the real diagonal
part of the local pseudospin Hamiltonian.  To complete the local
projection, one also needs the off-diagonal dimer matrix elements.
For two spins in the same crossed plaquette there are two independent
ones:
\begin{equation}\label{eq:offdiag_nn_elements_1}
  \langle\vdimers\left|\mathbf{S}_1\cdot\mathbf{S}_2\right|\hdimers\rangle
  \approx -\frac{2}{3}
\end{equation}
\begin{equation}\label{eq:offdiag_nn_elements_2}
  \langle\vdimers\left|\mathbf{S}_1\cdot\mathbf{S}_3\right|\hdimers\rangle
  \approx \frac{2}{3} \, ,
\end{equation}
with the remaining matrix elements are related to these by lattice
symmetries.  Appendix~\ref{app:matrix_elements_calculation} gives a
representative calculation of these off-diagonal dimer matrix
elements.

When these off-diagonal dimer elements are re-expressed in the
orthonormal $|A\rangle,|B\rangle$ basis, one finds
\begin{equation}\label{eq:offdiag_nn_elements_Im}
\operatorname{Im}\left(\langle A\left|\mathbf{S}_1\cdot\mathbf{S}_2\right| B \rangle\right)
=
\operatorname{Im}\left(\langle A\left|\mathbf{S}_1\cdot\mathbf{S}_3\right| B \rangle\right)
=
0 .
\end{equation}

We now combine the bond contributions subject to lattice symmetry.  On
the checkerboard lattice, the two side bonds
$J_{12}\mathbf{S}_1\cdot\mathbf{S}_2$ and
$J_{14}\mathbf{S}_1\cdot\mathbf{S}_4$ have the same coupling,
\[
    J_{12}=J_{14}=J_s .
\]
Consequently, the off-diagonal contributions in
Eqs.~\eqref{eq:AB_element_12} and \eqref{eq:AB_element_14} cancel in
the symmetry-preserving nearest-neighbour perturbation.  The diagonal
bond $J_{13}\mathbf{S}_1\cdot\mathbf{S}_3$ may have a different
coupling on the checkerboard lattice; we write
\[
    J_{13}=J_d ,
\]
where $s$ and $d$ denote side and diagonal bonds, respectively.

The effective Hamiltonian in the local orthonormal basis
$|A\rangle,|B\rangle$ can therefore be written in terms of a pseudospin
$\tau^z$, with
\[
    \tau^z |A\rangle = + |A \rangle,
    \qquad
    \tau^z |B\rangle = - |B\rangle .
\]
On the checkerboard lattice, each crossed plaquette contains four side
bonds and two diagonal bonds, and these bonds are not shared with
neighbouring crossed plaquettes.  Thus the full nearest-neighbour
contribution is obtained by summing all six bond contributions within
each crossed plaquette.  Using Eqs.~\eqref{eq:AB_element_12} and
\eqref{eq:AB_element_14}, the four side bonds give
$-(16/9)J_s\,\tau^z_\alpha$, while Eq.~\eqref{eq:AB_element_13}
shows that the two diagonal bonds give
$(16/9)J_d\,\tau^z_\alpha$, up to an additive constant.  Therefore,
discarding this additive constant, the leading nearest-neighbour
contribution on the checkerboard lattice is
\begin{equation}
\label{eq:effective_ham_nn}
  H_\text{eff}
  =
  -\frac{16}{9}\left(J_s-J_d\right)
  \sum_{\alpha} \tau_\alpha^z .
\end{equation}
Here, the sum runs over crossed plaquettes $\alpha$.

This counting also makes clear why no transverse one-body term appears
in Eq.~\eqref{eq:effective_ham_nn}.  The diagonal bonds have no real
off-diagonal matrix element in the $|A\rangle,|B\rangle$ basis, as
shown in Eq.~\eqref{eq:AB_element_13}.  The side bonds do have real
off-diagonal matrix elements, but the contributions from symmetry-related
horizontal and vertical side bonds cancel pairwise, as follows from
Eqs.~\eqref{eq:AB_element_12} and \eqref{eq:AB_element_14}.  Hence, no
$\tau^x$ term remains in the symmetry-preserving nearest-neighbour
perturbation. The absence of a $\tau^y$ term follows separately from
Eq.~\eqref{eq:offdiag_nn_elements_Im}: the relevant off-diagonal matrix
elements are real. This is consistent with time-reversal symmetry of
the perturbation. Since, the local valence-bond basis is chosen to be
real, a single $\tau^y$ term would require an imaginary off-diagonal
matrix element in this basis.

Equation~\eqref{eq:effective_ham_nn} has a simple physical meaning:
the imbalance between side and diagonal Heisenberg exchanges acts as a
local pseudospin field.  If $J_s>J_d$, the field favours
$\tau^z=+1$, corresponding to the $|A\rangle$ state on every crossed
plaquette, namely the equal-amplitude superposition of horizontal and
vertical singlets.  If $J_d>J_s$, the field favours $\tau^z=-1$,
corresponding to the diagonal state
$|B\rangle=|\crossdimers\rangle$.

The pyrochlore case is qualitatively different.  There all
nearest-neighbour bonds of a tetrahedron are symmetry equivalent, so
the analogue of $J_s-J_d$ vanishes.  Hence a symmetry-preserving
nearest-neighbour Heisenberg perturbation does not split the local
$|A\rangle,|B\rangle$ doublet at this order.  This is the local
symmetry reason behind the absence of a nearest-neighbour pseudospin
field on the pyrochlore lattice: a term linear in $\tau^z$ would
select a particular valence-bond orientation and thereby break the
tetrahedral rotational symmetry.  To lift the degeneracy in a
symmetry-preserving way on the pyrochlore lattice, one must therefore
consider less local or higher-order perturbations, such as the
next-nearest-neighbour interactions discussed below.

\subsection{Next-nearest-neighbour additional Heisenberg interactions}
\label{sec:nnn_effective_H}

The nearest-neighbour perturbation discussed in
Sec.~\ref{sec::nn_Heisenberg} has an important consequence: on the
pyrochlore lattice it does not lift the local degeneracy of the
$|A\rangle,|B\rangle$ doublet, because all nearest-neighbour bonds of
a tetrahedron are symmetry equivalent.  The first nontrivial
symmetry-allowed degeneracy lifting on the pyrochlore lattice must
therefore come from less local perturbations.  The simplest such
perturbations are additional next-nearest-neighbour Heisenberg
interactions between physical spin-$1$ moments.

We first perform the calculation for the checkerboard lattice, where
the lower point-group symmetry allows several independent
next-nearest-neighbour couplings.  The pyrochlore result then follows
by imposing the stronger tetrahedral symmetry constraints.  The
calculation follows the same logic as in the nearest-neighbour case:
we compute the leading local matrix elements in the dimer basis, then
convert them to the orthonormal pseudospin basis
$|A\rangle,|B\rangle$.

For next-nearest-neighbour spins we again use
Eqs.~\eqref{eq:corr_function_loop} and
\eqref{eq:matrix_element_gen}, now with separation $s=2$ along the
local loop segment.  As before, we retain only the shortest segment of
the loop connecting the two spins under consideration.  In
Eq.~\eqref{eq:corr_function_loop}, this means keeping the $3^{-s}$
term in the numerator and neglecting the complementary contribution
$3^{s-\ell}$, as well as the finite-loop correction $3^{1-\ell}$ in
the denominator.  The denominator approximation is well controlled,
since $\ell\geq 4$ on the checkerboard lattice and $\ell\geq 6$ on the
pyrochlore lattice.  The numerator requires a little more care.

On the checkerboard lattice, the pair of spins $3$ and $5$ in
Fig.~\ref{fig:two_chequers} can be connected by a loop of length
$\ell=4$ around an empty plaquette.  In that case the two paths between
the spins have equal length, so the complementary contribution is equal
in magnitude to the shortest-segment contribution that we keep.  This
does not signal a failure of the construction.  Rather, it reflects the
fact that the projected Hamiltonian is being organized as a sum of
local terms associated with neighbouring crossed-plaquette pairs.  The
same pair of physical spins, which are diagonally opposite across an empty plaquette, belongs to two distinct neighbouring
crossed-plaquette pairs, and therefore appears in two local terms of
the effective Hamiltonian.  Keeping only one segment in each local term
is the correct local decomposition of this contribution; including both
segments in a single matrix element would double-count the same
physical interaction when the effective Hamiltonian is assembled.
All other spin pairs considered below contribute to a single local term
in the effective Hamiltonian; their shortest loops have length at least
$8$ on the checkerboard lattice and $6$ on the pyrochlore lattice.

\begin{table*}[t] 
    \caption{Leading local matrix elements in the dimer representation for
next-nearest-neighbour spin interactions on the checkerboard and
pyrochlore lattices.  The column headings specify the physical spin
bilinear.  In a state such as
$|\vdimers,\hdimers\rangle$, the first dimer configuration refers to
the lower crossed plaquette, or lower tetrahedron, in
Fig.~\ref{fig:two_chequers} or Fig.~\ref{fig:two_tetrahedra}(a), while
the second refers to the upper one.  The same ordering convention is
used for bra states.  The entries are the leading shortest-loop
contributions obtained from the loop matrix-element rules of
Appendix~\ref{app:matrix_elements_calculation}; matrix elements not
listed vanish within this local approximation.}
\label{table:matrix_elements}
    \resizebox{\textwidth}{!}{%
    \begin{tabular}{|c||*{9}{c|}}
        \hline
        Matrix elements & $\mathbf{S}_1 \cdot \mathbf{S}_5$ & $\mathbf{S}_1 \cdot \mathbf{S}_6$ & $\mathbf{S}_1 \cdot \mathbf{S}_7$ & $\mathbf{S}_2 \cdot \mathbf{S}_5$ & $\mathbf{S}_2 \cdot \mathbf{S}_6$ & $\mathbf{S}_2 \cdot \mathbf{S}_7$ & $\mathbf{S}_3 \cdot \mathbf{S}_5$ & $\mathbf{S}_3 \cdot \mathbf{S}_6$ & $\mathbf{S}_3 \cdot \mathbf{S}_7$\tabularnewline
        \hline
        $\langle \vdimers, \vdimers |\dots| \vdimers, \vdimers \rangle$   & ${4}/{9}$    & $0$ & $0$ & $0$ & $0$ & $0$ & $0$ & $0$ & $0$\tabularnewline
        $\langle \vdimers, \hdimers |\dots| \vdimers, \hdimers \rangle$ & $0$ & $0$ & ${4}/{9}$ & $0$ & $0$ & $0$ & $0$ & $0$ & $0$\tabularnewline
        $\langle \hdimers, \vdimers |\dots| \hdimers, \vdimers \rangle$ & $0$ & $0$ & $0$ & $0$ & $0$ & $0$ & ${4}/{9}$ & $0$ & $0$\tabularnewline
        $\langle \hdimers, \hdimers |\dots| \hdimers, \hdimers \rangle$ & $0$ & $0$ & $0$ & $0$ & $0$ & $0$ & $0$ & $0$ & ${4}/{9}$\tabularnewline
        \hline
        $\langle \vdimers, \vdimers |\dots| \vdimers, \hdimers \rangle$ & ${2}/{9}$ & $-{2}/{9}$ & ${2}/{9}$ & $0$ & $0$ & $0$ & $0$ & $0$ & $0$\tabularnewline
        $\langle \vdimers, \vdimers |\dots| \hdimers, \vdimers \rangle$ & ${2}/{9}$ & $0$ & $0$ & $-{2}/{9}$ & $0$ & $0$ & ${2}/{9}$ & $0$ & $0$\tabularnewline
        $\langle \vdimers, \hdimers |\dots| \hdimers, \hdimers \rangle$ & $0$ & $0$ & ${2}/{9}$ & $0$ & $0$ & $-{2}/{9}$ & $0$ & $0$ & ${2}/{9}$\tabularnewline
        $\langle \hdimers, \vdimers |\dots| \hdimers, \hdimers \rangle$ & $0$ & $0$ & $0$ & $0$ & $0$ & $0$ & ${2}/{9}$ & $-{2}/{9}$ & ${2}/{9}$\tabularnewline
        \hline
        $\langle \vdimers, \vdimers |\dots| \hdimers, \hdimers \rangle$ & ${1}/{9}$ & $-{1}/{9}$ & ${1}/{9}$ & $-{1}/{9}$ & ${1}/{9}$ & $-{1}/{9}$ & ${1}/{9}$ & $-{1}/{9}$ & ${1}/{9}$\tabularnewline
        $\langle \vdimers, \hdimers |\dots| \hdimers, \vdimers \rangle$ & ${1}/{9}$ & $-{1}/{9}$ & ${1}/{9}$ & $-{1}/{9}$ & ${1}/{9}$ & $-{1}/{9}$ & ${1}/{9}$ & $-{1}/{9}$ & ${1}/{9}$\tabularnewline
        \hline
    \end{tabular}%
    }
\end{table*}

The dimer-basis matrix elements are listed in
Table~\ref{table:matrix_elements}.  In the leading local approximation
described above, all matrix elements not shown in the table either vanish or are related by complex conjugation.
The representative calculation of these entries is given in
Appendix~\ref{app:matrix_elements_calculation}.  The conversion from
the dimer basis to the local orthonormal basis $|A\rangle,|B\rangle$
is given in Appendix~\ref{app:matrix_elements_basis_conversion}.

For the checkerboard lattice, the symmetries of a pair of neighbouring
crossed plaquettes allow four independent next-nearest-neighbour
couplings:
\begin{subequations}\label{eq:checkerboard_coupling_def}
\begin{align}
    J^{(1)} &= J_{17} = J_{35} \, ,\\
    J^{(2)} &= J_{15} = J_{37} \, ,\\
    J^{(3)} &= J_{16} = J_{36} = J_{25} = J_{27} \, ,\\
    J^{(4)} &= J_{26} \, .
\end{align}
\end{subequations}
These four classes correspond to symmetry-inequivalent ways of
connecting the two crossed plaquettes by a next-nearest-neighbour spin
pair.

We now express the result in the two-pseudospin basis.  With the already adopted convention $|A\rangle\equiv |\!\uparrow\rangle$, $|B\rangle\equiv |\!\downarrow\rangle$ for one crossed
plaquette or tetrahedron, for a pair of neighbouring plaquettes/tetrahedra $(\alpha,\beta)$ we use the ordered basis
\[
    |\!\uparrow \uparrow\rangle,\quad
    |\!\downarrow \uparrow\rangle,\quad
    |\!\uparrow \downarrow\rangle,\quad
    |\!\downarrow \downarrow\rangle .
\]
Here, the first pseudospin belongs to the lower crossed plaquette or
tetrahedron $\alpha$, and the second to the neighbouring upper crossed
plaquette or tetrahedron $\beta$ [of Figs.~\ref{fig:two_chequers} and~\ref{fig:two_tetrahedra}].  This ordering convention is used
in constructing the matrices below.  Combining the matrices derived in
Appendix~\ref{app:matrix_elements_basis_conversion}, we obtain
\begin{multline}
    H_{\text{eff}} = J^{(1)} H^{(1)} + J^{(2)} H^{(2)} + J^{(3)} H^{(3)} + J^{(4)} H^{(4)}\\
    \; = J^{(1)} \begin{pmatrix}
                    \frac{32}{81} & 0 & 0 & -\frac{8}{27}\\
                    0 & 0 & -\frac{8}{27} & 0\\
                    0 & -\frac{8}{27} & 0 & 0\\
                    -\frac{8}{27} & 0 & 0 & 0\\
                \end{pmatrix} \qquad\qquad\qquad\qquad\\
          + J^{(2)} \begin{pmatrix}
                    \frac{32}{81} & 0 & 0 & \frac{8}{27}\\
                    0 & 0 & \frac{8}{27} & 0\\
                    0 & \frac{8}{27} & 0 & 0\\
                    \frac{8}{27} & 0 & 0 & 0\\
                \end{pmatrix}
         + J^{(3)} \begin{pmatrix}
                    -\frac{32}{81} & 0 & 0 & 0\\
                    0 & \frac{16}{27} & 0 & 0\\
                    0 & 0 & \frac{16}{27} & 0\\
                    0 & 0 & 0 & 0\\
                \end{pmatrix}\\
         + J^{(4)} \begin{pmatrix}
                    \frac{4}{81} & 0 & 0 & 0\\
                    0 & -\frac{4}{27} & 0 & 0\\
                    0 & 0 & -\frac{4}{27} & 0\\
                    0 & 0 & 0 & \frac{4}{9}\\
                \end{pmatrix} \, .
\end{multline}

Several features are already visible at this stage.  All matrices are
real.  Moreover, they contain only diagonal and anti-diagonal entries.
The diagonal entries will become constants, local fields, and
$\tau^z_\alpha\tau^z_\beta$ interactions, while the anti-diagonal
entries flip both pseudospins simultaneously and therefore generate
$\tau^x_\alpha\tau^x_\beta$ terms.  The absence of other structures is
a useful diagnostic: at this order the next-nearest-neighbour
Heisenberg perturbation generates a very restricted pseudospin
Hamiltonian.

Using
\begin{align*}
\begin{aligned}
\tau^z \otimes \mathbb{I} &=
\begin{pmatrix}
1 & 0 & 0 & 0\\
0 & -1 & 0 & 0\\
0 & 0 & 1 & 0\\
0 & 0 & 0 & -1
\end{pmatrix},
\;\;
\mathbb{I} \otimes \tau^z =
\begin{pmatrix}
1 & 0 & 0 & 0\\
0 & 1 & 0 & 0\\
0 & 0 & -1 & 0\\
0 & 0 & 0 & -1
\end{pmatrix},
\\[1em]
\tau^z \otimes \tau^z &=
\begin{pmatrix}
1 & 0 & 0 & 0\\
0 & -1 & 0 & 0\\
0 & 0 & -1 & 0\\
0 & 0 & 0 & 1
\end{pmatrix},
\;\;
\tau^x \otimes \tau^x =
\begin{pmatrix}
0 & 0 & 0 & 1\\
0 & 0 & 1 & 0\\
0 & 1 & 0 & 0\\
1 & 0 & 0 & 0
\end{pmatrix}
\end{aligned}
\end{align*}
in the $|\!\uparrow \uparrow\rangle$, $|\!\downarrow \uparrow\rangle$, $|\!\uparrow \downarrow\rangle$, $|\!\downarrow \downarrow\rangle$
basis, we obtain the effective Hamiltonian as 

\begin{align}
\begin{split}
    H_{\text{eff}} =& J^{(1)} \left( \frac{8}{81} \tau^z_{\alpha} \otimes \tau^z_{\beta} + \frac{8}{81} \tau^z_{\alpha} \otimes \mathbb{I}_{\beta} + \frac{8}{81} \mathbb{I}_{\alpha} \otimes \tau^z_{\beta} \right. \\
    & \left. \qquad \qquad - \frac{8}{27} \tau^x_{\alpha} \otimes \tau^x_{\beta} + \frac{8}{81} \mathbb{I}_{\alpha} \otimes \mathbb{I}_{\beta} \right)\\
    & + J^{(2)} \left( \frac{8}{81} \tau^z_{\alpha} \otimes \tau^z_{\beta} + \frac{8}{81} \tau^z_{\alpha} \otimes \mathbb{I}_{\beta} \right. \\
    & \left. \quad + \frac{8}{81} \mathbb{I}_{\alpha} \otimes \tau^z_{\beta} + \frac{8}{27} \tau^x_{\alpha} \otimes \tau^x_{\beta} + \frac{8}{81} \mathbb{I}_{\alpha} \otimes \mathbb{I}_{\beta} \right)\\
    & + J^{(3)} \left( -\frac{32}{81} \tau^z_{\alpha} \otimes \tau^z_{\beta} - \frac{8}{81} \tau^z_{\alpha} \otimes \mathbb{I}_{\beta} \right.\\
    & \left. \qquad \qquad - \frac{8}{81} \mathbb{I}_{\alpha} \otimes \tau^z_{\beta} + \frac{16}{81} \mathbb{I}_{\alpha} \otimes \mathbb{I}_{\beta} \right)\\
    & + J^{(4)} \left( \frac{16}{81} \tau^z_{\alpha} \otimes \tau^z_{\beta} - \frac{8}{81} \tau^z_{\alpha} \otimes \mathbb{I}_{\beta} \right. \\
    & \left. \qquad \qquad - \frac{8}{81} \mathbb{I}_{\alpha} \otimes \tau^z_{\beta} + \frac{4}{81} \mathbb{I}_{\alpha} \otimes \mathbb{I}_{\beta} \right) \, .
\end{split}
\end{align}
The anti-diagonal part of these matrices is proportional to
$\tau^x\otimes\tau^x$.  By contrast,
\[
        \tau^y \otimes \tau^y = \begin{pmatrix}
                    0 & 0 & 0 & -1\\
                    0 & 0 & 1 & 0\\
                    0 & 1 & 0 & 0\\
                    -1 & 0 & 0 & 0\\
                \end{pmatrix} \, 
\] 
does not enter the effective Hamiltonian. This absence is not merely a consequence of Hermiticity. 
Hermiticity would allow both $\tau^x_\alpha\tau^x_\beta$ and
$\tau^y_\alpha\tau^y_\beta$. What matters here is the relative sign of
the two anti-diagonal channels. The projected Heisenberg matrix
elements flip the two local valence-bond pseudospins with the sign
structure of $\tau^x_\alpha\tau^x_\beta$, not that of
$\tau^y_\alpha\tau^y_\beta$. Equivalently, in the real
valence-bond basis used here, the perturbation produces real
pseudospin-flip amplitudes whose anti-diagonal signs match
$\tau^x_\alpha\tau^x_\beta$.

\begin{figure}[t!]
    \centering
    \includegraphics[width=\linewidth,
    trim=0 0 0 0, clip
    ]{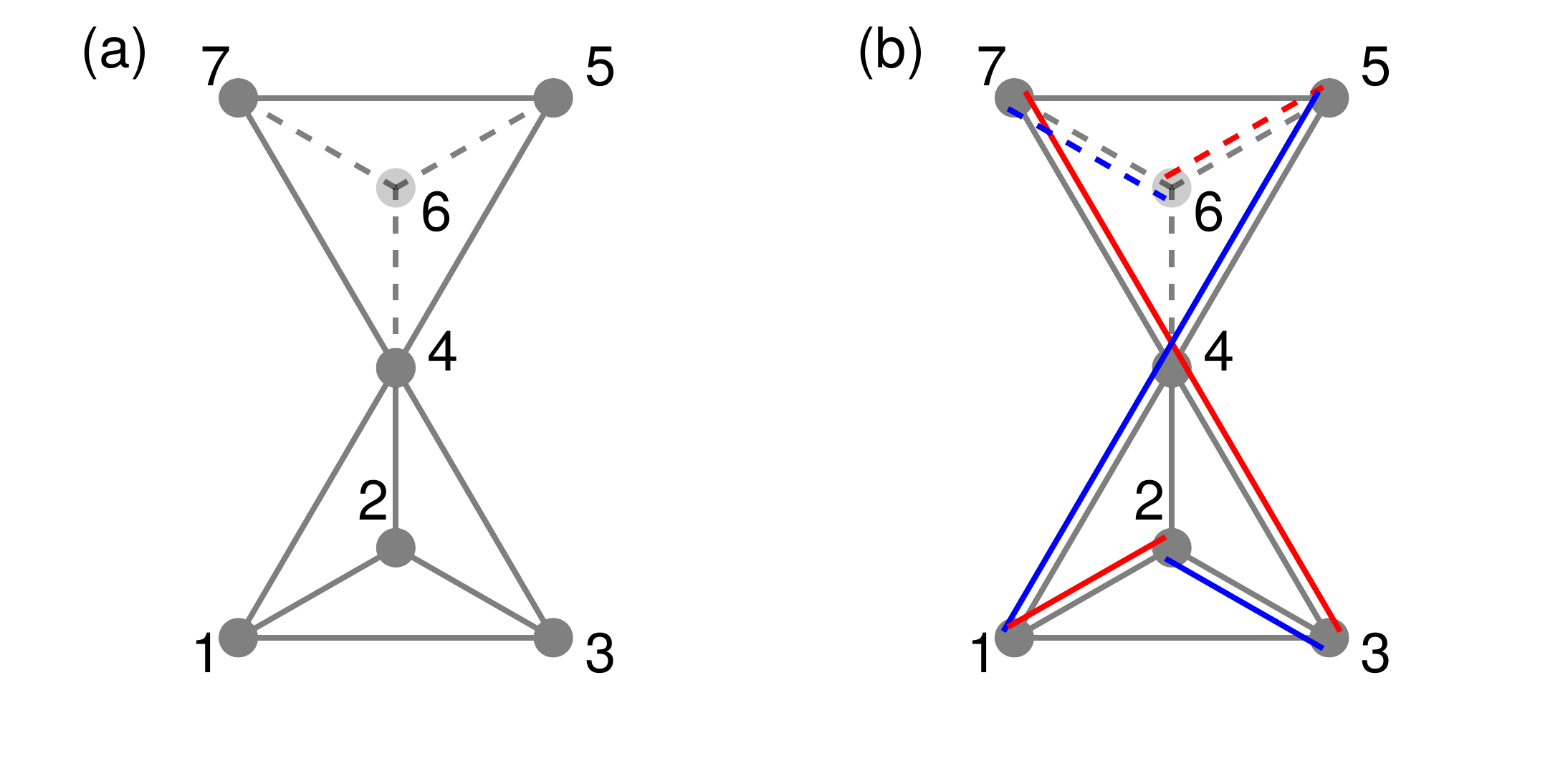}
    \caption{Pair of corner-sharing tetrahedra used for the pyrochlore
projection.  (a) Site labels are chosen to match the convention for the
crossed-plaquette pair in Fig.~\ref{fig:two_chequers}: the site
opposite $1$ is labelled $5$, and similarly for the other sites.
(b) A representative bra--ket singlet configuration on the same pair of
tetrahedra.}
\label{fig:two_tetrahedra}
\end{figure}

The expression above is the projected Hamiltonian for one neighbouring
pair of crossed plaquettes. For such a pair, the one-body part appears
as a field on both pseudospins in the pair. Thus, the local pair
contribution can be written, up to an additive constant, as
\begin{equation*}
\begin{aligned}
    h^\text{cb}_{\alpha\beta}
    &=
    \frac{8}{81}
    \Big[
    3\left(J^{(2)}-J^{(1)}\right)
    \tau^x_\alpha\tau^x_\beta
    \\
    &\hspace{1.4cm}
    +
    \left(J^{(1)}+J^{(2)}-4J^{(3)}+2J^{(4)}\right)
    \tau^z_\alpha\tau^z_\beta
    \Big]
    \\
    &\quad
    +
    \frac{8}{81}
    \left(J^{(1)}+J^{(2)}-J^{(3)}-J^{(4)}\right)
    \left(\tau^z_\alpha+\tau^z_\beta\right).
\end{aligned}
\end{equation*}
The full checkerboard effective Hamiltonian is obtained by summing this
local contribution over all nearest-neighbour pairs of crossed
plaquettes. Since, the crossed-plaquette centers form a square lattice,
each crossed plaquette has coordination number
\(z_\text{cb}=4\). Therefore,
\[
    \sum_{\langle\alpha,\beta\rangle}
    \left(\tau^z_\alpha+\tau^z_\beta\right)
    =
    z_\text{cb}\sum_\alpha \tau^z_\alpha .
\]
With this convention, the full checkerboard Hamiltonian is
\begin{multline} \label{eq:eff_Hamiltonian_checkerboard}
    H^\text{cb}_{\text{eff}}
    =
    \frac{8}{81}
    \sum_{\langle \alpha, \beta \rangle}
    \left[
    3 \left( J^{(2)} -J^{(1)} \right)
    \tau^x_{\alpha} \tau^x_{\beta}
    \right.
    \\
    \left.
    +
    \left( J^{(1)} +  J^{(2)} -4 J^{(3)} + 2 J^{(4)} \right)
    \tau^z_{\alpha} \tau^z_{\beta}
    \right]
    \\
    +
    \frac{32}{81}
    \left( J^{(1)} +  J^{(2)} - J^{(3)} -  J^{(4)} \right)
    \sum_{\alpha} \tau^z_{\alpha}
    +{\rm const.}
\end{multline}
Equation~\eqref{eq:eff_Hamiltonian_checkerboard} acts on pseudospins
$\tau_\alpha$ residing on the square lattice formed by the centers of
the crossed plaquettes.  It contains an Ising-like
$\tau^x_\alpha\tau^x_\beta$ term, a second Ising-like
$\tau^z_\alpha\tau^z_\beta$ term, and a field along the $\tau^z$
direction.  Thus the next-nearest-neighbour perturbation on the
checkerboard lattice produces an anisotropic two-component pseudospin
model rather than a simple resonance term.  In general this is only a
correction to the leading local field already generated by the
nearest-neighbour perturbation in Eq.~\eqref{eq:effective_ham_nn}, so
we do not attempt to map out its full phase diagram here.

The situation is much more constrained, and more revealing, on the
pyrochlore lattice.  There the nearest-neighbour perturbation produces
no local pseudospin field, and tetrahedral symmetry further reduces
the independent next-nearest-neighbour couplings by imposing
$J^{(1)}=J^{(3)}$ and $J^{(2)}=J^{(4)}$.  For the pair of tetrahedra
shown in Fig.~\ref{fig:two_tetrahedra}, the two allowed coupling
classes are
\begin{subequations} \label{eq:pyrochlore_coupling_def}
\begin{align}
    J^{(1)} &= J_{16} = J_{36} = J_{25} = J_{27} = J_{17} = J_{35} \, ,\\
    J^{(2)} &= J_{15} = J_{26} = J_{37} \, .
\end{align}
\end{subequations}
Substituting these symmetry constraints into the two-pseudospin
Hamiltonian produces a substantial simplification.  Three things happen
simultaneously.  First, the one-body pseudospin field cancels.  Second,
the anisotropy between the $\tau^x_\alpha\tau^x_\beta$ and
$\tau^z_\alpha\tau^z_\beta$ channels disappears.  Third, all
orientation-dependent information collapses into the two coupling
classes $J^{(1)}$ and $J^{(2)}$. Thus, a rather complicated set of projected two-spin matrix elements collapses, by tetrahedral symmetry, to a single U(1)-symmetric
nearest-neighbour pseudospin interaction on the diamond lattice:
\begin{equation} \label{eq:eff_Hamiltonian_pyrochlore}
H_{\text{eff}} = \frac{8}{27} \left( J^{(2)} -J^{(1)} \right) \sum_{\langle \alpha, \beta \rangle}  \left(\tau_{\alpha}^x \tau_{\beta}^x + \tau_{\alpha}^z \tau_{\beta}^z\right) + \text{const}.
\end{equation}
The simplification in Eq.~\eqref{eq:eff_Hamiltonian_pyrochlore} is a
key consequence of the pyrochlore symmetry.  In contrast with the checkerboard
case, a local pseudospin field is not generated on the pyrochlore lattice by the next-nearest-neighour Heisenberg perturbations, just as it was not generated by the nearest-neigbour terms, see Section~\ref{sec::nn_Heisenberg}.   The leading degeneracy-lifting term is instead an
interaction between neighbouring tetrahedra.  Thus the two-dimensional
singlet space on each tetrahedron acts as an effective spin-$\tfrac12$
degree of freedom living on the diamond lattice of tetrahedron centers.  The leading symmetry-allowed next-nearest-neighbour
Heisenberg perturbation does not merely select a static valence-bond
orientation; it generates a genuine collective pseudospin dynamics.
Moreover, the dynamics has an emergent continuous symmetry: the
Hamiltonian is invariant under rotations in the
$\tau^x$--$\tau^z$ plane, generated by $\tau^y$.  Equivalently, after
a relabelling of pseudospin axes, it is the standard quantum
spin-$\tfrac12$ XY model on the diamond lattice.  The coupling may be
ferromagnetic or antiferromagnetic depending on the sign of
$J^{(2)}-J^{(1)}$.

The absence of an effective Zeeman field on the pyrochlore lattice is
as important as the form of the interaction itself.  It means that no
particular local valence-bond orientation is selected by the perturbations considered in this and the previous sections. This is a natural consequence of the fact that these perturbations do not break any pyrochlore lattice symmetries and hence cannot result in an effective Hamiltonian that does. 
Instead, the next-nearest-neighbour Heizenberg perturbation promotes coherent mixing of the two local
singlet states on neighbouring tetrahedra.  This is the key difference
from the checkerboard case.

\subsubsection{Effective interactions for different orientations}
\label{sec:nnn_orientations}

The derivation above used the symmetries of the checkerboard and
pyrochlore lattices to restrict the number of independent microscopic
couplings.  
Before interpreting Eqs.~\eqref{eq:eff_Hamiltonian_checkerboard} and~\eqref{eq:eff_Hamiltonian_pyrochlore}, we must
check one further point.  The local pseudospin basis
$|A\rangle,|B\rangle$ is a convenient basis, but it is not invariant
under rotations of a crossed plaquette or tetrahedron.  A spatial
rotation acts nontrivially inside the two-dimensional local
valence-bond space.  Thus, even if the microscopic Heisenberg
interaction is fully symmetric, its projection could in principle
generate bond-dependent directional pseudospin couplings.  In other words, the
projection might have produced a compass-like or spin-orbit-like
effective Hamiltonian rather than the simple form obtained above, as seen in related
pyrochlore quantum dimer models~\cite{Moessner2006a}.

It is therefore nontrivial that the present construction does not lead
to an orientation-dependent pseudospin Hamiltonian at this order.  We
show this explicitly below.  In deriving
Eqs.~\eqref{eq:eff_Hamiltonian_checkerboard} and
\eqref{eq:eff_Hamiltonian_pyrochlore}, we considered the pairs shown
in Figs.~\ref{fig:two_chequers} and \ref{fig:two_tetrahedra}.  Other
pairs are obtained by a $C_4$ rotation for the checkerboard lattice
and a $C_3$ rotation for the pyrochlore lattice.  Since the local
pseudospin basis itself rotates nontrivially, the matrix elements must
be checked rather than assumed to be identical.

\begin{figure*}[t!]
    \centering
      \includegraphics[width=0.9\textwidth
      ]{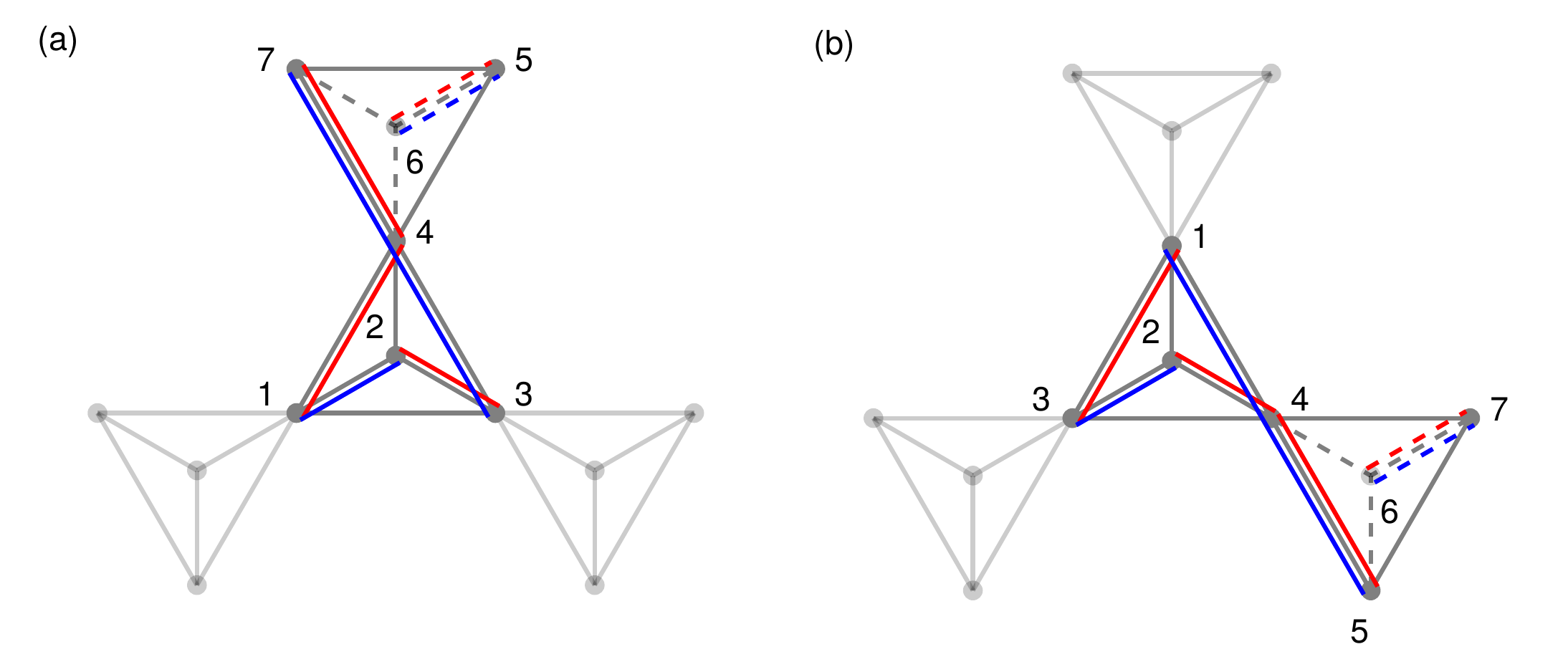}
      \caption{Effect of a $C_3$ rotation on a representative pyrochlore
matrix element.  (a) Bra and ket singlet configurations
$\langle\hdimers,\hdimers|$ and
$|\vdimers,\hdimers\rangle$ on the tetrahedron pair of
Fig.~\ref{fig:two_tetrahedra}.  (b) The corresponding configurations
after a clockwise $120^\circ$ rotation.  The rotation permutes the site
labels and hence the individual matrix elements, but preserves the
symmetry-equivalent coupling classes used in the pyrochlore effective
Hamiltonian.}
\label{fig:tetrahedron_pairs}
\end{figure*}

Figure~\ref{fig:tetrahedron_pairs} shows the same pair of singlet
configurations on two adjacent tetrahedra in two different
orientations.  The numbering of the sites in the two panels is related
by a $2\pi/3$ rotation about the axis perpendicular to the page and
passing through site $2$.  The leading two-spin matrix elements are
calculated by the same loop-resolution procedure as before
[see Appendix~\ref{app:matrix_elements_calculation}].  The results are
listed in Table~\ref{table:rotated_matrix_elements}.

\begin{table*}[t!] 
    \caption{Leading local matrix elements in the dimer representation for
the pyrochlore lattice after the $C_3$ rotation shown in
Fig.~\ref{fig:tetrahedron_pairs}(b).  The entries should be compared
with Table~\ref{table:matrix_elements}.  The rotation permutes the site
labels and hence the individual matrix elements, while keeping them
within the symmetry-equivalent coupling sectors defined in
Eq.~\eqref{eq:pyrochlore_coupling_def}.}
\label{table:rotated_matrix_elements}
    \resizebox{\textwidth}{!}{%
    \begin{tabular}{|c||*{9}{c|}}
        \hline
        Matrix elements & $\mathbf{S}_1 \cdot \mathbf{S}_5$ & $\mathbf{S}_1 \cdot \mathbf{S}_6$ & $\mathbf{S}_1 \cdot \mathbf{S}_7$ & $\mathbf{S}_2 \cdot \mathbf{S}_5$ & $\mathbf{S}_2 \cdot \mathbf{S}_6$ & $\mathbf{S}_2 \cdot \mathbf{S}_7$ & $\mathbf{S}_3 \cdot \mathbf{S}_5$ & $\mathbf{S}_3 \cdot \mathbf{S}_6$ & $\mathbf{S}_3 \cdot \mathbf{S}_7$\tabularnewline
        \hline
        $\langle \vdimers, \vdimers |\dots| \vdimers, \vdimers \rangle$   & $0$ & $0$ & $0$ & $0$ & ${4}/{9}$ & $0$ & $0$ & $0$ & $0$\tabularnewline
        $\langle \vdimers, \hdimers |\dots| \vdimers, \hdimers \rangle$ & $0$ & $0$ & $0$ & ${4}/{9}$ & $0$ & $0$ & $0$ & $0$ & $0$\tabularnewline
        $\langle \hdimers, \vdimers |\dots| \hdimers, \vdimers \rangle$ & $0$ & ${4}/{9}$ & $0$ & $0$ & $0$ & $0$ & $0$ & $0$ & $0$\tabularnewline
        $\langle \hdimers, \hdimers |\dots| \hdimers, \hdimers \rangle$ & ${4}/{9}$ & $0$ & $0$ & $0$ & $0$ & $0$ & $0$ & $0$ & $0$\tabularnewline
        \hline
        
        $\langle \vdimers, \vdimers |\dots| \vdimers, \hdimers \rangle$ & $0$ & $0$ & $0$ &${2}/{9}$ & ${2}/{9}$ & $-{2}/{9}$ & $0$ & $0$ & $0$\tabularnewline
        $\langle \vdimers, \vdimers |\dots| \hdimers, \vdimers \rangle$ & $0$ & ${2}/{9}$ & $0$ & $0$ & ${2}/{9}$ & $0$ & $0$ & $-{2}/{9}$ & $0$\tabularnewline
        $\langle \vdimers, \hdimers |\dots| \hdimers, \hdimers \rangle$ & ${2}/{9}$& $0$ & $0$ & ${2}/{9}$ & $0$ & $0$ & $-{2}/{9}$ & $0$ & $0$\tabularnewline
        $\langle \hdimers, \vdimers |\dots| \hdimers, \hdimers \rangle$ & ${2}/{9}$ & ${2}/{9}$ & $-{2}/{9}$ & $0$ & $0$ & $0$ & $0$ & $0$ & $0$\tabularnewline
        \hline
        
        $\langle \vdimers, \vdimers |\dots| \hdimers, \hdimers \rangle$ & ${1}/{9}$ & ${1}/{9}$ & $-{1}/{9}$ & ${1}/{9}$ & ${1}/{9}$ & $-{1}/{9}$ & $-{1}/{9}$ & $-{1}/{9}$ & ${1}/{9}$\tabularnewline
        $\langle \vdimers, \hdimers |\dots| \hdimers, \vdimers \rangle$ & ${1}/{9}$ & ${1}/{9}$ & $-{1}/{9}$ & ${1}/{9}$ & ${1}/{9}$ & $-{1}/{9}$ & $-{1}/{9}$ & $-{1}/{9}$ & ${1}/{9}$\tabularnewline
        \hline
    \end{tabular}%
    }
\end{table*}

Comparing Table~\ref{table:rotated_matrix_elements} with
Table~\ref{table:matrix_elements}, we see that the entries are cycled
in a definite way.  Sites $1,2,3$ are replaced by $2,3,1$, while sites
$5,6,7$ are replaced by $6,7,5$, respectively.  For example,
$\langle \hdimers, \hdimers |
\mathbf{S}_3\cdot\mathbf{S}_5
| \vdimers, \hdimers \rangle$ in the rotated orientation of
Fig.~\ref{fig:tetrahedron_pairs}(b), which is the complex conjugate of
$\langle \vdimers, \hdimers |
\mathbf{S}_3\cdot\mathbf{S}_5
| \hdimers, \hdimers \rangle$, is equal to
$\langle \hdimers, \hdimers |
\mathbf{S}_2\cdot\mathbf{S}_7
| \vdimers, \hdimers \rangle$ in the original orientation of
Fig.~\ref{fig:tetrahedron_pairs}(a).

Thus, changing from orientation~1 to orientation~2 permutes the
two-spin interaction matrices
\eqref{AB_matrix_15}--\eqref{AB_matrix_37}, but only within the
respective $J^{(1)}$ and $J^{(2)}$ sectors of
Eq.~\eqref{eq:pyrochlore_coupling_def}.  After the matrices in each
sector are summed, the effective Hamiltonian in
Eq.~\eqref{eq:eff_Hamiltonian_pyrochlore} is unchanged.  The same
conclusion holds for the other two orientations of a neighbouring
tetrahedron pair.

On the checkerboard lattice, a $C_4$ rotation cycles the sites as
$1\rightarrow 3\rightarrow 1$ and $5\rightarrow 7\rightarrow 5$,
while sites $2$ and $6$ remain fixed.  This again only permutes the
interaction matrices within the coupling sectors
$J^{(1)},J^{(2)},J^{(3)}$, and $J^{(4)}$ defined in
Eq.~\eqref{eq:checkerboard_coupling_def}.  Hence, after all matrices
within each sector are added, the checkerboard effective Hamiltonian
in Eq.~\eqref{eq:eff_Hamiltonian_checkerboard} is also independent of
the orientation of the plaquette pair.

This orientation independence is a useful structural result.  Although
the local pseudospin basis transforms nontrivially under rotations, the
projected Heisenberg perturbation does not generate bond-dependent
compass or spin-orbit couplings at this order.  The pyrochlore
projection therefore gives the simple diamond-lattice XY model in
Eq.~\eqref{eq:eff_Hamiltonian_pyrochlore}, rather than a more general
bond-dependent pseudospin model.

\subsubsection{Emergent pseudospin dynamics on pyrochlore}
\label{sec:XY-model}

Equation~\eqref{eq:eff_Hamiltonian_pyrochlore} is an XY model even though it involves
$x$ and $z$ pseudospin components rather than the more conventional $x$ and $y$.  This is
only a rotation of axes in the emergent two-dimensional singlet
space on each tetrahedron.  Upon defining
\[
    \sigma^x=\tau^x,\qquad
    \sigma^y=\tau^z,\qquad
    \sigma^z= -\tau^y ,
\]
Eq.~\eqref{eq:eff_Hamiltonian_pyrochlore} takes the conventional form
\begin{equation}
    H_{XY}=-J\sum_{\langle ij\rangle}
    \left(\sigma_i^x \sigma_j^x+
    \sigma_i^y \sigma_j^y\right),
\end{equation}
with
\[
    J=-\frac{8}{27}\left(J^{(2)}-J^{(1)}\right).
\]
The global U(1) symmetry is therefore the rotation
symmetry in the $\sigma^x$--$\sigma^y$ plane, generated by $\sigma^z$.  This
is an emergent pseudospin symmetry acting within the local singlet
doublet, not a physical spin-rotation symmetry of the original spin-$1$
moments. For spin-$\tfrac12$, the XY model is equivalent
to hard-core bosons through the Matsubara--Matsuda mapping, with the
ordered XY phase corresponding to phase coherence of the bosons
\cite{Matsubara1956}.  In spatial dimensions greater than one, the
ground state of the quantum XY model exhibits long-range order in the
standard unfrustrated setting \cite{Kennedy1988a}.  At finite
temperature, dimensionality is crucial: in two spatial dimensions one
obtains a Berezinskii--Kosterlitz--Thouless phase with quasi-long-range
order \cite{Ding1990,Ding1992,Sandvik1999,Melko2004}, whereas in three
dimensions true long-range order persists up to a finite-temperature
transition in the 3D XY universality class
\cite{Dyson1978,Kubo1988}.

Because the diamond lattice is bipartite, the two signs of the XY
coupling are equivalent.  A $\pi$ rotation of the pseudospins on one
diamond sublattice about the $\tau^y$ axis sends
\[
    \tau^x\rightarrow -\tau^x,\qquad
    \tau^z\rightarrow -\tau^z,\qquad
    \tau^y\rightarrow \tau^y .
\]
This changes the sign of
$\tau^x_\alpha\tau^x_\beta+\tau^z_\alpha\tau^z_\beta$ on every
nearest-neighbour diamond bond.  Thus, ferromagnetic and
antiferromagnetic XY couplings are related by a sublattice rotation,
and the sign of $J^{(2)}-J^{(1)}$ does not frustrate the leading
pyrochlore pseudospin dynamics.

The effective Hamiltonian in
Eq.~\eqref{eq:eff_Hamiltonian_pyrochlore} should therefore be viewed
as the leading nontrivial collective dynamics within an exactly
characterized singlet manifold of a local SU(2)-invariant pyrochlore
spin model.  Starting from the frustration-free parent Hamiltonian
\eqref{eq:Hamiltonian}, whose zero-energy states form the loop
manifold described in Sec.~\ref{sec:ground_states}, symmetry forbids a
nearest-neighbour local pseudospin bias.  The first
degeneracy-lifting term appears at next-nearest-neighbour order and
takes the form of an unfrustrated nearest-neighbour XY model for
emergent pseudospins on the diamond lattice of tetrahedron centers.
The pseudospins are not introduced phenomenologically: they are the
two local singlet states that span the exact low-energy Hilbert space
on each tetrahedron.  Thus the projection identifies both the
low-energy degrees of freedom and the lattice on which they interact.

The effective Hamiltonian in Eq.~\eqref{eq:eff_Hamiltonian_pyrochlore}
has the usual tendency of an XY model: it favours phase coherence
between the two local singlet states $|A\rangle$ and $|B\rangle$ on
neighbouring tetrahedra.  In the pseudospin language, this corresponds
to ordering in the $\tau^x$--$\tau^z$ plane.  The resulting low-energy
mode is the Goldstone mode associated with slow spatial variations of
this emergent phase.

This ordered pseudospin state should be distinguished from a Coulombic
U(1) spin liquid.  Equation~\eqref{eq:eff_Hamiltonian_pyrochlore}
identifies the leading coherent dynamics within the exact singlet-loop
manifold, but this leading term by itself does not produce a deconfined
gauge field.  If a Coulombic regime is realized near this parent point,
it must come from additional processes acting within the same constrained
manifold.  Natural candidates are higher-order loop reconnections and
ring-exchange terms around closed paths of the diamond lattice, which
play the role of Rokhsar--Kivelson dynamics in related constrained
dimer models.

This perspective clarifies the relation to earlier approaches.  In the
quantum-ice construction of Hermele, Fisher, and Balents
\cite{Hermele2004a}, the emergent gauge structure arises from
perturbation theory about an anisotropic spin-ice manifold.  In
Rokhsar--Kivelson-type and large-$N$ dimer models on the pyrochlore
lattice \cite{Moessner2006a}, the constrained dimer Hilbert space is
imposed at the outset and the effective dimer dynamics is then
constructed within it.  By contrast, in the present construction the
constrained singlet-loop manifold is the exact zero-energy manifold of
a microscopic SU(2)-invariant spin-$1$ Hamiltonian, and the
pseudospin dynamics is obtained by explicitly projecting physical spin
interactions into that manifold.  In this sense the construction is
close in spirit to Klein-point projector Hamiltonians
\cite{Nussinov2006,Nussinov2007}, but with an AKLT-inspired realization
of the local singlet-loop manifold and an explicit route to the
effective diamond-lattice pseudospin theory.

The two-dimensional checkerboard case is more conventional because the
nearest-neighbour perturbation already generates a local pseudospin
field.  The next-nearest-neighbour terms in
Eq.~\eqref{eq:eff_Hamiltonian_checkerboard} therefore usually act only
as subleading anisotropic pseudospin interactions on top of this field.
Consequently, the singlet sector is generically gapped: the system
locally selects either the $|A\rangle$ or the $|B\rangle$ state before
the weaker inter-plaquette pseudospin dynamics can drive a collective
instability.

A more interesting regime can be approached only if the
$\tau^x_\alpha\tau^x_\beta$ interaction becomes large compared with
both the $\tau^z_\alpha\tau^z_\beta$ interaction and the effective
field along $\tau^z$.  In that regime the combination of
Eqs.~\eqref{eq:eff_Hamiltonian_checkerboard} and
\eqref{eq:effective_ham_nn} reduces approximately to a transverse-field
Ising model.  The Ising coupling is controlled by
$3|J^{(2)}-J^{(1)}|$, up to the common prefactor in
Eq.~\eqref{eq:eff_Hamiltonian_checkerboard}, while the transverse field
receives contributions from the nearest-neighbour perturbation and, in
general, from the one-body term generated by next-nearest-neighbour
interactions.

If the Ising coupling overcomes this field, the system can spontaneously
polarize along the $\tau^x$ direction.  A uniform
$\langle\tau^x\rangle$ corresponds, in the original spin-$1$ language,
to a columnar valence-bond state: the AKLT loops predominantly run in
one lattice direction, spontaneously breaking the rotational symmetry
of the checkerboard lattice.  A staggered
$\langle\tau^x\rangle$ instead corresponds to a plaquette
valence-bond solid in which the loops predominantly encircle
alternating empty plaquettes, thereby breaking translation symmetry.
Such regimes require tuning, since the field appears already at the
nearest-neighbour level whereas the Ising interaction arises only at
next-nearest-neighbour order.  Generically one therefore expects the
field to dominate, leading to a uniform local selection of either
$|A\rangle$ or $|B\rangle$ on every crossed plaquette.

\section{Monte Carlo simulation for equal-amplitude loop superposition}
\label{sec:MC}

The fixed-loop calculation of Sec.~\ref{sec:ground_states} and
Appendix~\ref{app:loops_correlations} gives exact control over spin
correlations in an individual loop state.  A generic ground state,
however, is a superposition of many loop configurations, and its norm
and correlation functions contain off-diagonal overlaps between
different loop states.  The question is therefore whether the
short-ranged correlations found in a fixed loop state survive in a
genuine loop superposition.

We address this question for the simplest sign-free case selected by
the checkerboard nearest-neighbour perturbation, namely the product
state with $|A\rangle$ on every crossed plaquette.  This calculation is
not meant to determine a full phase diagram.  Its purpose is to test,
in a controlled equal-amplitude superposition, how the loop correlations
are modified once off-diagonal overlaps are included.

On the checkerboard lattice, the nearest-neighbour perturbation in
Eq.~\eqref{eq:effective_ham_nn} selects the local state
$|A\rangle_\alpha$ on every crossed plaquette when $J_s>J_d$.  The
corresponding product state is
\begin{equation} \label{eq:equal_amplitude_state}
    \lvert \Psi \rangle \rightarrow \prod_{\alpha} \lvert A \rangle_{\alpha} \propto \prod_{\alpha} (\lvert \vdimers \rangle + \lvert \hdimers \rangle)_{\alpha} \, .
\end{equation}
Thus the wavefunction is an equal-amplitude superposition of all
products of horizontal and vertical local valence-bond configurations
on the crossed plaquettes.  This is closely analogous in spirit to the
Rokhsar--Kivelson equal-amplitude superposition of fully packed dimer
coverings~\cite{Rokhsar1988}, but with an important difference: the
objects being superposed here are spin-$1$ AKLT-like valence-bond loop
states, not hard-core dimer configurations introduced as an effective
Hilbert space.

The sign structure of Eq.~\eqref{eq:equal_amplitude_state} is crucial.
All horizontal and vertical dimer products enter with the same sign.
Consequently, the norm $\langle \Psi|\Psi\rangle$ can be written as a
positive classical statistical sum over pairs of loop configurations.
This positivity is what makes a Metropolis Monte Carlo calculation
possible.  By contrast, a generic product of local $|B\rangle$ states
would contain relative minus signs between horizontal and vertical
dimerizations, and the corresponding overlap expansion would not have
the same sign-free interpretation.

For an observable $\hat O$, we use the standard valence-bond Monte
Carlo estimator~\cite{Tang2011a}
\begin{equation}\label{eq:observable_expectation}
\langle \hat{O} \rangle =
\frac{
\sum_{\mathcal{L}, \mathcal{L}'}
p_{\mathcal{L}, \mathcal{L}'}
\langle \mathcal{L}' | \mathcal{L} \rangle
\frac{\langle \mathcal{L}' | \hat{O} | \mathcal{L} \rangle}
{\langle \mathcal{L}' | \mathcal{L} \rangle}
}{
\sum_{\mathcal{L}, \mathcal{L}'}
p_{\mathcal{L}, \mathcal{L}'}
\langle \mathcal{L}' | \mathcal{L} \rangle
} \, .
\end{equation}
Here $\mathcal L$ and $\mathcal L'$ denote the ket and bra loop
configurations obtained by expanding Eq.~\eqref{eq:equal_amplitude_state},
and $p_{\mathcal{L},\mathcal{L}'}$ is the product of their wavefunction
amplitudes.  For the equal-amplitude state in
Eq.~\eqref{eq:equal_amplitude_state}, this amplitude product is the
same for all pairs and may be set to
$p_{\mathcal{L},\mathcal{L}'}=1$.  The nontrivial statistical weight
then comes entirely from the overlap
$\langle \mathcal L'|\mathcal L\rangle$.

We are interested in the two-point spin correlation function
$\langle \mathbf S_i\cdot\mathbf S_j\rangle$.  Since the spin-coherent
basis gives a local representation of both
$\langle \mathcal L'|\mathcal L\rangle$ and
$\langle \mathcal L'|\mathbf S_i\cdot\mathbf S_j|\mathcal L\rangle$,
and since the overlap is non-negative for the equal-amplitude state,
we can sample pairs $(\mathcal L,\mathcal L')$ by a Metropolis
algorithm.  For $\hat O=\mathbf S_i\cdot\mathbf S_j$, Eq.~\eqref{eq:observable_expectation}
becomes
\begin{equation} \label{eq:observable_expectation_2}
    \langle \mathbf{S}_i \cdot \mathbf{S}_j \rangle =
    \frac{
    \sum_{\mathcal{L}, \mathcal{L}'}
    W_{\mathcal{L}, \mathcal{L}'}
    C_{ij}(\mathcal{L},\mathcal{L}')
    }{
    \sum_{\mathcal{L}, \mathcal{L}'}
    W_{\mathcal{L}, \mathcal{L}'}
    } \, ,
\end{equation}
where
\[
    W_{\mathcal{L},\mathcal{L}'}
    =
    \langle \mathcal L'|\mathcal L\rangle
\]
is the positive Monte Carlo weight, and
\[
    C_{ij}(\mathcal{L},\mathcal{L}')
    =
    \frac{
    \langle \mathcal L'|
    \mathbf S_i\cdot\mathbf S_j
    |\mathcal L\rangle
    }{
    \langle \mathcal L'|\mathcal L\rangle
    }
\]
is the corresponding estimator.  For a proposed update from an old
configuration pair to a new configuration pair, the Metropolis
acceptance probability is
\[
    P_{\mathrm{accept}}
    =
    \min\left[
    \frac{W_{\mathrm{new}}}{W_{\mathrm{old}}},
    1
    \right].
\]

The structure of the sampled configurations differs from the usual
classical spin Monte Carlo problem.  We do not assign a classical vector
to each site.  Instead, each physical spin-$1$ site is touched by two
virtual singlets in the ket configuration and two virtual singlets in
the bra configuration.  Thus four singlet arrows meet at every site in
the overlap graph.  This is similar in spirit to amplitude-product RVB
Monte Carlo sampling~\cite{Tang2011a}, but here each layer contains two
singlet bonds incident on every physical spin-$1$ site, rather than one.

\begin{figure}[t!]
  \centering
  \includegraphics[width=\linewidth, trim=30 30 30 30, clip]{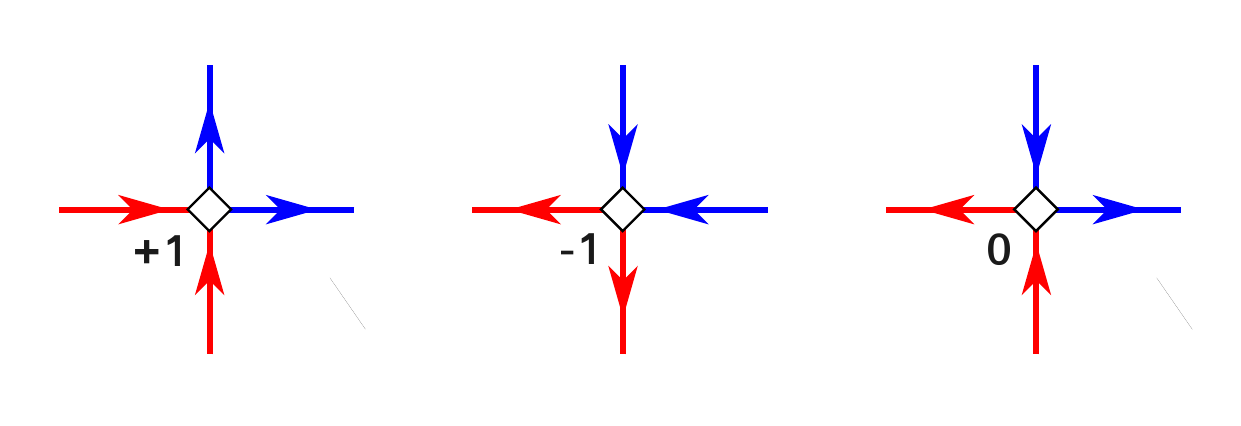}
 \caption{Representative local arrow configurations at a physical
spin-$1$ site satisfying the $2$-in-$2$-out rule.  Red lines denote
singlets from the ket configuration, while blue lines denote singlets
from the bra configuration.  Each physical site is incident on two ket
singlets and two bra singlets.  The numbers shown near the sites give
the relative values of the local $\Omega^z$ spin-insertion integrals
appearing in Eq.~\eqref{eq:spin_weights}; the common factor $2\pi/3$
has been suppressed.  These local weights enter the Monte Carlo
estimator for $\mathbf S_i\cdot\mathbf S_j$.}
\label{fig:singlets_site}
\end{figure}

The local coherent-state integral imposes a strong constraint.  Both
the overlap $\langle \mathcal L'|\mathcal L\rangle$ and the spin
matrix element
$\langle \mathcal L'|\mathbf S_i\cdot\mathbf S_j|\mathcal L\rangle$
vanish unless, at every site, the number of incoming arrows equals the
number of outgoing arrows.  We refer to this as the $2$-in-$2$-out
rule; examples are shown in Fig.~\ref{fig:singlets_site}.  This rule is
the local remnant of the $\phi$ integration in the spin-coherent
representation: any imbalance between incoming and outgoing arrows
leaves an uncancelled phase factor and hence integrates to zero.

Using the vertex weights of Eq.~\eqref{eq:vertex_weights} and the
spin-insertion weights of Eq.~\eqref{eq:spin_weights}, we can evaluate
both the Monte Carlo weight and the spin estimator for any valid pair
$(\mathcal L,\mathcal L')$.  A site that satisfies the $2$-in-$2$-out
rule contributes either the base weight $2\pi/3$ or twice this value,
$4\pi/3$.  Let $N$ be the total number of physical sites, and let
$N_\star(\mathcal L,\mathcal L')$ be the number of sites of the latter
type, namely sites where both ket arrows come in and both bra arrows go
out, or the reversed arrangement where both ket arrows go out and both
bra arrows come in.  Such a vertex has two possible local loop
resolutions, which accounts for the extra factor of $2$ relative to the
base vertex weight.  Thus
\begin{subequations} \label{eq:mc_probability_weight}
    \begin{equation}
        \label{eq:mc_probablitiy_density}
        W_{\mathcal{L}, \mathcal{L'}} =
        2^{N_\star(\mathcal L,\mathcal L')}
        \bigg( \frac{2 \pi}{3} \bigg)^{N} \, .
    \end{equation} 
    The common factor $(2\pi/3)^N$ cancels from all Metropolis ratios,
    but it is displayed here to show the origin of the statistical
    weight.

    For the two spin-insertion sites $i$ and $j$, the local
    $\Omega^z$ insertion is nonzero only for the doubled vertices just
    described.  At such a site, insertion of $\Omega^z$ replaces the
    overlap weight $4\pi/3$ by $\pm 2\pi/3$, giving a local ratio
    $\pm 1/2$.  Hence the reduced two-site insertion is
    \begin{equation} \label{eq:mc_weight}
        \chi_{ij}(\mathcal L,\mathcal L') = 
        \begin{cases}
            +\dfrac{2^{N_\star - 2} \big(\frac{2 \pi}{3} \big)^N}
            {2^{N_\star} \big( \frac{2 \pi}{3} \big)^{N}}
            = +\dfrac14,\\[1.2em]
            -\dfrac{2^{N_\star - 2} \big(\frac{2 \pi}{3} \big)^N}
            {2^{N_\star} \big( \frac{2 \pi}{3} \big)^{N}}
            = -\dfrac14,\\[1.2em]
            0.
        \end{cases}
    \end{equation}
\end{subequations}
The positive sign occurs when the two spin-insertion sites have the
same local ket-arrow orientation: both ket arrows enter at both sites,
or both ket arrows leave at both sites.  The negative sign occurs when
the ket arrows enter at one site and leave at the other.  If either
site is of a vertex type for which the $\Omega^z$ insertion vanishes,
the estimator is zero.  With the spin-$1$ coherent-state prefactor used
in the correlation function, the physical estimator entering
Eq.~\eqref{eq:observable_expectation_2} is
\begin{equation}
    C_{ij}(\mathcal L,\mathcal L')
    =
    4\,\chi_{ij}(\mathcal L,\mathcal L')
    \in \{+1,-1,0\}.
\end{equation}
There is also an extra factor of $3$ due to
\(
    \langle \mathbf{S}_i\cdot\mathbf{S}_j \rangle
    =
    3\langle {S}_i^z {S}_j^z\rangle .
\)
However, we scale down this factor of 3 since the form of Eq.~\eqref{eq:observable_expectation_2} makes the Monte Carlo problem simple and transparent: one samples a
positive weight $W_{\mathcal L,\mathcal L'}$, while the measured
quantity is a bounded random variable taking the values $+1$, $-1$, or
$0$.

The factor $2^{N_\star}$ in Eq.~\eqref{eq:mc_probablitiy_density} also
has a simple physical interpretation.  Configuration pairs with larger
$N_\star$ have more locally double-weight vertices and therefore larger
overlap weight.  Such configurations tend to decompose into many short
transition loops, whereas long ``no-turn-back'' loops reduce the number
of locally doubled vertices.  Since the spin estimator is nonzero only
when the two spin-insertion sites lie on the same transition loop, one
expects the spin correlation function to decay with distance.

\subsection{Sampling a valid loop configuration}

We begin by constructing an initial valid pair of configurations
$(|\mathcal L\rangle,\langle\mathcal L'|)$ on an $L\times L$
checkerboard lattice with periodic boundary conditions.  The ket and
bra singlets are then superposed to form an overlap graph, as
illustrated in Fig.~\ref{fig:initial_configuration_pair}.  A valid
configuration pair must obey two local constraints at every site:
there must be exactly two ket singlets and two bra singlets incident on
the site, and the resulting four arrows must satisfy the $2$-in-$2$-out
rule.  Periodic boundary conditions impose the same constraints across
the boundaries of the finite lattice.

\begin{figure}[t!]
  \centering
  \includegraphics[width=\linewidth]{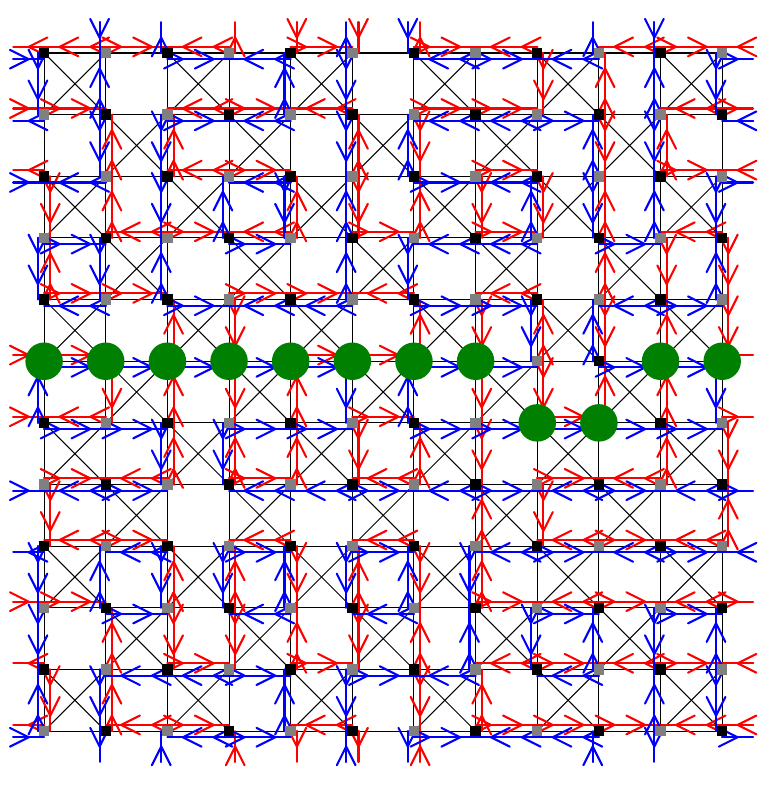}
 \caption{Valid initial configuration pair
$|\mathcal L\rangle,\langle\mathcal L'|$ on a $12\times12$
checkerboard lattice with periodic boundary conditions.  Red singlets
belong to the ket configuration $|\mathcal L\rangle$, and blue singlets
belong to the bra configuration $\langle\mathcal L'|$.  At every site,
the four incident arrows satisfy the $2$-in-$2$-out rule.  Sites marked
by green circles are no-turn-back sites.  Because the left and right
boundaries are identified, the horizontal string of such sites forms a
winding no-turn-back loop on the torus.}
\label{fig:initial_configuration_pair}
\end{figure}

The transition loops are traced by following the arrows and alternating
between ket and bra singlets at every site.  Since each step switches
between the two layers, a closed transition loop contains the same
number of ket and bra singlet segments.  Overlaying
$|\mathcal L\rangle$ and $\langle\mathcal L'|$ therefore produces a
collection of closed loops of different lengths, as seen in
Fig.~\ref{fig:initial_configuration_pair}.

Some local arrow configurations force the loop to pass from one crossed
plaquette to a neighbouring crossed plaquette.  We call such vertices
no-turn-back sites.  At these sites, once the incoming segment of a
transition loop is specified, the outgoing segment cannot return to the
same crossed plaquette.  Other allowed arrow configurations give a
choice: the transition loop may either turn back within the same crossed
plaquette or pass onward to a neighbouring one.  These possibilities are
illustrated in Fig.~\ref{fig:arrow_combination_sets}.

\begin{figure}[t!]
  \centering
  \includegraphics[width=\linewidth]{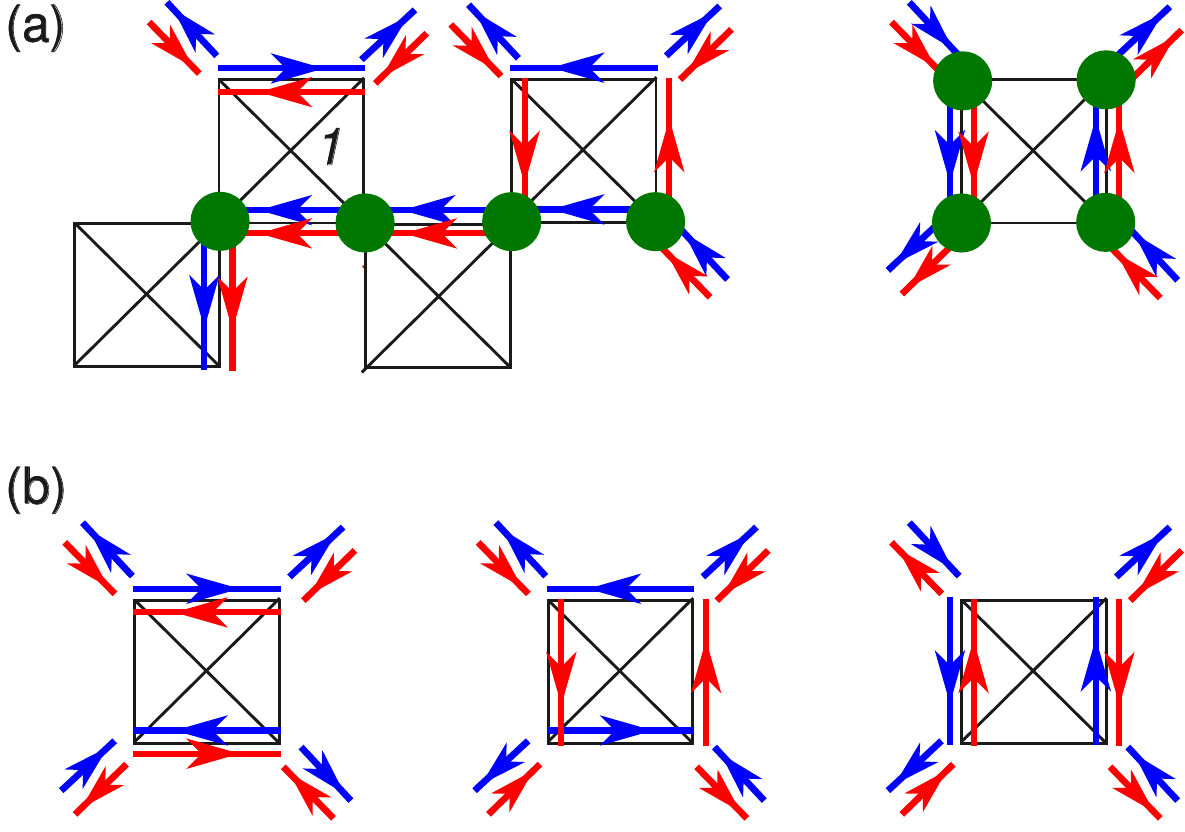}
 \caption{Local arrow configurations on crossed plaquettes.  (a)
Examples containing no-turn-back sites, marked by green circles.  At
such sites, a transition loop is forced to pass from one crossed
plaquette to a neighbouring one.  In the crossed plaquette labelled
$1$, for example, the two bottom no-turn-back sites force the loop to
connect the lower-right neighbouring plaquette to the lower-left
neighbouring plaquette.  The two top sites do not impose such a
connection and are therefore not marked.  (b) Examples of allowed arrow
configurations in which the transition loop can turn back and remain
within the same crossed plaquette.}
\label{fig:arrow_combination_sets}
\end{figure}

In many valid initial configurations, no-turn-back sites form continuous
horizontal or vertical strings.  With periodic boundary conditions such
a string becomes a winding no-turn-back loop.  Larger lattices may
contain more than one horizontal or vertical winding no-turn-back loop.
These winding structures are important because the local updates
described below do not change their number or orientation.  The Monte
Carlo simulation should therefore be understood as sampling within a
fixed no-turn-back sector.

\subsection{Box update proposal}

The Monte Carlo update must preserve all local constraints: two ket
singlets and two bra singlets must remain incident on every site, and
the $2$-in-$2$-out rule must continue to hold.  We use a local box
update designed to satisfy these conditions [see Fig.~\ref{fig:box_update}].

\begin{figure}[t!]
  \centering
  \includegraphics[width=\linewidth]{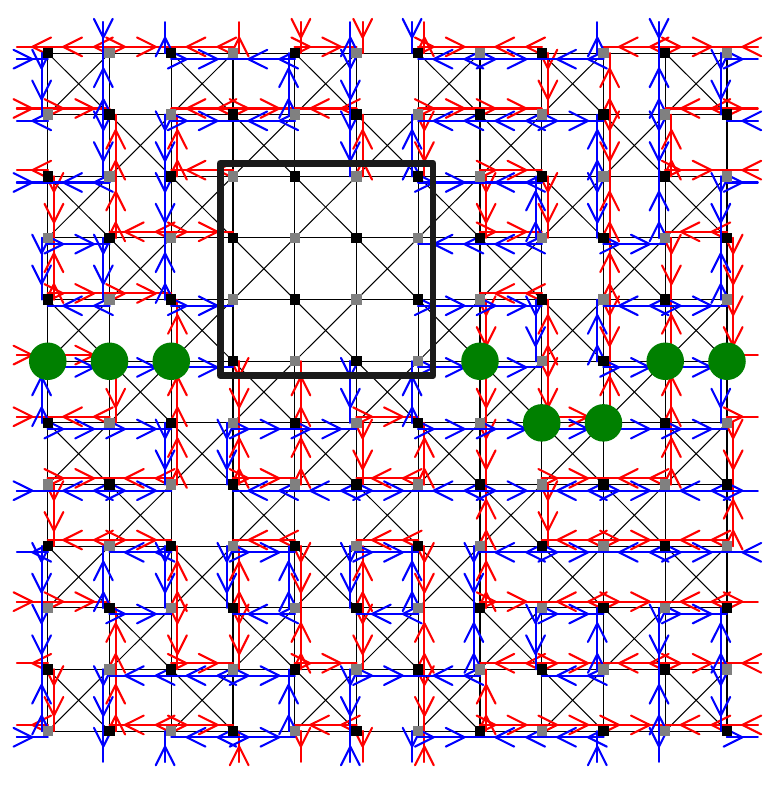}
 \caption{Local box update used in the Monte Carlo sampling.  A
rectangular $4\times4$ box is selected inside the configuration pair of
Fig.~\ref{fig:initial_configuration_pair}, and a new allowed singlet
configuration is proposed within the box.  The update preserves the
local site constraints and keeps all singlets crossing the boundary of
the box, including their arrow directions, fixed.  Since only $16$
sites are updated out of the full $12\times12$ lattice, the move is
local.}
\label{fig:box_update}
\end{figure}

In each Monte Carlo sweep, we choose a rectangular box of size
$b_1\times b_2$ at a random location.  The proposed update changes the
allowed singlet pattern inside this box while leaving the external
configuration fixed.  In particular, the singlets crossing the boundary
of the box and their arrow directions are not changed.  This ensures
that sites outside the box continue to satisfy the local constraints.

For simplicity, the box shape and size are kept fixed throughout a
given simulation.  The update is local when
$b_1 b_2 \ll L^2$.  Because it is local, it cannot create, destroy, or
change the winding direction of a no-turn-back loop that wraps around
the torus.  The number and orientation of winding no-turn-back loops
are therefore sector labels for the simulation.  This is why the first
two rows of Fig.~\ref{fig:decay_16-16} show the same type of
no-turn-back structure before and after Monte Carlo sampling.

\subsection{Monte Carlo runs}

The numerical measurement of long-distance correlations is demanding
because the signal decays exponentially, whereas the individual
estimator remains of order unity.  For any sampled configuration pair
$(\mathcal L,\mathcal L')$, the estimator
$C_{ij}(\mathcal L,\mathcal L')$ takes only the values $+1$, $-1$, or
$0$.  At large separations, the mean value is therefore obtained from a
small imbalance between positive and negative contributions, while the
fluctuations of individual measurements remain much larger than the
signal.  As a result, the tail of the correlation function converges
slowly and requires a large number of Monte Carlo measurements.

This limitation is particularly visible for the largest distances on
the $16\times16$ lattice.  For example, the diagonal correlation shown
in Fig.~\ref{fig:decay_16-16}(i) has a semi-log slope of approximately
$1.42$ per lattice step.  At the largest separation considered,
$|i-j|=9$, the corresponding average is exponentially small, while the
measured estimator still fluctuates on the scale set by the values
$+1$, $-1$, and $0$.  This separation between the size of the signal and
the size of the sample-to-sample fluctuations is the reason that
billions of Monte Carlo measurements are needed to obtain stable
long-distance data.  Larger lattices would require still longer runs to
reach comparable accuracy.  We therefore show results up to
$16\times16$.  Simulations on $12\times12$ and $14\times14$ lattices
show the same qualitative exponential decay; these data are not shown.

Statistical uncertainties were estimated by binning the Monte Carlo
time series.  The fitted slopes quoted in
Fig.~\ref{fig:decay_16-16} were obtained from the range over which the
semi-log plots are linear within the statistical resolution.

Figure~\ref{fig:decay_16-16} summarizes the results for the
$16\times16$ lattice in three no-turn-back sectors: a sector with one
horizontal winding no-turn-back loop, a sector with no winding
no-turn-back loop, and a sector with one vertical winding no-turn-back
loop.  The bottom row shows semi-log plots of the spin-spin correlation
function along the horizontal, vertical, and diagonal directions.  In
all three sectors the correlations decay approximately exponentially.
The fitted decay rate, however, depends on the sector.  When a winding
no-turn-back loop is present along a given direction, the decay in that
direction is slower, corresponding to a longer correlation length.

We do not attempt here to classify all no-turn-back sectors, nor do we
study sectors containing multiple winding no-turn-back loops.  The
numerical results should therefore be interpreted as evidence that the
equal-amplitude $|A\rangle$ product state has short-ranged spin
correlations in the sectors studied.  A systematic sector-by-sector
analysis, including nonlocal updates that change winding sectors, is
left for future work.

\begin{figure*}
    \includegraphics[width=\linewidth]{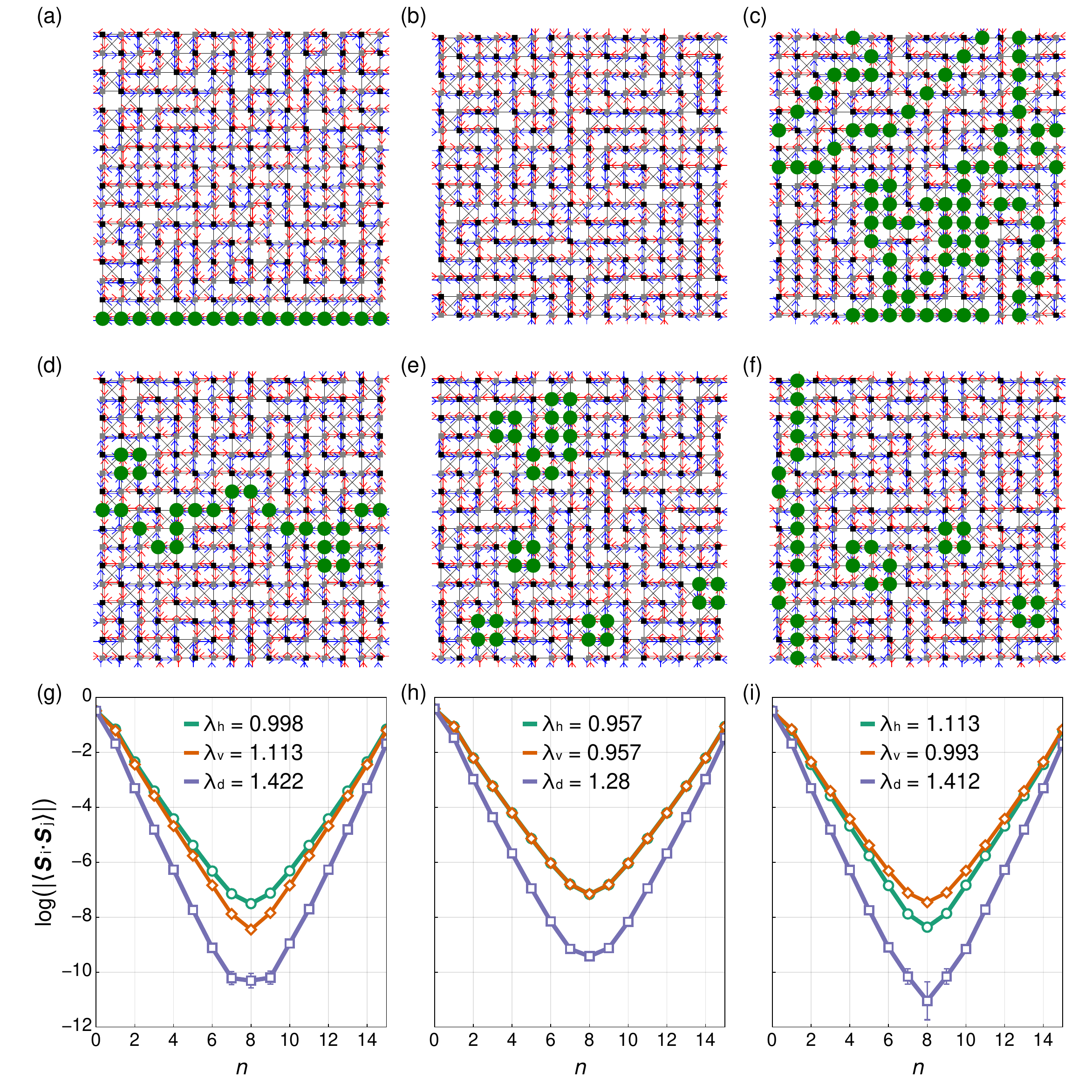}
    \caption{Monte Carlo results for the equal-amplitude $|A\rangle$
product state on a $16\times16$ checkerboard lattice.  The top row
shows representative initial configuration pairs, the middle row shows
representative final configuration pairs after Monte Carlo sampling,
and the bottom row shows semi-log plots of the spin-spin correlation
function.  The three columns correspond, respectively, to a sector with
one horizontal winding no-turn-back loop, a sector with no winding
no-turn-back loop, and a sector with one vertical winding no-turn-back
loop.  Because the box update is local, the number and orientation of
winding no-turn-back loops are unchanged during the simulation.  The
variable $n$ denotes the separation $|i-j|$ between the two sites,
measured as the shortest number of lattice steps along the chosen
direction.  The fitted slopes $\lambda_h$, $\lambda_v$, and
$\lambda_d$ correspond to horizontal, vertical, and diagonal
directions, respectively.  A winding no-turn-back loop along a given
direction reduces the corresponding decay slope compared to the transverse direction slope, indicating a longer
correlation length in that direction.}
\label{fig:decay_16-16}
\end{figure*}

\section{Discussion and Outlook}
\label{sec:discussion_outlook}

\subsection{Discussion}

This work establishes an exact microscopic parent point for constrained
valence-bond physics on corner-sharing lattices, including the
three-dimensional pyrochlore lattice.  The Hamiltonian is local,
SU(2)- and time-reversal-invariant, and frustration-free: it is built
from projectors on triangular faces that remove only the maximally
polarized spin sector.  Its ground-state manifold can therefore be
characterized directly from local zero-energy conditions, rather than
introduced as an effective dimer or loop Hilbert space.

The construction combines two usually distinct routes to exact quantum
magnetism.  As in AKLT-type parent Hamiltonians, the ground states have
an explicit valence-bond structure~\cite{Affleck1987}.  As in
Klein-type projector Hamiltonians, the ground-state space is
extensively degenerate and organized by local constraints
~\cite{Nussinov2006,Nussinov2007}.  The important difference is that
the present construction works already for spin one on the checkerboard
and pyrochlore lattices, avoiding the larger on-site spins usually
required in three-dimensional AKLT constructions on highly coordinated
lattices.

A central conceptual advance is that the constrained degenerate manifold is not imposed at an effective level, as in many Rokhsar--Kivelson-type constructions~\cite{Rokhsar1988,Moessner2006a}, but instead emerges directly from a microscopic spin-1 Hamiltonian whose ground states are spanned, in an overcomplete manner, by fully packed singlet-loop configurations generated through an AKLT-inspired fractionalization-and-recombination procedure. Local kernel constraints on each triangular face enforce the presence of at least one singlet edge, while global lattice consistency and periodic boundary conditions eliminate states that use both virtual spin-$\frac{1}{2}$ variables of a physical site inside the same tetrahedron or crossed plaquette, leaving loop states as the admissible manifold in the thermodynamic limit.

A second essential aspect of this work is analytical control over correlation diagnostics within this manifold. For a fixed loop covering, the coherent-state and Schwinger-boson representation leads to a strikingly geometric result: two spins exhibit nonzero correlations if and only if they belong to the same loop, with correlations decaying exponentially with separation along that loop. This establishes a clear separation between local entanglement encoded in loop connectivity and any long-distance correlations, which must arise from quantum superpositions of loop coverings. The manuscript carefully addresses the role of off-diagonal overlaps and their associated phase structure, an essential ingredient for assessing when a loop-gas description behaves classically and when genuinely quantum interference effects dominate~\cite{Wildeboer2020}.

A further key devolpment presented here is the explicit construction of emergent low-energy dynamics by projecting symmetry-allowed perturbations into the exactly known ground-state subspace. In two dimensions, this procedure yields an effective XY spin-$\tfrac12$ model on the square lattice with a tunable bias reflecting microscopic anisotropies, while on the pyrochlore lattice tetrahedral symmetry enforces the absence of such a bias, resulting in intrinsic collective dynamics on the diamond lattice of tetrahedra. This distinction is substantive: on the pyrochlore lattice, degeneracy lifting is governed by genuinely collective processes rather than simple single-site polarization effects, closely paralleling the structure of three-dimensional Coulomb phases~\cite{Hermele2004a}.

Finally, the Monte Carlo analysis highlights a nontrivial and physically meaningful issue: the presence of topological or winding-like sector invariants that obstruct ergodic sampling under local updates. The appearance of ``no-turn-back'' loop structures that remain invariant under local moves leads to direction-dependent correlation decay and sector-dependent observables. This is not merely an algorithmic complication but reflects the intrinsic structure of the constrained Hilbert space, underscoring the need for sector-aware diagnostics and nonlocal sampling strategies, as encountered more broadly in constrained loop and dimer models~\cite{Moessner2006a}.

\subsection{Outlook}

The construction developed here gives a controlled starting point, but
it does not by itself settle which phase is selected once the exact
degeneracy is lifted.  The most direct next step is to carry the
projection of microscopic perturbations further.  On the pyrochlore
lattice, the absence of a local pseudospin field means that the
important terms are expected to involve several tetrahedra.  It would
therefore be useful to derive explicitly the loop-reconnection and
ring-exchange processes generated at higher orders in perturbation
theory, together with their amplitudes and signs.  Such a calculation
would give an effective Hamiltonian written directly in the constrained
loop manifold, and would clarify whether the resulting dynamics is
closer to a Coulombic $U(1)$ regime, as in quantum spin ice and
diamond-lattice quantum dimer models~\cite{Hermele2004a,Sikora2011},
or to a confined valence-bond-ordered phase~\cite{Moessner2006a}.

Another open issue concerns correlations in general superpositions of
loop states.  For a fixed loop covering, the result is simple: spin
correlations are determined by loop connectivity and decay
exponentially along a loop.  In a superposition, however, off-diagonal
overlaps and relative signs can matter.  It would be valuable to
separate these two effects more systematically: the purely geometric
probability that two sites lie on the same loop, and the quantum
interference weights with which different loop coverings enter the
state.  The same distinction should also be useful for nonlocal
observables, such as string operators or flux-like quantities defined
on the tetrahedra, which are natural diagnostics for distinguishing
confined and deconfined regimes~\cite{Hermele2004a}.

The Monte Carlo results in Sec.~\ref{sec:MC} also point to a practical
issue that deserves further study.  Local updates preserve winding
structures and therefore sample only a fixed sector of the constrained
configuration space.  This is useful for diagnosing sector dependence,
but it is not sufficient for a complete numerical treatment of the loop
manifold.  Nonlocal updates, such as worm moves, directed-loop updates,
or loop-surgery moves, should make it possible to change winding
sectors while preserving the local constraints~\cite{Sandvik2005}.
Sector-resolved measurements would then help distinguish physical
anisotropies from features introduced by restricted sampling.  In
parallel, the effective pseudospin Hamiltonians derived here provide a
more compact route to numerical simulations of the degeneracy-lifting
dynamics.

Another important direction is the study of excitations.  Since the
parent Hamiltonian is a sum of local projectors, the elementary
violations of the zero-energy conditions are well defined.  Their
motion and interaction under projected perturbations should reveal
whether defects remain confined to local valence-bond rearrangements or
can behave as mobile fractionalized excitations in an appropriate
regime.  This question is especially interesting on the pyrochlore
lattice, where three-dimensional constrained dynamics can support
Coulombic correlations and emergent gauge structure in related models.
Understanding these defects would also be a first step toward computing
spectral quantities, such as dynamical spin structure factors.

A related question concerns open boundaries, especially for the pyrochlore lattice. The analogy with the AKLT chain suggests why this is interesting: cutting a spin-$1$ AKLT chain leaves behind unpaired virtual spin-$\tfrac12$ degrees of freedom at its ends. In the present construction, a surface can similarly leave dangling virtual spin-$\tfrac12$ variables, but their number and connectivity depend on the surface termination.

For example, consider a [111] surface of the pyrochlore lattice, as suggested by the orientation shown in Fig.~\ref{fig:lattices}. If the surface is terminated so that the outermost sites are the apical sites of tetrahedra above a kagome plane, the dangling virtual spin-$\tfrac12$ degrees of freedom reside at these apical sites and form a triangular lattice. From the perspective of the pyrochlore lattice, these tetrahedra are next-nearest neigbours; consequently, the boundary degrees of freedom at their apical sites are not coupled by either the bulk projector terms or the additional Heisenberg interactions considered in this work. Their low-energy physics is therefore determined by residual surface interactions. Depending on these interactions, the surface may realize different effective spin-$\tfrac12$ models, ranging from magnetically ordered states to more strongly fluctuating regimes.

Other terminations can lead to different boundary degrees of freedom. For instance, a termination through kagom\'e planes may change not only the connectivity but also the location of the residual spin-$\tfrac12$ variables. The surface problem is therefore not universal in the same sense as the bulk parent construction, but it provides a natural setting in which to ask how local projector constraints, residual interactions and details of termination organize effective low-energy degrees of freedom at a boundary.

Finally, the model suggests a possible route for quantum simulation. The microscopic Hamiltonian contains projector interactions rather than only simple two-spin exchanges, so it is unlikely to arise directly as a minimal material Hamiltonian.  Nevertheless, constrained dynamics and effective multi-spin terms can be engineered in several quantum simulation platforms, including cold atoms, Rydberg arrays, and
superconducting circuits~\cite{Bloch2012}. The present construction may therefore serve as a useful target Hamiltonian for exploring how valence-bond constraints, loop dynamics, and emergent gauge-like descriptions can arise from local quantum degrees of freedom.

\section*{Acknowledgments}
We thank Roderich Moessner, Anders Sandvik, Ludovic Jaubert, Kabir Ramola and Sayan Sircar for discussions. 
The work of Y.I. was performed, in part, at the Aspen Center for Physics, which is supported by a grant from the Simons Foundation (1161654, Troyer). This research was also supported in part by grant NSF PHY-2309135 to the Kavli Institute for Theoretical Physics. Y.I. acknowledges support from the Abdus Salam International Centre for Theoretical Physics through the Associates Programme, from the Simons Foundation through Grant No.~284558FY19, from IIT Madras through the Institute of Eminence program for establishing QuCenDiEM (Project No. SP22231244CPETWOQCDHOC), and the International Centre for Theoretical Sciences for participation in the Discussion Meeting --- Fractionalized Quantum Matter (code: ICTS/DMFQM2025/07). K.S. acknowledges support from IIT Madras as a Visiting Faculty Fellow during which this project was initiated.

\section*{Data Availability Statement} The data and analysis scripts used to generate the Monte Carlo results shown in Fig.~\ref{fig:decay_16-16} are available at Zenodo \href{https://doi.org/10.5281/zenodo.21186494}{https://doi.org/10.5281/zenodo.21186494}.
Additional data and details are available from the corresponding author upon reasonable request.

\appendix

\section{Loop representation for the overlaps and matrix elements between singlet loop states}
\label{app:stat_mech_loops}

In this appendix we develop the loop representation used in the main
text and in Appendices~\ref{app:loops_correlations} and
\ref{app:matrix_elements_calculation}.  The purpose is to rewrite
overlaps between singlet-loop states, and later matrix elements with
spin insertions, as sums over classical loop configurations.  This
representation is useful for two reasons.  First, it gives a
statistical-mechanical form for the norm of a wavefunction written as
a superposition of singlet-loop states.  When the amplitudes in this
superposition are non-negative, this norm may be interpreted as a
partition function.  Second, the same representation gives a simple
book-keeping scheme for spin--spin correlations and for the dimer
matrix elements used in the effective Hamiltonian.

We shall use a complete local basis of valence-bond states on each
crossed plaquette or tetrahedron.  It is therefore not necessary to
treat all three opposite-pair dimerizations as independent basis
states.  On the checkerboard lattice we choose the two basis states
which we call ``horizontal'' and ``vertical'' dimerizations; the
``diagonal'' dimerization is then a linear combination of these two.
On the pyrochlore lattice the labels horizontal, vertical, and
diagonal have no intrinsic geometric meaning.  One may choose any two
of the three local dimerizations as a basis on each tetrahedron.  The
only essential point is that the chosen singlet-loop states span the
ground-state manifold.

The starting point is the coherent-state overlap of a singlet-loop
state, already used in Sec.~\ref{sec:gs_correlations}.  For a loop
configuration $\mathcal L$,
\begin{equation}\label{eq:overlap1_repeat}
  \Psi^{\mathcal{L}}(\Omega)\equiv \langle \mathcal{L}|\Omega\rangle \propto \prod_{\langle i,j\rangle \in \mathcal{L}}\! \left(u_i  v_j - v_i  u_j\right) \, ,
\end{equation}
where
\[
    u_i =
    \cos\left(\frac{\theta_i}{2}\right)e^{i\phi_i/2},
    \qquad
    v_i =
    \sin\left(\frac{\theta_i}{2}\right)e^{-i\phi_i/2},
\]
and the spin coherent state at site $i$ is parametrized by
\[
    \boldsymbol{\Omega}_i
    =
    (\sin\theta_i\cos\phi_i,
     \sin\theta_i\sin\phi_i,
     \cos\theta_i).
\]
The proportionality sign in Eq.~\eqref{eq:overlap1_repeat} indicates
that overall normalization factors, common to all loop resolutions in
what follows, have been suppressed.

Consider now a general wavefunction written as a superposition of loop
states.  Its norm contains interference terms between two loop
configurations $\mathcal L'$ and $\mathcal L$ of the form
\begin{equation*}
\prod_{\langle i,j\rangle \in \mathcal{L'}}\! \left(u_i^*  v_j^* - v_i^*  u_j^*\right)\prod_{\langle m,n\rangle \in \mathcal{L}}\! \left(u_m  v_n - v_m  u_n\right) \, .
\end{equation*}
The fully packed nature of the loop states is important.  Every site
$i$ is touched by two singlet bonds in $\mathcal L'$ and by two
singlet bonds in $\mathcal L$.  Let $j$ and $k$ be the two neighbours
of $i$ in $\mathcal L'$, and let $m$ and $n$ be the two neighbours of
$i$ in $\mathcal L$.  The two sets $\{j,k\}$ and $\{m,n\}$ may overlap,
depending on the two loop configurations.  The full overlap integrand
then contains, at site $i$, the local factor
\begin{equation*}
\left(u_i^*  v_j^* - v_i^*  u_j^*\right)\left(u_i^*  v_k^* - v_i^*  u_k^*\right)\left(u_i  v_m - v_i  u_m\right)\left(u_i  v_n - v_i  u_n\right),
\end{equation*}
and no other factor in the integrand contains $u_i$ or $v_i$.  Expanding
this expression gives $2^4=16$ local terms.  The purpose of the arrow
representation below is to keep track of which of these terms survive
the angular integration and what weights they carry.

We first define an arrow convention for the ket factors.  A blue arrow
from site $p$ to site $q$ represents the factor $u_p v_q$:
\begin{center}
\begin{pspicture}(0,0.5)(5,1.5)
\psline[linewidth=0.06cm,linearc=1.2,linecolor=blue](1,1)(2,0)(3,1)
\psarc[linewidth=0.12cm,linecolor=blue]{->}(2,1.2){0.705}{-89}{-85}
\psdots[dotscale=1.5](1,1)(3,1)
\rput(0.6,1){$\mathlarger{\mathlarger{p}}$}
\rput(3.4,1){$\mathlarger{\mathlarger{q}}$}
\rput(2,1){$\mathlarger{\mathlarger{u_p v_q}}$}
\end{pspicture}
\end{center}
The direction of the arrow is the direction of the phase flow:
$u_pv_q$ carries the phase factor
$e^{i(\phi_p-\phi_q)/2}$, so the arrow points from the site carrying
the positive $\phi$ phase to the site carrying the negative $\phi$
phase.  Consequently, the factor $v_p u_q$ is represented by the
opposite blue arrow, from $q$ to $p$.

For the bra factors, a red arrow from site $p$ to site $q$ represents
$v_p^*u_q^*$:
\begin{center}
\begin{pspicture}(0,-0.3)(5,0.8)
\psline[linewidth=0.06cm,linearc=1.2,linecolor=red](1,0)(2,1)(3,0)
\psarc[linewidth=0.12cm,linecolor=red]{-<}(2,-0.2){0.705}{87}{93}
\psdots[dotscale=1.5](1,0)(3,0)
\rput(0.6,0){$\mathlarger{\mathlarger{p}}$}
\rput(3.4,0){$\mathlarger{\mathlarger{q}}$}
\rput(2,0){$\mathlarger{\mathlarger{v_p^* u_q^*}}$}
\end{pspicture}
\end{center}
This is again the direction of the phase flow, since
$v_p^*u_q^*$ carries the phase factor
$e^{i(\phi_p-\phi_q)/2}$.  It is useful to regard the blue ket arrows
as living in a bottom layer and the red bra arrows as living in a top
layer.  The colours will then mainly serve to remind us which layer an
arrow belongs to.

With this convention, the factor associated with a given site depends
on both the layer of the arrow and whether the arrow is incoming or
outgoing.  The four possibilities are summarized in
Table~\ref{tab:arrows}.
\begin{table}[h!]
  \centering
  \begin{tabular}{ |c| c| }
  \hline
     \textbf{Arrow type and direction} & \textbf{Factor} \\
     \hline\hline
    top level (red), outgoing & $v^*=\sin ({\theta}/{2}) \exp({i\phi/2})$ \\
    \hline
     top level (red), incoming & $u^*=\cos ({\theta}/{2}) \exp({-i\phi/2})$\\
    \hline
    bottom level (blue), outgoing & $u=\cos ({\theta}/{2}) \exp({i\phi/2})$\\
    \hline
    bottom level (blue), incoming & $v=\sin ({\theta}/{2}) \exp({-i\phi/2})$
    \\ [1ex]
    \hline
\end{tabular}
  \caption{Local arrow conventions used in the two-layer representation
of the coherent-state overlap.  Red arrows belong to the top layer and
come from the complex-conjugated bra factors, while blue arrows belong
to the bottom layer and come from the ket factors.  At a given site, an
outgoing arrow carries the phase factor $e^{+i\phi/2}$ and an incoming
arrow carries $e^{-i\phi/2}$.  The corresponding $\theta$-dependent
factors are listed in the second column.}
\label{tab:arrows}
\end{table}

We also need a sign convention.  On the checkerboard lattice, after
choosing horizontal and vertical dimers as the local basis, the
relevant bonds lie on the underlying square lattice, which is
bipartite.  We associate a factor of $(-1)$ with a blue bottom-layer
arrow directed from sublattice $B$ to sublattice $A$, and with a red
top-layer arrow directed from sublattice $A$ to sublattice $B$.  An
analogous convention can be chosen on the pyrochlore lattice once one
omits one of the three local dimerizations from the chosen basis, as
illustrated in Fig.~\ref{fig:pyrochlore_configs}.  Equivalently, using
the site-colouring scheme of Ref.~[\onlinecite{Parameswaran2009}], we
may restrict the singlet basis to singlets between red-blue/yellow-green
sites and red-yellow/green-blue sites, and designate red and green
sites as sublattice $A$ and blue and yellow sites as sublattice $B$ for
the purpose of this sign convention.

We now identify which terms in the local expansion survive the
coherent-state integration.  At every site there are two red arrows and
two blue arrows, each of which may be incoming or outgoing.  From
Table~\ref{tab:arrows}, every outgoing arrow contributes a phase
$e^{+i\phi_i/2}$, whereas every incoming arrow contributes
$e^{-i\phi_i/2}$.  Thus a local term carries the phase
\[
    \exp\left[
    \frac{i}{2}
    \left(N_{\rm out}-N_{\rm in}\right)\phi_i
    \right],
\]
where $N_{\rm out}$ and $N_{\rm in}$ are the numbers of outgoing and
incoming arrows at site $i$.  The integral over $\phi_i$ vanishes
unless $N_{\rm out}=N_{\rm in}$.  Explicitly,
\begin{equation*}
\int_{0}^{2\pi} e^{\pm i\phi}d\phi = \int_{0}^{2\pi} e^{\pm 2i\phi}d\phi = 0 .
\end{equation*}

The same selection rule applies in the numerator of a spin--spin
correlation function.  Indeed, by rotational invariance,
\begin{multline}
\label{eq:overlap_corr}
\int \prod_{k} d \boldsymbol{\Omega}_k  \left(\Psi^{\mathcal{L}'}(\Omega)\right)^\ast \! \Psi^{\mathcal{L}}(\Omega)\; \boldsymbol{\Omega}_i\cdot\boldsymbol{\Omega}_j \\
= 3 \int \prod_{k} d \boldsymbol{\Omega}_k  \left(\Psi^{\mathcal{L}'}(\Omega)\right)^\ast \!\Psi^{\mathcal{L}}(\Omega)\; {\Omega}_i^z{\Omega}_j^z .
\end{multline}
Here $\Omega_i^z\Omega_j^z=\cos\theta_i\cos\theta_j$ contains no
$\phi$ dependence.  Hence the $\phi$ integration imposes the same
condition as in the norm calculation.

We therefore conclude that only local arrow configurations with two
incoming and two outgoing arrows survive at every site.  There are
\({4\choose 2}=6\) such configurations.  Three of them are
\begin{center}
\begin{pspicture}(0,0.5)(7.5,1.5)
\psarc[linewidth=0.1cm,linecolor=blue]{->}(1.707,1.707){1}{225}{265}
\psarc[linewidth=0.08cm,linecolor=blue](1.707,1.707){1}{225}{270}
\psarc[linewidth=0.1cm,linecolor=blue]{>-}(0.293,1.707){1}{280}{315}
\psarc[linewidth=0.08cm,linecolor=blue](0.293,1.707){1}{270}{315}
\psarc[linewidth=0.1cm,linecolor=red]{<-}(1.707,0.293){1}{95}{135}
\psarc[linewidth=0.08cm,linecolor=red](1.707,0.293){1}{90}{135}
\psarc[linewidth=0.1cm,linecolor=red]{-<}(0.293,0.293){1}{45}{80}
\psarc[linewidth=0.08cm,linecolor=red](0.293,0.293){1}{45}{90}
\psarc[linewidth=0.1cm,linecolor=blue]{->}(4.207,1.707){1}{225}{265}
\psarc[linewidth=0.08cm,linecolor=blue](4.207,1.707){1}{225}{270}
\psarc[linewidth=0.1cm,linecolor=blue]{>-}(2.793,1.707){1}{280}{315}
\psarc[linewidth=0.08cm,linecolor=blue](2.793,1.707){1}{270}{315}
\psarc[linewidth=0.1cm,linecolor=red]{>-}(4.207,0.293){1}{100}{135}
\psarc[linewidth=0.08cm,linecolor=red](4.207,0.293){1}{90}{135}
\psarc[linewidth=0.1cm,linecolor=red]{->}(2.793,0.293){1}{45}{85}
\psarc[linewidth=0.08cm,linecolor=red](2.793,0.293){1}{45}{90}
\psarc[linewidth=0.1cm,linecolor=blue]{->}(6.707,1.707){1}{225}{265}
\psarc[linewidth=0.08cm,linecolor=blue](6.707,1.707){1}{225}{270}
\psarc[linewidth=0.1cm,linecolor=blue]{<-}(5.293,1.707){1}{275}{315}
\psarc[linewidth=0.08cm,linecolor=blue](5.293,1.707){1}{270}{315}
\psarc[linewidth=0.1cm,linecolor=red]{>-}(6.707,0.293){1}{100}{135}
\psarc[linewidth=0.08cm,linecolor=red](6.707,0.293){1}{90}{135}
\psarc[linewidth=0.1cm,linecolor=red]{-<}(5.293,0.293){1}{45}{80}
\psarc[linewidth=0.08cm,linecolor=red](5.293,0.293){1}{45}{90}
\psdots[dotscale=1.5](1,1)(3.5,1)(6,1)
\end{pspicture}
\end{center}
and the other three are obtained by reversing all arrows in these
pictures.  These are the six local vertices of a six-vertex model,
although here the vertices are not arranged as a two-dimensional
square-lattice vertex model.  Rather, they live on the doubled
transition graph built from the two loop configurations
$\mathcal L'$ and $\mathcal L$.

A useful feature of the sign convention above is that the minus signs
always occur in pairs for each allowed vertex, independent of whether
the site belongs to sublattice $A$ or $B$.  Thus every allowed local
vertex has a non-negative weight.  This positivity is what makes the
norm of a positive-amplitude loop superposition interpretable as a
classical partition function.

The weights of the allowed vertices are obtained by integrating over
$\theta_i$.  With the numbering corresponding to the order of the
vertices shown above, the first two weights are
\begin{subequations}\label{eq:vertex_weights}
  \begin{multline}\label{eq:vertex_weight12}
    w_1=w_2 =\int d\Omega_i \cos^2{\frac{\theta_i}{2}} \sin^2{\frac{\theta_i}{2}} \\
    = 2\pi \int_0^\pi d\theta_i \sin\theta_i \cos^2{\frac{\theta_i}{2}} \sin^2{\frac{\theta_i}{2}} = \frac{2\pi}{3}
  \end{multline}
and the third is
    \begin{multline}\label{eq:vertex_weight3}
    w_3 = \int d\Omega_i \cos^4{\frac{\theta_i}{2}}  \\
    = 2\pi \int_0^\pi d\theta_i \sin\theta_i \cos^4{\frac{\theta_i}{2}} = \frac{4\pi}{3}.
  \end{multline}
\end{subequations}
The vertices obtained by reversing all arrows have the same weights
with $u$ and $v$ interchanged: $w_4=w_5=2\pi/3$ and $w_6=4\pi/3$.
For the last equality we used
\[
    \int d\Omega_i\,
    \cos^4\left(\frac{\theta_i}{2}\right)
    =
    \int d\Omega_i\,
    \sin^4\left(\frac{\theta_i}{2}\right).
\]
Thus
\begin{equation}\label{eq:vertex_rel_weights}
w_1=w_2=w_4=w_5=\frac{2\pi}{3} = \frac{w_3}{2}= \frac{w_6}{2} \, .
\end{equation}

Equation~\eqref{eq:vertex_rel_weights} has an important graphical
interpretation.  We may assign the common weight
\[
    w=\frac{2\pi}{3}
\]
to each elementary resolution connecting the top and bottom layers.
The vertices with weight $w$ have one such resolution, while the
vertices with weight $2w$ have two possible resolutions.  In all
resolutions, the arrows change layers at each site.  This is shown
graphically below.

\begin{center}
\begin{pspicture}(0,0.5)(5,1.5)
\psarc[linewidth=0.1cm,linecolor=blue]{->}(1.707,1.707){1}{225}{265}
\psarc[linewidth=0.08cm,linecolor=blue](1.707,1.707){1}{225}{270}
\psarc[linewidth=0.1cm,linecolor=blue]{>-}(0.293,1.707){1}{280}{315}
\psarc[linewidth=0.08cm,linecolor=blue](0.293,1.707){1}{270}{315}
\psarc[linewidth=0.1cm,linecolor=red]{<-}(1.707,0.293){1}{95}{135}
\psarc[linewidth=0.08cm,linecolor=red](1.707,0.293){1}{90}{135}
\psarc[linewidth=0.1cm,linecolor=red]{-<}(0.293,0.293){1}{45}{80}
\psarc[linewidth=0.08cm,linecolor=red](0.293,0.293){1}{45}{90}
\psarc[linewidth=0.1cm,linecolor=gray]{->}(4.207,1.707){1}{255}{265}
\psarc[linewidth=0.08cm,linecolor=gray](4.207,1.707){1}{235}{270}
\psarc[linewidth=0.1cm,linecolor=gray]{>-}(2.793,1.707){1}{280}{290}
\psarc[linewidth=0.08cm,linecolor=gray](2.793,1.707){1}{270}{315}
\psarc[linewidth=0.1cm,linecolor=gray]{<-}(4.207,0.293){1}{95}{110}
\psarc[linewidth=0.08cm,linecolor=gray](4.207,0.293){1}{90}{135}
\psarc[linewidth=0.1cm,linecolor=gray]{-<}(2.793,0.293){1}{70}{80}
\psarc[linewidth=0.08cm,linecolor=gray](2.793,0.293){1}{55}{90}
\psdots[dotscale=1.5](1,1)
\rput(2.2,1){$\mathlarger{\mathlarger{=}}$}
\end{pspicture}
\end{center}

\begin{center}
\begin{pspicture}(0,0.5)(5,1.5)
\psarc[linewidth=0.1cm,linecolor=blue]{->}(1.707,1.707){1}{225}{265}
\psarc[linewidth=0.08cm,linecolor=blue](1.707,1.707){1}{225}{270}
\psarc[linewidth=0.1cm,linecolor=blue]{>-}(0.293,1.707){1}{280}{315}
\psarc[linewidth=0.08cm,linecolor=blue](0.293,1.707){1}{270}{315}
\psarc[linewidth=0.1cm,linecolor=red]{>-}(1.707,0.293){1}{100}{135}
\psarc[linewidth=0.08cm,linecolor=red](1.707,0.293){1}{90}{135}
\psarc[linewidth=0.1cm,linecolor=red]{->}(0.293,0.293){1}{45}{85}
\psarc[linewidth=0.08cm,linecolor=red](0.293,0.293){1}{45}{90}
\psarc[linewidth=0.1cm,linecolor=gray]{->}(4.207,1.707){1}{255}{265}
\psarc[linewidth=0.08cm,linecolor=gray](4.207,1.707){1}{255}{270}
\psarc[linewidth=0.1cm,linecolor=gray]{>-}(2.793,1.707){1}{280}{290}
\psarc[linewidth=0.08cm,linecolor=gray](2.793,1.707){1}{270}{290}
\psarc[linewidth=0.1cm,linecolor=gray]{>-}(4.207,0.293){1}{100}{110}
\psarc[linewidth=0.08cm,linecolor=gray](4.207,0.293){1}{90}{110}
\psarc[linewidth=0.1cm,linecolor=gray]{->}(2.793,0.293){1}{75}{85}
\psarc[linewidth=0.08cm,linecolor=gray](2.793,0.293){1}{75}{90}
\psarc[linewidth=0.08cm,linecolor=gray](3.1,1){0.25}{-70}{90}
\psarc[linewidth=0.08cm,linecolor=gray](3.9,1){0.25}{100}{270}
\psdots[dotscale=1.5](1,1)
\rput(2.2,1){$\mathlarger{\mathlarger{=}}$}
\end{pspicture}
\end{center}

\begin{center}
\begin{pspicture}(0,0.5)(7.5,1.5)
\psarc[linewidth=0.1cm,linecolor=blue]{->}(1.707,1.707){1}{225}{265}
\psarc[linewidth=0.08cm,linecolor=blue](1.707,1.707){1}{225}{270}
\psarc[linewidth=0.1cm,linecolor=blue]{<-}(0.293,1.707){1}{275}{315}
\psarc[linewidth=0.08cm,linecolor=blue](0.293,1.707){1}{270}{315}
\psarc[linewidth=0.1cm,linecolor=red]{>-}(1.707,0.293){1}{100}{135}
\psarc[linewidth=0.08cm,linecolor=red](1.707,0.293){1}{90}{135}
\psarc[linewidth=0.1cm,linecolor=red]{-<}(0.293,0.293){1}{45}{80}
\psarc[linewidth=0.08cm,linecolor=red](0.293,0.293){1}{45}{90}
\psarc[linewidth=0.1cm,linecolor=gray]{->}(4.207,1.707){1}{255}{265}
\psarc[linewidth=0.08cm,linecolor=gray](4.207,1.707){1}{235}{270}
\psarc[linewidth=0.1cm,linecolor=gray]{<-}(2.793,1.707){1}{275}{288}
\psarc[linewidth=0.08cm,linecolor=gray](2.793,1.707){1}{270}{315}
\psarc[linewidth=0.1cm,linecolor=gray]{>-}(4.207,0.293){1}{100}{110}
\psarc[linewidth=0.08cm,linecolor=gray](4.207,0.293){1}{90}{135}
\psarc[linewidth=0.1cm,linecolor=gray]{-<}(2.793,0.293){1}{70}{80}
\psarc[linewidth=0.08cm,linecolor=gray](2.793,0.293){1}{55}{90}
\psarc[linewidth=0.1cm,linecolor=gray]{->}(6.707,1.707){1}{255}{265}
\psarc[linewidth=0.08cm,linecolor=gray](6.707,1.707){1}{255}{270}
\psarc[linewidth=0.1cm,linecolor=gray]{<-}(5.293,1.707){1}{275}{288}
\psarc[linewidth=0.08cm,linecolor=gray](5.293,1.707){1}{270}{290}
\psarc[linewidth=0.1cm,linecolor=gray]{>-}(6.707,0.293){1}{100}{110}
\psarc[linewidth=0.08cm,linecolor=gray](6.707,0.293){1}{90}{110}
\psarc[linewidth=0.1cm,linecolor=gray]{-<}(5.293,0.293){1}{70}{80}
\psarc[linewidth=0.08cm,linecolor=gray](5.293,0.293){1}{70}{90}
\psarc[linewidth=0.08cm,linecolor=gray](5.6,1){0.25}{-90}{72}
\psarc[linewidth=0.08cm,linecolor=gray](6.4,1){0.25}{100}{270}
\psdots[dotscale=1.5](1,1)
\rput(2.2,1){$\mathlarger{\mathlarger{=}}$}
\rput(4.75,1){$\mathlarger{\mathlarger{+}}$}
\end{pspicture}
\end{center}

The grey segments produced by these resolutions form closed oriented
loops.  Each such loop alternates between the two layers at every site
of the original lattice: it follows a dimer from the ket configuration
$\mathcal L$, changes layer at a site, follows a dimer from the bra
configuration $\mathcal L'$, changes layer again, and so on.  These
loops should not be confused with the original AKLT-like singlet loops
of a single configuration.  They are transition loops built from the
pair of configurations $\mathcal L$ and $\mathcal L'$.  Since
$\mathcal L$ and $\mathcal L'$ need not be the same, these transition
loops need not lie in any single geometric plane.

The common weight $w=2\pi/3$ is now associated with every legitimate
local resolution.  Once this convention is adopted, reversing all
arrows on any one closed transition loop leaves the weight unchanged.
Therefore each closed loop contributes a factor of $2$, corresponding
to its two possible orientations.

We thus obtain the following statistical-mechanical representation of
the overlap integral:
\begin{multline}\label{eq:overlap_int}
\int \prod_{k} d \boldsymbol{\Omega}_k  \left(\Psi^{\mathcal{L}'}(\Omega)\right)^\ast \Psi^{\mathcal{L}}(\Omega) \\
= \left(\frac{2\pi}{3}\right)^N
\sum_{\lambda\in\mathfrak R(\mathcal L',\mathcal L)}
2^{\Lambda(\lambda)} .
\end{multline}
Here $N$ is the number of physical sites.  The set
$\mathfrak R(\mathcal L',\mathcal L)$ denotes all legitimate
undirected resolutions of the doubled transition graph formed from
the singlet bonds in $\mathcal L$ and $\mathcal L'$.  For a given
resolution $\lambda$, $\Lambda(\lambda)$ is the number of closed
transition loops.  The factor $2^{\Lambda(\lambda)}$ counts the two
allowed orientations of each closed loop.  In what follows, the common
factor $\left(2\pi/3\right)^N$ will often be suppressed, since it is
the same for all legitimate resolutions and only multiplies the
partition function by an overall constant.

\section{Spin--spin correlations in the loop representation}
\label{app:loops_correlations}

In Appendix~\ref{app:stat_mech_loops} we developed a loop
representation for overlaps between singlet-loop states.  The purpose
of the present appendix is to extend that representation to
spin--spin correlation functions.  This is useful for two related
reasons.  First, it gives a graphical rule for inserting spin
operators into the loop ensemble.  Second, it makes transparent why
spin correlations are controlled by loop connectivity: if the two spin
insertions are not connected by an appropriate loop in the doubled
representation, their contributions cancel after summing over loop
orientations.

We begin from Eq.~\eqref{eq:overlap_corr}.  The integrations over all
$\boldsymbol{\Omega}_k$ with $k\neq i,j$ are the same as in the overlap
calculation of Appendix~\ref{app:stat_mech_loops}.  Thus all sites
away from the spin insertions carry the same vertex weights as before.
The only new ingredients are the local integrals at the insertion
sites $i$ and $j$.  Taking the spin component to be $S^z$, and using
$\Omega_i^z=\cos\theta_i$, the relevant local integrals are
\begin{subequations}\label{eq:spin_weights}
  \begin{multline}\label{eq:spin_weight1}
\int d\Omega_i \,\Omega_i^z\, \cos^2{\frac{\theta_i}{2}} \sin^2{\frac{\theta_i}{2}}  \\
    = 2\pi \int_0^\pi d\theta_i \sin\theta_i \cos\theta_i
    \cos^2{\frac{\theta_i}{2}} \sin^2{\frac{\theta_i}{2}}
    = 0 \, ,
  \end{multline}
  \begin{multline}\label{eq:spin_weight2}
   \int d\Omega_i \,\Omega_i^z\, \cos^4{\frac{\theta_i}{2}}  \\
    = 2\pi \int_0^\pi d\theta_i \sin\theta_i \cos\theta_i
    \cos^4{\frac{\theta_i}{2}}
    = \frac{2\pi}{3} \, ,
  \end{multline}
    \begin{multline}\label{eq:spin_weight3}
   \int d\Omega_i \,\Omega_i^z\, \sin^4{\frac{\theta_i}{2}}  \\
    = 2\pi \int_0^\pi d\theta_i \sin\theta_i \cos\theta_i
    \sin^4{\frac{\theta_i}{2}}
    = - \frac{2\pi}{3} \, .
  \end{multline}
\end{subequations}
The same formulas hold for the integration over $\boldsymbol{\Omega}_j$.
The vanishing of Eq.~\eqref{eq:spin_weight1} has a simple meaning: a
spin insertion gives zero whenever the local vertex contains equal
powers of $|u_i|^2$ and $|v_i|^2$.  In contrast,
Eqs.~\eqref{eq:spin_weight2} and \eqref{eq:spin_weight3} distinguish
whether the local configuration is weighted by $|u_i|^4$ or by
$|v_i|^4$, and the two possibilities come with opposite signs.  This
sign is the graphical remnant of the factor
$\Omega_i^z=\cos\theta_i$.

These local rules immediately imply a connectivity criterion.  If the
two sites $i$ and $j$ are not connected by either of the two loops
passing through site $i$ in the doubled loop representation, their
spin correlations vanish.  To see this, recall first that reversing
the orientation of any closed loop does not change the weight of a
configuration in the partition function.  In the numerator of the
spin--spin correlation function, however, reversing the orientation of
a loop through site $i$ changes the local factor at that site.  From
Table~\ref{tab:arrows}, reversing a loop through a given site
interchanges
\[
    |v|^2 \leftrightarrow |u|^2 ,
    \qquad
    \text{or equivalently}
    \qquad
    \sin^2\frac{\theta}{2}
    \leftrightarrow
    \cos^2\frac{\theta}{2}.
\]
Reversing both loops passing through the same site therefore either
interchanges
\[
    |v|^4 \leftrightarrow |u|^4 ,
    \qquad
    \text{or equivalently}
    \qquad
    \sin^4\frac{\theta}{2}
    \leftrightarrow
    \cos^4\frac{\theta}{2},
\]
or leaves the mixed factor
$|u|^2|v|^2$ unchanged.  In the first case the contribution changes
sign, since Eq.~\eqref{eq:spin_weight2} is exchanged with
Eq.~\eqref{eq:spin_weight3}.  In the second case the contribution is
zero both before and after the reversal, by Eq.~\eqref{eq:spin_weight1}.
Thus, if no loop carrying the sign change also passes through site
$j$, each contribution can be paired with another contribution of the
same statistical weight and opposite sign.  Their sum is zero.

Consequently, the spin--spin correlation function is bounded by the
probability that the two sites are connected by at least one loop in
the doubled loop representation.  This is the statistical-mechanical
counterpart of the fixed-loop result in Sec.~\ref{sec:gs_correlations}:
spin correlations are not determined only by the Euclidean distance
between two sites, but by the probability that the two spin insertions
remain connected in the loop ensemble.

The same graphical rules also determine the sign of the correlations
whenever the loop representation has non-negative statistical weights.
For a single loop connecting sites $i$ and $j$, the sign is set by the
parity of the number of steps separating the two sites along that loop.
With the sublattice convention used in Appendix~\ref{app:stat_mech_loops}
for the pyrochlore lattice, sites on different sublattices are
separated by an odd number of loop steps, while sites on the same
sublattice are separated by an even number.  If the sites are on
different sublattices, the non-vanishing local factors at $i$ and $j$
are of opposite type, one of Eq.~\eqref{eq:spin_weight2} and the other
of Eq.~\eqref{eq:spin_weight3}, giving a negative product.  If the
sites are on the same sublattice, the two non-vanishing local factors
are of the same type, giving a positive product.  Thus, whenever the
plasma analogy is a genuine positive-weight statistical ensemble, the
correlations have a staggered, N\'eel-like sign structure.

There is, however, an important caveat.  The preceding sign argument
uses the interpretation of the squared wavefunction as a positive
statistical weight.  If the superposition of loop states contains
relative negative signs, then some off-diagonal terms in
$|\Psi(\Omega)|^2$ also carry negative signs, and the partition-function
interpretation is lost.  A simple example is the state made solely of
diagonal dimers.  Its correlations are nonzero along the diagonals and
alternate from site to site along each diagonal.  In terms of the
underlying square-lattice sublattices, however, all sites along a
diagonal belong to the same sublattice.  There is no contradiction:
when the diagonal-dimer state is expressed in the vertical/horizontal
dimer basis, it contains relative negative amplitudes, so the
positive-weight sign argument no longer applies.

We now give a more explicit graphical representation of the spin
insertions.  The graphical form of the three local integrals in
Eq.~\eqref{eq:spin_weights} is as follows:
\begin{center}
\begin{pspicture}(0,0.5)(5,1.5)
\psarc[linewidth=0.1cm,linecolor=blue]{->}(1.707,1.707){1}{225}{265}
\psarc[linewidth=0.08cm,linecolor=blue](1.707,1.707){1}{225}{270}
\psarc[linewidth=0.1cm,linecolor=blue]{>-}(0.293,1.707){1}{280}{315}
\psarc[linewidth=0.08cm,linecolor=blue](0.293,1.707){1}{270}{315}
\psarc[linewidth=0.1cm,linecolor=red]{<-}(1.707,0.293){1}{95}{135}
\psarc[linewidth=0.08cm,linecolor=red](1.707,0.293){1}{90}{135}
\psarc[linewidth=0.1cm,linecolor=red]{-<}(0.293,0.293){1}{45}{80}
\psarc[linewidth=0.08cm,linecolor=red](0.293,0.293){1}{45}{90}
\psarc[linewidth=0.1cm,linecolor=blue]{-<}(4.207,1.707){1}{225}{260}
\psarc[linewidth=0.08cm,linecolor=blue](4.207,1.707){1}{225}{270}
\psarc[linewidth=0.1cm,linecolor=blue]{<-}(2.793,1.707){1}{275}{315}
\psarc[linewidth=0.08cm,linecolor=blue](2.793,1.707){1}{270}{315}
\psarc[linewidth=0.1cm,linecolor=red]{<-}(4.207,0.293){1}{95}{135}
\psarc[linewidth=0.08cm,linecolor=red](4.207,0.293){1}{90}{135}
\psarc[linewidth=0.1cm,linecolor=red]{-<}(2.793,0.293){1}{45}{80}
\psarc[linewidth=0.08cm,linecolor=red](2.793,0.293){1}{45}{90}
\psdots[dotscale=2,dotstyle=diamond](1,1)(3.5,1)
\rput(2.2,1){$\mathlarger{\mathlarger{=}}$}
\rput(4.9,1){$\mathlarger{\mathlarger{=\; 0}}$}
\end{pspicture}
\end{center}
with two more vanishing contributions obtained by reversing all arrows
of the two vertices above.  The diamond-shaped vertex denotes the
insertion of the operator $S^z$, and is used to distinguish this
vertex from the ordinary overlap vertices with weights $w$ in
Eqs.~\eqref{eq:vertex_weights}.

The two nonzero spin-insertion vertices are
\begin{center}
\begin{pspicture}(0,0.5)(3.5,1.5)
\psarc[linewidth=0.1cm,linecolor=blue]{->}(1.707,1.707){1}{225}{265}
\psarc[linewidth=0.08cm,linecolor=blue](1.707,1.707){1}{225}{270}
\psarc[linewidth=0.1cm,linecolor=blue]{<-}(0.293,1.707){1}{275}{315}
\psarc[linewidth=0.08cm,linecolor=blue](0.293,1.707){1}{270}{315}
\psarc[linewidth=0.1cm,linecolor=red]{>-}(1.707,0.293){1}{100}{135}
\psarc[linewidth=0.08cm,linecolor=red](1.707,0.293){1}{90}{135}
\psarc[linewidth=0.1cm,linecolor=red]{-<}(0.293,0.293){1}{45}{80}
\psarc[linewidth=0.08cm,linecolor=red](0.293,0.293){1}{45}{90}
\psdots[dotscale=2,dotstyle=diamond](1,1)
\rput(2.5,1){$\mathlarger{\mathlarger{=\;\frac{2\pi}{3}}} \, $;}
\end{pspicture}
\end{center}
\begin{center}
\begin{pspicture}(0,0.5)(3.5,1.5)
\psarc[linewidth=0.1cm,linecolor=blue]{-<}(1.707,1.707){1}{225}{260}
\psarc[linewidth=0.08cm,linecolor=blue](1.707,1.707){1}{225}{270}
\psarc[linewidth=0.1cm,linecolor=blue]{>-}(0.293,1.707){1}{280}{315}
\psarc[linewidth=0.08cm,linecolor=blue](0.293,1.707){1}{270}{315}
\psarc[linewidth=0.1cm,linecolor=red]{<-}(1.707,0.293){1}{95}{135}
\psarc[linewidth=0.08cm,linecolor=red](1.707,0.293){1}{90}{135}
\psarc[linewidth=0.1cm,linecolor=red]{->}(0.293,0.293){1}{45}{85}
\psarc[linewidth=0.08cm,linecolor=red](0.293,0.293){1}{45}{90}
\psdots[dotscale=2,dotstyle=diamond](1,1)
\rput(2.7,1){$\mathlarger{\mathlarger{=\,-\frac{2\pi}{3}}} \, $.}
\end{pspicture}
\end{center}
Two points should be emphasized.  First, the value of the
spin-insertion vertex is determined solely by the net flow of arrows
between the two layers.  Second, the corresponding ordinary overlap
vertices have weight $w_3=w_6=4\pi/3$, whereas the absolute value of
the spin-insertion vertex is only $2\pi/3$.  Thus a spin-insertion
vertex carries one half of the magnitude of the corresponding ordinary
overlap vertex, together with a sign fixed by the direction of the net
flow.

This observation gives the following rule for the numerator of
Eq.~\eqref{eq:corr_function_gen}.  All sites other than $i$ and $j$
are treated exactly as in the overlap calculation.  At the two
spin-insertion sites, one may use either of two equivalent
prescriptions.

First, one may resolve each diamond-shaped vertex in the same two ways
as an ordinary overlap vertex, but assign an additional factor of
$1/2$ to each spin-insertion site.  The two insertions therefore
contribute an overall factor of $1/4$.  The sign of each contribution
is then determined by the net flow of arrows between the two layers, as
explained above.

Equivalently, one may choose only one of the two possible ways of
connecting the upper-layer dimers to the lower-layer dimers at each
spin-insertion site, and omit the explicit factor of $1/2$ at that
site.  The final answer is independent of which of the two local
resolutions is chosen.  This equivalence is the local graphical form
of the statement that the two resolutions may either be summed and
weighted by $1/2$, or represented by a single effective spin-insertion
resolution.

We now test this representation for a fixed loop state,
$\mathcal L=\mathcal L'$.  In this case, two spins can be correlated
only if they belong to the same loop of dimers in $\mathcal L$.  These
AKLT-like dimer loops should not be confused with the doubled loops
$\lambda$ appearing in the overlap representation of
Appendix~\ref{app:stat_mech_loops}.  As in the main text, all dimer
loops that do not contain either $i$ or $j$ factor out and cancel
between numerator and denominator.  It is therefore enough to consider
the single dimer loop passing through sites $i$ and $j$.  Let this loop
have length $\ell$, and let $s$ be the number of steps along one of the
two arcs connecting $i$ and $j$.  The other arc has length $\ell-s$.

We first evaluate the denominator, i.e., the overlap partition function
for this doubled loop.  Suppose initially that at every site the
connection is of the pass-through type,
\begin{center}
\begin{pspicture}(2.5,0.5)(5,1.5)
\psarc[linewidth=0.08cm,linecolor=gray](4.207,1.707){1}{235}{270}
\psarc[linewidth=0.08cm,linecolor=gray](2.793,1.707){1}{270}{315}
\psarc[linewidth=0.08cm,linecolor=gray](4.207,0.293){1}{90}{135}
\psarc[linewidth=0.08cm,linecolor=gray](2.793,0.293){1}{55}{90}
\end{pspicture}
\end{center}
so that no local turn-back occurs.  The doubled representation then
contains two loops tracing the original dimer loop.  Each of these two
loops can be oriented independently, giving a contribution $2^2=4$.
Here and below the common factor $2\pi/3$ per site has been pulled out
and will be restored only as an overall multiplicative constant.

The remaining denominator configurations contain at least one
turn-back of the form
\begin{center}
\begin{pspicture}(2.5,0.5)(5,1.5)
\psarc[linewidth=0.08cm,linecolor=gray](4.207,1.707){1}{250}{270}
\psarc[linewidth=0.08cm,linecolor=gray](2.793,1.707){1}{270}{290}
\psarc[linewidth=0.08cm,linecolor=gray](4.207,0.293){1}{90}{110}
\psarc[linewidth=0.08cm,linecolor=gray](2.793,0.293){1}{70}{90}
\psarc[linewidth=0.08cm,linecolor=gray](3.1,1){0.23}{-80}{80}
\psarc[linewidth=0.08cm,linecolor=gray](3.9,1){0.23}{100}{260}
\end{pspicture}
\end{center}
If there are $t$ such turn-backs on a loop of length $\ell$, their
positions can be chosen in ${\ell\choose t}$ ways.  The turn-backs cut
the doubled loop representation into $t$ closed loops, and each such
loop can be independently oriented in two ways.  Hence the contribution
with $t\geq 1$ turn-backs is ${\ell\choose t}2^t$.  Summing over all
possible $t$ gives
\begin{equation*}
  \sum_{t=1}^{\ell}{\ell \choose t} 2^t
  =
  (1+2)^\ell - 1
  =
  3^\ell - 1 \, .
\end{equation*}
Adding the no-turn-back contribution gives the denominator
\begin{equation}
\label{eq:AKLT_part_function}
Z = 4 + 3^\ell - 1 = 3^\ell+3 = 3^\ell \left(1+3^{1-\ell}\right) .
\end{equation}
The common factor $(2\pi/3)^\ell$ has been omitted from
Eq.~\eqref{eq:AKLT_part_function}, since the same factor appears in
the numerator and cancels in the normalized correlation function.

We next evaluate the numerator of Eq.~\eqref{eq:corr_function_gen} for
$S_i^zS_j^z$.  We consider three cases, according to how the
diamond-shaped vertices at sites $i$ and $j$ are resolved.

\emph{(i) Both spin-insertion vertices are pass-through vertices.}
\begin{center}
\begin{pspicture}(2.5,0.5)(5,1.5)
\psarc[linewidth=0.08cm,linecolor=gray](4.207,1.707){1}{235}{270}
\psarc[linewidth=0.08cm,linecolor=gray](2.793,1.707){1}{270}{315}
\psarc[linewidth=0.08cm,linecolor=gray](4.207,0.293){1}{90}{135}
\psarc[linewidth=0.08cm,linecolor=gray](2.793,0.293){1}{55}{90}
\end{pspicture}
\end{center}
First suppose that there are no additional turn-backs anywhere on the
loop.  There are four possible orientations of the two doubled loops,
but only the two configurations with opposite loop orientations give a
nonzero spin-insertion contribution.  Their combined contribution is
\[
    2(-1)^s ,
\]
where the factor $(-1)^s$ is the staggered sign accumulated along the
arc connecting $i$ and $j$.

Now allow additional turn-backs away from $i$ and $j$.  A nonzero
connected contribution is obtained only when all additional turn-backs
lie on the same arc between $i$ and $j$.  If turn-backs occur on both
arcs, the two spin insertions no longer lie on a common doubled loop,
and the orientation-reversal cancellation described above makes the
net contribution vanish.  For $t$ turn-backs, the number of allowed
placements is therefore
\[
    {s-1\choose t}+{\ell-s-1\choose t},
\]
where the first term places all turn-backs on the arc of length $s$
and the second term places all turn-backs on the complementary arc of
length $\ell-s$.  The corresponding orientation factor is $2^t$, and
the contribution is
\begin{equation*}
  2^{t}(-1)^s \left[{{s-1} \choose t}+ {{\ell-s-1} \choose t}\right] .
\end{equation*}
The allowed values of $t$ are understood to be limited by the upper
entries of the corresponding binomial coefficients.  Thus the total
contribution in this case is
\begin{multline*}
 (-1)^s\left\{ 2+ \sum_{t=1}^{s-1} 2^{t} {{s-1} \choose {t}} + \sum_{t=1}^{\ell-s-1} 2^{t} {{\ell-s-1} \choose {t}}\right\}\\
 = (-1)^s\left\{\sum_{t=0}^{s-1} 2^{t} {{s-1} \choose {t}} + \sum_{t=0}^{\ell-s-1} 2^{t} {{\ell-s-1} \choose {t}}\right\}\\
 = (-1)^s\left(3^{s-1}+3^{\ell-s-1}\right) \, .
\end{multline*}

\emph{(ii) One spin-insertion vertex is pass-through and the other is
a turn-back.}  Take, for example, site $i$ to be pass-through,
\begin{center}
\begin{pspicture}(2.5,0.5)(5,1.5)
\psarc[linewidth=0.08cm,linecolor=gray](4.207,1.707){1}{235}{270}
\psarc[linewidth=0.08cm,linecolor=gray](2.793,1.707){1}{270}{315}
\psarc[linewidth=0.08cm,linecolor=gray](4.207,0.293){1}{90}{135}
\psarc[linewidth=0.08cm,linecolor=gray](2.793,0.293){1}{55}{90}
\end{pspicture}
\end{center}
and site $j$ to be a turn-back:
\begin{center}
\begin{pspicture}(2.5,0.5)(5,1.5)
\psarc[linewidth=0.08cm,linecolor=gray](4.207,1.707){1}{250}{270}
\psarc[linewidth=0.08cm,linecolor=gray](2.793,1.707){1}{270}{290}
\psarc[linewidth=0.08cm,linecolor=gray](4.207,0.293){1}{90}{110}
\psarc[linewidth=0.08cm,linecolor=gray](2.793,0.293){1}{70}{90}
\psarc[linewidth=0.08cm,linecolor=gray](3.1,1){0.23}{-80}{80}
\psarc[linewidth=0.08cm,linecolor=gray](3.9,1){0.23}{100}{260}
\end{pspicture}
\end{center}
If there are no additional turn-backs, there is a single loop carrying
both spin insertions.  It has two possible orientations, giving
$2(-1)^s$.  If there are $t$ additional turn-backs, they must again all
lie on the same arc between $i$ and $j$; otherwise the connected
contribution cancels.  The configuration then has $t+1$ loops, but for
each orientation of the loop containing both spin insertions, the
orientation of the other loop through the turn-back site is fixed by
the requirement that the spin-insertion vertex carry a net flow between
the layers.  Thus the number of independent orientation choices is
again $2^t$.  The total contribution is therefore
\begin{multline*}
 (-1)^s\left\{ 2+ \sum_{t=1}^{s-1} 2^{t} {{s-1} \choose {t}} + \sum_{t=1}^{\ell-s-1} 2^{t} {{\ell-s-1} \choose {t}}\right\}\\
 = (-1)^s\left(3^{s-1}+3^{\ell-s-1}\right) \, .
\end{multline*}
The same contribution is obtained when the turn-back is placed at site
$i$ and the pass-through vertex at site $j$.

\emph{(iii) Both spin-insertion vertices are turn-backs.}
If there are no additional turn-backs, the two turn-backs at sites
$i$ and $j$ form two loops.  Only the two configurations in which these
two loops have opposite orientations give a nonzero spin-insertion
contribution, so the result is again $2(-1)^s$.  Configurations in
which the two loops have the same orientation give zero.

With $t$ additional turn-backs, all additional turn-backs must again
lie on the same arc between $i$ and $j$ in order to keep the spin
insertions connected.  The number of placements is
\[
    {s-1\choose t}+{\ell-s-1\choose t}.
\]
There are now $t+2$ loops.  However, the loop connecting the two spin
insertions and the two loops adjacent to it have only two overall
orientation choices that give a nonzero spin-insertion contribution.
The remaining $t-1$ loops, when present, can be oriented independently.
Equivalently, the total number of independent orientation choices is
$2^t$.  Hence the net contribution is again
\begin{multline*}
 (-1)^s\left\{ 2+ \sum_{t=1}^{s-1} 2^{t} {{s-1} \choose {t}} + \sum_{t=1}^{\ell-s-1} 2^{t} {{\ell-s-1} \choose {t}}\right\}\\
 = (-1)^s\left(3^{s-1}+3^{\ell-s-1}\right) \, .
\end{multline*}

Thus all possible choices of the local resolutions at the two
spin-insertion vertices give the same result.  We may therefore either
choose one representative resolution at each insertion site, or sum
over all four local choices and multiply by the compensating factor
$1/4$.  Including the coherent-state prefactor for
$S_i^zS_j^z$, the normalized correlation function is
 \begin{multline}\label{eq:corr_function_loop_Sz}
 \frac{\langle\Psi \lvert {S}_i^z\cdot {S}_j^z\rvert\Psi\rangle}{\langle\Psi |\Psi\rangle}
 = 4 \,(-1)^s\,\frac{3^{s-1}+3^{\ell-s-1}}{3^\ell \left(1+3^{1-\ell}\right)}\\
  =\frac{4}{3} \,(-1)^s\,\frac{3^{-s}+3^{s-\ell}}{1+ 3^{1-\ell}} \, .
\end{multline}
The factor of $4$ in the first line is the coherent-state prefactor
appropriate to two spin-$1$ operators at distinct sites.  Finally, by
SU(2) symmetry,
\[
    \langle\Psi \lvert \mathbf{S}_i\cdot\mathbf{S}_j\rvert\Psi\rangle
    =
    3\langle\Psi \lvert {S}_i^z {S}_j^z\rvert\Psi\rangle .
\]
We therefore recover the fixed-loop result quoted in the main text,
 \begin{equation*}\label{eq:corr_function_loop_again}
 \frac{\langle\Psi \lvert \mathbf{S}_i\cdot\mathbf{S}_j\rvert\Psi\rangle}{\langle\Psi |\Psi\rangle}
  =4 \,(-1)^s\,\frac{3^{-s}+3^{s-\ell}}{1+ 3^{1-\ell}} \, .
\end{equation*}

We close by applying the loop representation to a state for which the
plasma analogy gives a positive statistical ensemble.  Consider the
product state
\[
    \lvert\Psi_0\rangle =
    \bigotimes_{\alpha =1}^{N_\boxtimes} \lvert A_\alpha\rangle ,
\]
which contains no relative negative signs when written as a
superposition of vertical and horizontal dimer configurations.  It is
an equal-amplitude superposition of all product states consisting of
vertical and horizontal dimers on all crossed plaquettes or
tetrahedra.  Its norm can therefore be represented as a loop partition
function,
\begin{equation}\label{eq:part_function}
  Z=\int \prod_{k} d \boldsymbol{\Omega}_k  \left| \Psi_0(\Omega) \right|^2
  \propto \sum_{\text{all pairings } \eta} 2^{\Lambda(\eta)} \, ,
\end{equation}
where $\eta$ denotes a global choice of dimer configurations in the
top and bottom layers, together with a local resolution connecting the
two top-layer dimers to the two bottom-layer dimers at every site.  In
the state $|\Psi_0\rangle$, each crossed plaquette or tetrahedron is in
the equal-amplitude superposition of the two basis dimerizations
appearing in $|A\rangle$.  Hence, at a given site, each of the two
adjacent crossed plaquettes or tetrahedra contributes one of two
possible dimers.  There are therefore \(2\times2=4\) possible incident
dimer pairs in the top layer and similarly \(4\) in the bottom layer.
Once a top-layer pair and a bottom-layer pair are fixed, there are two
ways to connect them locally.  Thus, the local resolution problem has
the structure \(4\times4\times2\), subject to the global compatibility
constraints imposed by the neighbouring plaquettes or tetrahedra.
For a global pairing configuration $\eta$, $\Lambda(\eta)$ denotes the
number of closed loops in the resulting doubled-loop configuration.
Each such loop can be oriented in two ways, giving the factor
$2^{\Lambda(\eta)}$.  The common overall factor $(2\pi/3)^N$ has again
been omitted.

\section{Dimer matrix elements calculation}
\label{app:matrix_elements_calculation}

In this appendix we describe how the dimer-basis matrix elements used
in Sec.~\ref{sec:eff_Hamiltians} are estimated from the coherent-state
expression in Eq.~\eqref{eq:matrix_element_gen}.  The same procedure
applies to the nearest-neighbour matrix elements in
Eqs.~\eqref{eq:diag_element_13}--\eqref{eq:offdiag_element_1x}, to the
off-diagonal nearest-neighbour elements in
Eqs.~\eqref{eq:offdiag_nn_elements_1} and
\eqref{eq:offdiag_nn_elements_2}, and to the next-nearest-neighbour
matrix elements listed in Tables~\ref{table:matrix_elements} and
\ref{table:rotated_matrix_elements}.  We give two representative
calculations: the nearest-neighbour off-diagonal element
$\langle \hdimers | \mathbf{S}_1 \cdot \mathbf{S}_2 | \vdimers \rangle$
on a single crossed plaquette, and the next-nearest-neighbour element
$\langle \hdimers, \hdimers | \mathbf{S}_3 \cdot \mathbf{S}_7
| \vdimers, \hdimers \rangle$ on a pair of adjacent crossed
plaquettes.  All other entries are obtained in the same way, up to
lattice symmetries and the signs fixed by the arrow conventions.

We first consider
$\langle \hdimers | \mathbf{S}_1 \cdot \mathbf{S}_2 | \vdimers \rangle$.
According to Eq.~\eqref{eq:matrix_element_gen}, this requires the
matrix element
$\langle \psi^{\shdimers} | \mathbf{S}_1 \cdot \mathbf{S}_2
| \psi^{\svdimers} \rangle$ in the numerator and the norm factors
$\sqrt{\langle \psi^{\svdimers} | \psi^{\svdimers} \rangle}
\sqrt{\langle \psi^{\shdimers} | \psi^{\shdimers} \rangle}$
in the denominator.  Here $| \psi^{\svdimers} \rangle$ denotes the
spin-coherent representation of the loop state whose singlet covering
on the crossed plaquette under consideration is $| \vdimers \rangle$;
the notation $| \psi^{\shdimers} \rangle$ is defined analogously.

The relevant singlet coverings are shown in
Fig.~\ref{fig:one-chequer_off-diag}.  Panel (a) represents the
numerator, with ket singlets in red and bra singlets in blue.  Panels
(b) and (c) represent the two norm factors in the denominator.  As in
Appendices~\ref{app:stat_mech_loops} and
\ref{app:loops_correlations}, the local coherent-state integrals can
be evaluated by resolving each vertex into allowed
$2\text{-in}$-$2\text{-out}$ arrow configurations.  In the present
local estimate we retain only the resolutions that connect the two
sites through the local segment shown in the figure.  Contributions in
which sites $1$ and $2$ are connected only through a longer indirect
loop segment are discarded; the purple turn-back curves in
Fig.~\ref{fig:one-chequer_off-diag} indicate these omitted
connections.

\begin{figure}[h!]
  \centering
  \includegraphics[width=\linewidth]{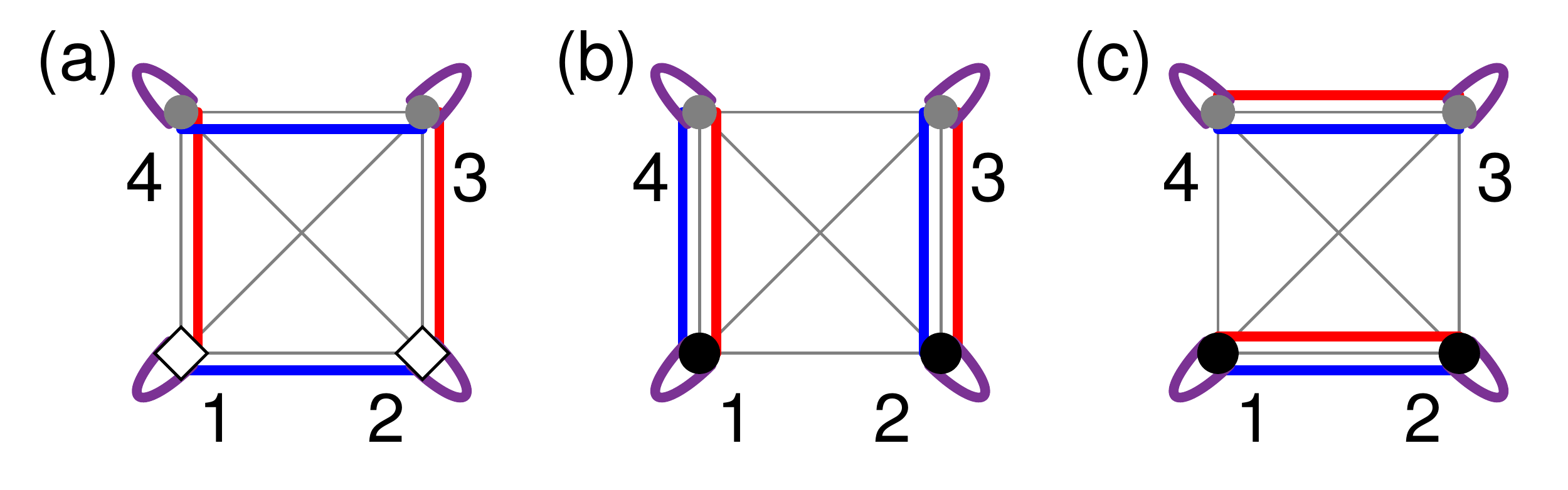}
  \caption{Local singlet coverings used to evaluate the off-diagonal
matrix element on the lower crossed plaquette of
Fig.~\ref{fig:two_chequers}.  Red lines denote ket singlets and blue
lines denote bra singlets.  Purple arcs indicate local turn-backs,
where the loop does not connect to another site of the crossed
plaquette.  Empty diamonds and black circles follow the vertex
conventions of Appendices~\ref{app:stat_mech_loops} and
\ref{app:loops_correlations}; grey sites do not carry spin insertions
in this example.  (a) Numerator configuration for the ket state
$|\vdimers\rangle$ and bra state $\langle\hdimers|$.  (b) Norm
configuration for $|\vdimers\rangle$ and $\langle\vdimers|$.  (c) Norm
configuration for $|\hdimers\rangle$ and $\langle\hdimers|$.}
\label{fig:one-chequer_off-diag}
\end{figure}

For the numerator in Fig.~\ref{fig:one-chequer_off-diag}(a) to be
nonzero, the two ket strands at site $1$ must both have the same arrow
orientation, either both incoming or both outgoing.  The two bra strands
at that site then have the opposite orientation.  This fixes the
allowed arrow structure at site $2$: the two ket strands at site $2$
must have the opposite orientation to those at site $1$, and the bra
strands again have the opposite orientation to the ket strands.

There are two global choices for this pattern, corresponding to the
two possible common orientations of the ket arrows at site $1$.  For
each such choice, there are two local bra pathways at site $1$ and two
local bra pathways at site $2$.  However, the sites $1$ and $2$ carry
spin insertions, so these local resolutions are not counted with the
same weight as ordinary overlap vertices.  As explained in
Appendix~\ref{app:loops_correlations}, one may either keep an explicit
factor of $1/2$ for each such local pathway, or equivalently count only
one effective resolution at each spin-insertion site.  Thus the number
of numerator contributions from sites $1$ and $2$ is
\[
    2 \times \left(\frac{1}{2}\times 2\right)
      \times \left(\frac{1}{2}\times 2\right)
    =
    2 .
\]
This is the point at which the spin-insertion sites differ from
ordinary norm vertices: the factors of $2$ from the two possible local
pathways at each insertion site are compensated by the corresponding
factors of $1/2$.

The remaining prefactors have a separate origin.  The factor of $4$
comes from the coherent-state representation of
$\mathbf{S}_1\cdot\mathbf{S}_2$ in Eq.~\eqref{eq:matrix_element_gen}.
The factor of $3$ comes from rotational invariance:
\[
    \langle \mathbf{S}_1\cdot\mathbf{S}_2\rangle
    =
    3\langle S^z_1 S^z_2\rangle ,
\]
so it is enough to count the $z$-component contribution and multiply
by three.  Finally, the sign is fixed by the spin-insertion weights in
Eq.~\eqref{eq:spin_weight3}.  In the present matrix element, one of
the two insertion sites carries the opposite sign, giving an overall
factor of $-1$.  Therefore the numerator contributes
\[
    4 \times 3 \times 2 \times (-1).
\]
The common factors of $2\pi/3$ produced by the local coherent-state
integrals at each site cancel pairwise between numerator and
denominator.

We now turn to the denominator.  The norm terms in
Figs.~\ref{fig:one-chequer_off-diag}(b) and
\ref{fig:one-chequer_off-diag}(c) contain no spin operators.  Hence
there is no requirement that the two ket strands, or the two bra
strands, at a given site have the same orientation.  Following the
enumeration procedure leading to Eq.~\eqref{eq:AKLT_part_function} in
Appendix~\ref{app:loops_correlations}, the number of allowed
$2\text{-in}$-$2\text{-out}$ arrow resolutions at sites $1$ and $2$,
after excluding the longer indirect loop connections, is $36$ for each
of the two norm diagrams. For the norm diagram in Fig.~\ref{fig:one-chequer_off-diag}(b), at site $1$, we can admit resolutions of either one turn back or no turn back at all. The same applies for site $2$ as well. This produces a factor of $2+ (2 \times 2^2) = 6$ for each of the sites and hence we get $36$ for $\langle \psi^{\svdimers} | \psi^{\svdimers} \rangle$. Whereas the norm diagram in panel (c) of the same figure allows for zero turn back at either of the sites, or two ways of turning back once (either at site $1$ or site $2$, while the other site is a pass through or there is no turn back), or two turn backs at both sites. So, the total number of different resolutions is 
 $2+(2 \times 2^2) + 2^3 = 18$. Along with an extra multiplicative factor of $2$ resulting from the two orientations of loops connecting sites $3$ and $4$, we get the same factor of $36$. 
 
Resolutions at all other sites are common
to numerator and denominator and cancel in the normalized matrix
element.  The only uncancelled local factors are those associated with
the two spin insertions at sites $1$ and $2$.  Therefore,
\begin{equation} \label{eq:one-chequer_off-diag_elements}
    \langle \hdimers | \mathbf{S}_1 \cdot \mathbf{S}_2 | \vdimers \rangle = \frac{4 \times 3 \times 2 \times (-1)}{\sqrt{36} \times \sqrt{36}} = -\frac{2}{3} \, .
\end{equation}

The same logic applies to next-nearest-neighbour matrix elements on
adjacent crossed plaquettes.  We illustrate it using
$\langle \hdimers, \hdimers | \mathbf{S}_3 \cdot \mathbf{S}_7
| \vdimers, \hdimers \rangle$, shown in
Fig.~\ref{fig:two-chequers_off-diag}.  Here
$| \vdimers, \hdimers \rangle$ denotes a state with
$|\vdimers\rangle$ on the lower crossed plaquette and
$|\hdimers\rangle$ on the upper crossed plaquette; the bra state is
defined similarly.

\begin{figure}[h!]
  \centering
  \includegraphics[width=\linewidth]{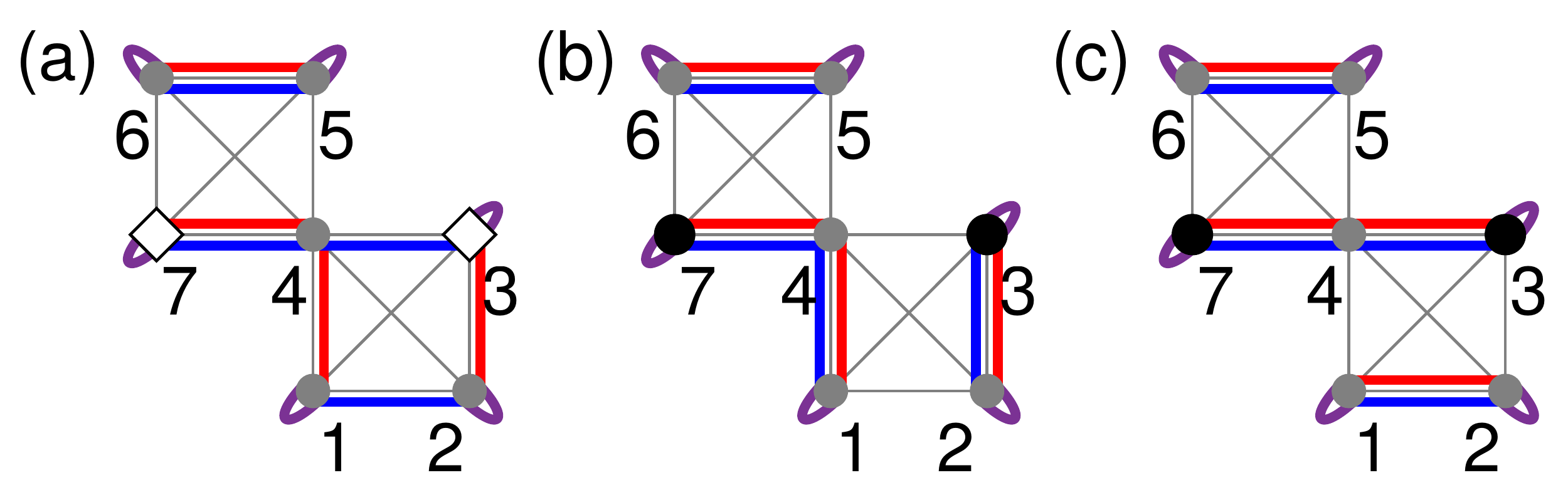}
  \caption{Singlet coverings used to evaluate a representative
off-diagonal matrix element on the crossed-plaquette pair of
Fig.~\ref{fig:two_chequers}.  Red and blue lines denote ket and bra
singlets, respectively.  In a state such as
$|\vdimers,\hdimers\rangle$, the first dimer configuration refers to
the lower crossed plaquette and the second to the upper crossed
plaquette; the same convention is used for bra states.  (a) Numerator
configuration for the ket state $|\vdimers,\hdimers\rangle$ and bra
state $\langle\hdimers,\hdimers|$.  (b) Norm configuration for
$|\vdimers,\hdimers\rangle$ and $\langle\vdimers,\hdimers|$.  (c) Norm
configuration for $|\hdimers,\hdimers\rangle$ and
$\langle\hdimers,\hdimers|$.}
\label{fig:two-chequers_off-diag}
\end{figure}

For the numerator in Fig.~\ref{fig:two-chequers_off-diag}(a) to be
nonzero, the same spin-insertion constraints as above must hold at
the two sites carrying the operators, now sites $3$ and $7$.  In
addition, the local loop resolution at the intermediate site $4$ must
connect the path from site $3$ to site $7$.  That is, a path entering
site $4$ from site $3$ is not allowed to turn back to site $3$; it must
continue through site $4$ toward site $7$. Otherwise, no single loop will contain both the sites $3$ and $7$. Thus, this no-break condition is the local statement that the spin insertions at sites $3$ and $7$ are
connected by the retained segment of the loop.  With this condition
imposed, the numerator has the same effective spin-insertion count as
in the single-plaquette calculation. Thus it contributes
\[
    4 \times 3 \times 2 .
\]
There is no minus sign in this case, because the allowed connected
resolutions have the same spin-insertion sign at sites $3$ and $7$:
whenever the two ket strands are incoming at site $3$, they are also
incoming at site $7$, and similarly for the outgoing configuration.

The no-break condition at site $4$ is not imposed in the denominator
norms, since no spin operators are inserted there and the norm only
requires allowed local loop resolutions.  The number of allowed local
resolutions at sites $3$, $4$, and $7$, with the same exclusion of
longer indirect loop connections used in the numerator, is $108$ for
each of the two denominator diagrams. Equivalently, this is the
corresponding norm-counting factor for the three sites involved in the
local two-plaquette process.  To explain this factor in more detail, consider the norm diagram in Fig.~\ref{fig:two-chequers_off-diag}(b). We have
$2+(2\times 2^2)+2^3=18$ loop resolutions, corresponding to no
turn-backs at either site $4$ or site $7$, one turn-back at one of
these two sites, or two turn-backs at both sites.  For site $3$, the
enumeration is the same as in the norm diagram of
Fig.~\ref{fig:one-chequer_off-diag}(b), giving a factor of $6$.
Together, this gives $18\times6=108$.  All other site factors again
cancel between numerator and denominator. A similar procedure for the norm diagram of Fig.~\ref{eq:two-chequer_off-diag_elements}(c) gives the same value $108$. Hence,
\begin{equation} \label{eq:two-chequer_off-diag_elements}
    \langle \hdimers, \hdimers | \mathbf{S}_3 \cdot \mathbf{S}_7 | \vdimers, \hdimers \rangle = \frac{4 \times 3 \times 2}{\sqrt{108} \times \sqrt{108}} = \frac{2}{9} \, .
\end{equation}

\section{Basis conversion for next-nearest-neighbour spin interactions}
\label{app:matrix_elements_basis_conversion}

In this appendix we spell out the conversion from the local dimer basis
used in Table~\ref{table:matrix_elements} to the orthonormal
pseudospin basis used in the effective Hamiltonians of
Sec.~\ref{sec:nnn_effective_H}.  The purpose is to make explicit how
the leading next-nearest-neighbour matrix elements between physical
spin-$1$ moments become matrices acting on the two local singlet
doublets associated with a neighbouring pair of crossed plaquettes or
tetrahedra.

The dimer-basis matrix elements in Table~\ref{table:matrix_elements}
are written in terms of the local valence-bond states
$|\hdimers\rangle$ and $|\vdimers\rangle$.  These two states are
normalized but not orthogonal:
\[
    \langle \hdimers|\hdimers\rangle
    =
    \langle \vdimers|\vdimers\rangle
    =
    1,
    \qquad
    \langle \hdimers|\vdimers\rangle
    =
    \frac{1}{2}.
\]
The orthonormal local basis is
\begin{subequations}\label{eq:basis_states_repeat}
\begin{equation}\label{eq:A_state_repeat}
\lvert A\rangle = \frac{\sqrt{3}}{3}
\left(\lvert \hdimers\rangle + \lvert \vdimers\rangle\right) \, ,
\end{equation}
\begin{equation}\label{eq:B_state_repeat}
\lvert B\rangle =
\left(\lvert \hdimers\rangle - \lvert \vdimers\rangle\right) \, .
\end{equation}
\end{subequations}
The inverse relations, which are the ones used in the conversion below,
are
\[
    |\hdimers\rangle =
    \frac{\sqrt{3}|A\rangle+|B\rangle}{2},
    \qquad
    |\vdimers\rangle =
    \frac{\sqrt{3}|A\rangle-|B\rangle}{2}.
\]

For a neighbouring pair of crossed plaquettes or tetrahedra, labelled
$1$ and $2$, we use the ordered orthonormal basis
\[
    \mathcal B_{AB}
    =
    \left\{
    |A\rangle_1|A\rangle_2,\,
    |A\rangle_1|B\rangle_2,\,
    |B\rangle_1|A\rangle_2,\,
    |B\rangle_1|B\rangle_2
    \right\}.
\]
The corresponding dimer product basis is taken to be
\[
    \mathcal B_{\rm dim}
    =
    \left\{
    |\vdimers,\vdimers\rangle,\,
    |\vdimers,\hdimers\rangle,\,
    |\hdimers,\vdimers\rangle,\,
    |\hdimers,\hdimers\rangle
    \right\},
\]
where the first entry refers to the lower crossed plaquette or
tetrahedron, and the second entry refers to the upper crossed
plaquette or tetrahedron in Figs.~\ref{fig:two_chequers} and
\ref{fig:two_tetrahedra}.  In this notation,
\[
\begin{pmatrix}
|A,A\rangle\\
|A,B\rangle\\
|B,A\rangle\\
|B,B\rangle
\end{pmatrix}
\quad
\text{is obtained from}
\quad
\begin{pmatrix}
|\vdimers,\vdimers\rangle\\
|\vdimers,\hdimers\rangle\\
|\hdimers,\vdimers\rangle\\
|\hdimers,\hdimers\rangle
\end{pmatrix}
\]
by expanding each $|A\rangle$ and $|B\rangle$ using
Eqs.~\eqref{eq:basis_states_repeat}.  Equivalently, if
$M^{\rm dim}_{pq}$ denotes the matrix of
$\mathbf S_p\cdot\mathbf S_q$ in the dimer product basis, and
$M^{AB}_{pq}$ denotes the matrix in the orthonormal
$|A\rangle,|B\rangle$ basis, then
\[
    M^{AB}_{pq}
    =
    T^{\mathsf T} M^{\rm dim}_{pq} T ,
\]
where all coefficients are real and
\[
T=
\begin{pmatrix}
\frac{1}{3} & -\frac{\sqrt{3}}{3} & -\frac{\sqrt{3}}{3} & 1\\
\frac{1}{3} &  \frac{\sqrt{3}}{3} & -\frac{\sqrt{3}}{3} & -1\\
\frac{1}{3} & -\frac{\sqrt{3}}{3} &  \frac{\sqrt{3}}{3} & -1\\
\frac{1}{3} &  \frac{\sqrt{3}}{3} &  \frac{\sqrt{3}}{3} & 1
\end{pmatrix}.
\]
The rows of $T$ are ordered as
$\mathcal B_{\rm dim}$, and the columns are ordered as
$\mathcal B_{AB}$.

As an example, consider the operator
$\mathbf S_1\cdot\mathbf S_5$.  Using the entries of
Table~\ref{table:matrix_elements}, one finds
\[
\begin{split}
{}_1\langle A|{}_2\langle A|
\mathbf S_1\cdot\mathbf S_5
|A\rangle_1|A\rangle_2
&=
\frac{1}{9}
\sum_{d,d'\in\mathcal B_{\rm dim}}
\langle d|\mathbf S_1\cdot\mathbf S_5|d'\rangle  \\
&=
\frac{1}{9}\left(\frac{16}{9}\right)
=
\frac{16}{81}.
\end{split}
\]
Similarly,
\[
\begin{aligned}
{}_1\langle A|{}_2\langle A|
\mathcal O_{15}
|A\rangle_1|B\rangle_2
&=
-\frac{8\sqrt{3}}{81},\\
{}_1\langle A|{}_2\langle A|
\mathcal O_{15}
|B\rangle_1|B\rangle_2
&=
\frac{4}{27},
\end{aligned}
\qquad
\mathcal O_{15}\equiv \mathbf S_1\cdot\mathbf S_5 .
\]
The remaining entries are obtained in the same way.  Since, the
microscopic operator is Hermitian and the basis coefficients are real,
the matrices below are real symmetric.  We nevertheless keep
$\operatorname{Re}$ in the intermediate expressions to emphasize that
these are the real parts of the corresponding off-diagonal matrix
elements.  In the matrix equations below, the notation
$\langle \mathbf S_p\cdot\mathbf S_q\rangle$ denotes the matrix
representation of the operator $\mathbf S_p\cdot\mathbf S_q$ in the
ordered basis $\mathcal B_{AB}$.

Using Eqs.~\eqref{eq:basis_states_repeat} and the entries of
Table~\ref{table:matrix_elements}, we obtain
\begin{subequations}\label{AB_elements_15}
\begin{align}
 &{}_1\langle A|{}_2\langle A| \left(\mathbf{S}_1\cdot\mathbf{S}_5 \right)| A \rangle_1 | A \rangle_2 = +\frac{16}{81} \, ,\\
 &{}_1\langle A|{}_2\langle B| \left(\mathbf{S}_1\cdot\mathbf{S}_5 \right)| A \rangle_1 | B \rangle_2 = 0 \, ,\\
 &{}_1\langle B|{}_2\langle A| \left(\mathbf{S}_1\cdot\mathbf{S}_5 \right)| B \rangle_1 | A \rangle_2 = 0 \, ,\\
 &{}_1\langle B|{}_2\langle B| \left(\mathbf{S}_1\cdot\mathbf{S}_5 \right)| B \rangle_1 | B \rangle_2 = 0 \, ,\\
 &\operatorname{Re}\left({}_1\langle A|{}_2\langle A| \left(\mathbf{S}_1\cdot\mathbf{S}_5 \right)| A \rangle_1 | B \rangle_2\right)  = -\frac {8\sqrt{3}}{81} \, ,\\
 &\operatorname{Re}\left({}_1\langle A|{}_2\langle A| \left(\mathbf{S}_1\cdot\mathbf{S}_5 \right)| B \rangle_1 | A \rangle_2\right)  = -\frac {8\sqrt{3}}{81} \, ,\\
 &\operatorname{Re}\left({}_1\langle A|{}_2\langle A| \left(\mathbf{S}_1\cdot\mathbf{S}_5 \right)| B \rangle_1 | B \rangle_2\right) =+\frac{4}{27} \, ,\\
 &\operatorname{Re}\left({}_1\langle A|{}_2\langle B| \left(\mathbf{S}_1\cdot\mathbf{S}_5 \right)| B \rangle_1 | A \rangle_2\right) = +\frac{4}{27} \, ,\\
 &\operatorname{Re}\left({}_1\langle A|{}_2\langle B| \left(\mathbf{S}_1\cdot\mathbf{S}_5 \right)| B \rangle_1 | B \rangle_2\right)  = 0 \, ,\\
 &\operatorname{Re}\left({}_1\langle B|{}_2\langle A| \left(\mathbf{S}_1\cdot\mathbf{S}_5 \right)| B \rangle_1 | B \rangle_2\right)  = 0 \, .
\end{align}
\end{subequations}
\begin{align} \label{AB_matrix_15}
\begin{split}
    \langle \mathbf{S}_1 \cdot \mathbf{S}_5 \rangle =& \begin{pmatrix}
                    \frac{16}{81} & -\frac{8 \sqrt{3}}{81} & -\frac{8 \sqrt{3}}{81} & \frac{4}{27}\\
                    -\frac{8 \sqrt{3}}{81} & 0 & \frac{4}{27} & 0\\
                    -\frac{8 \sqrt{3}}{81} & \frac{4}{27} & 0 & 0\\
                    \frac{4}{27} & 0 & 0 & 0
                \end{pmatrix} .
\end{split}
\end{align}

\begin{subequations}\label{AB_elements_16}
\begin{align}
 &{}_1\langle A|{}_2\langle A| \left(\mathbf{S}_1\cdot\mathbf{S}_6 \right)| A \rangle_1 | A \rangle_2 = -\frac{8}{81} \, ,\\
 &{}_1\langle A|{}_2\langle B| \left(\mathbf{S}_1\cdot\mathbf{S}_6 \right)| A \rangle_1 | B \rangle_2 = +\frac{8}{27} \, ,\\
 &{}_1\langle B|{}_2\langle A| \left(\mathbf{S}_1\cdot\mathbf{S}_6 \right)| B \rangle_1 | A \rangle_2 = 0 \, ,\\
 &{}_1\langle B|{}_2\langle B| \left(\mathbf{S}_1\cdot\mathbf{S}_6 \right)| B \rangle_1 | B \rangle_2 = 0 \, ,\\
 &\operatorname{Re}\left({}_1\langle A|{}_2\langle A| \left(\mathbf{S}_1\cdot\mathbf{S}_6 \right)| A \rangle_1 | B \rangle_2\right)  = 0 \, ,\\
 &\operatorname{Re}\left({}_1\langle A|{}_2\langle A| \left(\mathbf{S}_1\cdot\mathbf{S}_6 \right)| B \rangle_1 | A \rangle_2\right)  = +\frac {4\sqrt{3}}{81} \, ,\\
 &\operatorname{Re}\left({}_1\langle A|{}_2\langle A| \left(\mathbf{S}_1\cdot\mathbf{S}_6 \right)| B \rangle_1 | B \rangle_2\right) = 0 \, ,\\
 &\operatorname{Re}\left({}_1\langle A|{}_2\langle B| \left(\mathbf{S}_1\cdot\mathbf{S}_6 \right)| B \rangle_1 | A \rangle_2\right) = 0 \, ,\\
 &\operatorname{Re}\left({}_1\langle A|{}_2\langle B| \left(\mathbf{S}_1\cdot\mathbf{S}_6 \right)| B \rangle_1 | B \rangle_2\right)  = -\frac{4 \sqrt{3}}{27} \, ,\\
 &\operatorname{Re}\left({}_1\langle B|{}_2\langle A| \left(\mathbf{S}_1\cdot\mathbf{S}_6 \right)| B \rangle_1 | B \rangle_2\right)  = 0 \, .
\end{align}
\end{subequations}
\begin{align} \label{AB_matrix_16}
\begin{split}
    \langle \mathbf{S}_1 \cdot \mathbf{S}_6 \rangle =& \begin{pmatrix}
                    -\frac{8}{81} & 0 & \frac{4 \sqrt{3}}{81} & 0\\
                    0 & \frac{8}{27} & 0 & -\frac{4 \sqrt{3}}{27}\\
                    \frac{4 \sqrt{3}}{81} & 0 & 0 & 0\\
                    0 & -\frac{4 \sqrt{3}}{27} & 0 & 0
                \end{pmatrix} .
\end{split}
\end{align}

\begin{subequations}\label{AB_elements_17}
\begin{align}
 &{}_1\langle A|{}_2\langle A| \left(\mathbf{S}_1\cdot\mathbf{S}_7 \right)| A \rangle_1 | A \rangle_2 = +\frac{16}{81} \, ,\\
 &{}_1\langle A|{}_2\langle B| \left(\mathbf{S}_1\cdot\mathbf{S}_7 \right)| A \rangle_1 | B \rangle_2 = 0 \, ,\\
 &{}_1\langle B|{}_2\langle A| \left(\mathbf{S}_1\cdot\mathbf{S}_7 \right)| B \rangle_1 | A \rangle_2 = 0 \, ,\\
 &{}_1\langle B|{}_2\langle B| \left(\mathbf{S}_1\cdot\mathbf{S}_7 \right)| B \rangle_1 | B \rangle_2 = 0 \, ,\\
 &\operatorname{Re}\left({}_1\langle A|{}_2\langle A| \left(\mathbf{S}_1\cdot\mathbf{S}_7 \right)| A \rangle_1 | B \rangle_2\right)  = \frac {8\sqrt{3}}{81} \, ,\\
 &\operatorname{Re}\left({}_1\langle A|{}_2\langle A| \left(\mathbf{S}_1\cdot\mathbf{S}_7 \right)| B \rangle_1 | A \rangle_2\right)  = -\frac {8\sqrt{3}}{81} \, ,\\
 &\operatorname{Re}\left({}_1\langle A|{}_2\langle A| \left(\mathbf{S}_1\cdot\mathbf{S}_7 \right)| B \rangle_1 | B \rangle_2\right) = -\frac{4}{27} \, ,\\
 &\operatorname{Re}\left({}_1\langle A|{}_2\langle B| \left(\mathbf{S}_1\cdot\mathbf{S}_7 \right)| B \rangle_1 | A \rangle_2\right) = -\frac{4}{27} \, ,\\
 &\operatorname{Re}\left({}_1\langle A|{}_2\langle B| \left(\mathbf{S}_1\cdot\mathbf{S}_7 \right)| B \rangle_1 | B \rangle_2\right)  = 0 \, ,\\
 &\operatorname{Re}\left({}_1\langle B|{}_2\langle A| \left(\mathbf{S}_1\cdot\mathbf{S}_7 \right)| B \rangle_1 | B \rangle_2\right)  = 0 \, .
\end{align}
\end{subequations}
\begin{align} \label{AB_matrix_17}
\begin{split}
    \langle \mathbf{S}_1 \cdot \mathbf{S}_7 \rangle =& \begin{pmatrix}
                    \frac{16}{81} & \frac{8 \sqrt{3}}{81} & -\frac{8 \sqrt{3}}{81} & -\frac{4}{27}\\
                    \frac{8 \sqrt{3}}{81} & 0 & -\frac{4}{27} & 0\\
                    -\frac{8 \sqrt{3}}{81} & -\frac{4}{27} & 0 & 0\\
                    -\frac{4}{27} & 0 & 0 & 0
                \end{pmatrix} .
\end{split}
\end{align}

\begin{subequations}\label{AB_elements_25}
\begin{align}
 &{}_1\langle A|{}_2\langle A| \left(\mathbf{S}_2\cdot\mathbf{S}_5 \right)| A \rangle_1 | A \rangle_2 = -\frac{8}{81} \, ,\\
 &{}_1\langle A|{}_2\langle B| \left(\mathbf{S}_2\cdot\mathbf{S}_5 \right)| A \rangle_1 | B \rangle_2 = 0 \, ,\\
 &{}_1\langle B|{}_2\langle A| \left(\mathbf{S}_2\cdot\mathbf{S}_5 \right)| B \rangle_1 | A \rangle_2 = +\frac{8}{27} \, ,\\
 &{}_1\langle B|{}_2\langle B| \left(\mathbf{S}_2\cdot\mathbf{S}_5 \right)| B \rangle_1 | B \rangle_2 = 0 \, ,\\
 &\operatorname{Re}\left({}_1\langle A|{}_2\langle A| \left(\mathbf{S}_2\cdot\mathbf{S}_5 \right)| A \rangle_1 | B \rangle_2\right)  = +\frac{4 \sqrt{3}}{81} \, ,\\
 &\operatorname{Re}\left({}_1\langle A|{}_2\langle A| \left(\mathbf{S}_2\cdot\mathbf{S}_5 \right)| B \rangle_1 | A \rangle_2\right)  = 0 \, ,\\
 &\operatorname{Re}\left({}_1\langle A|{}_2\langle A| \left(\mathbf{S}_2\cdot\mathbf{S}_5 \right)| B \rangle_1 | B \rangle_2\right) = 0 \, ,\\
 &\operatorname{Re}\left({}_1\langle A|{}_2\langle B| \left(\mathbf{S}_2\cdot\mathbf{S}_5 \right)| B \rangle_1 | A \rangle_2\right) = 0 \, ,\\
 &\operatorname{Re}\left({}_1\langle A|{}_2\langle B| \left(\mathbf{S}_2\cdot\mathbf{S}_5 \right)| B \rangle_1 | B \rangle_2\right)  = 0 \, ,\\
 &\operatorname{Re}\left({}_1\langle B|{}_2\langle A| \left(\mathbf{S}_2\cdot\mathbf{S}_5 \right)| B \rangle_1 | B \rangle_2\right)  = -\frac{4 \sqrt{3}}{27} \, .
\end{align}
\end{subequations}
\begin{align} \label{AB_matrix_25}
\begin{split}
    \langle \mathbf{S}_2 \cdot \mathbf{S}_5 \rangle =& \begin{pmatrix}
                    -\frac{8}{81} & \frac{4 \sqrt{3}}{81} & 0 & 0\\
                    \frac{4 \sqrt{3}}{81} & 0 & 0 & 0\\
                    0 & 0 & \frac{8}{27} & -\frac{4 \sqrt{3}}{27}\\
                    0 & 0 & -\frac{4 \sqrt{3}}{27} & 0
                \end{pmatrix} .
\end{split}
\end{align}

\begin{subequations}\label{AB_elements_26}
\begin{align}
 &{}_1\langle A|{}_2\langle A| \left(\mathbf{S}_2\cdot\mathbf{S}_6 \right)| A \rangle_1 | A \rangle_2 = +\frac{4}{81} \, ,\\
 &{}_1\langle A|{}_2\langle B| \left(\mathbf{S}_2\cdot\mathbf{S}_6 \right)| A \rangle_1 | B \rangle_2 = -\frac{4}{27} \, ,\\
 &{}_1\langle B|{}_2\langle A| \left(\mathbf{S}_2\cdot\mathbf{S}_6 \right)| B \rangle_1 | A \rangle_2 = -\frac{4}{27} \, ,\\
 &{}_1\langle B|{}_2\langle B| \left(\mathbf{S}_2\cdot\mathbf{S}_6 \right)| B \rangle_1 | B \rangle_2 = +{4}/{9} \, ,\\
 &\operatorname{Re}\left({}_1\langle A|{}_2\langle A| \left(\mathbf{S}_2\cdot\mathbf{S}_6 \right)| A \rangle_1 | B \rangle_2\right)  = 0 \, ,\\
 &\operatorname{Re}\left({}_1\langle A|{}_2\langle A| \left(\mathbf{S}_2\cdot\mathbf{S}_6 \right)| B \rangle_1 | A \rangle_2\right)  = 0 \, ,\\
 &\operatorname{Re}\left({}_1\langle A|{}_2\langle A| \left(\mathbf{S}_2\cdot\mathbf{S}_6 \right)| B \rangle_1 | B \rangle_2\right) = 0 \, ,\\
 &\operatorname{Re}\left({}_1\langle A|{}_2\langle B| \left(\mathbf{S}_2\cdot\mathbf{S}_6 \right)| B \rangle_1 | A \rangle_2\right) = 0 \, ,\\
 &\operatorname{Re}\left({}_1\langle A|{}_2\langle B| \left(\mathbf{S}_2\cdot\mathbf{S}_6 \right)| B \rangle_1 | B \rangle_2\right)  = 0 \, ,\\
 &\operatorname{Re}\left({}_1\langle B|{}_2\langle A| \left(\mathbf{S}_2\cdot\mathbf{S}_6 \right)| B \rangle_1 | B \rangle_2\right)  = 0 \, .
\end{align}
\end{subequations}
\begin{align} \label{AB_matrix_26}
\begin{split}
    \langle \mathbf{S}_2 \cdot \mathbf{S}_6 \rangle =& \begin{pmatrix}
                    \frac{4}{81} & 0 & 0 & 0\\
                    0 & -\frac{4}{27} & 0 & 0\\
                    0 & 0 & -\frac{4}{27} & 0\\
                    0 & 0 & 0 & \frac{4}{9}
                \end{pmatrix} .
\end{split}
\end{align}

\begin{subequations}\label{AB_elements_27}
\begin{align}
 &{}_1\langle A|{}_2\langle A| \left(\mathbf{S}_2\cdot\mathbf{S}_7 \right)| A \rangle_1 | A \rangle_2 = -\frac{8}{81} \, ,\\
 &{}_1\langle A|{}_2\langle B| \left(\mathbf{S}_2\cdot\mathbf{S}_7 \right)| A \rangle_1 | B \rangle_2 = 0 \, ,\\
 &{}_1\langle B|{}_2\langle A| \left(\mathbf{S}_2\cdot\mathbf{S}_7 \right)| B \rangle_1 | A \rangle_2 = +\frac{8}{27} \, ,\\
 &{}_1\langle B|{}_2\langle B| \left(\mathbf{S}_2\cdot\mathbf{S}_7 \right)| B \rangle_1 | B \rangle_2 = 0 \, ,\\
 &\operatorname{Re}\left({}_1\langle A|{}_2\langle A| \left(\mathbf{S}_2\cdot\mathbf{S}_7 \right)| A \rangle_1 | B \rangle_2\right)  = -\frac{4 \sqrt{3}}{81} \, ,\\
 &\operatorname{Re}\left({}_1\langle A|{}_2\langle A| \left(\mathbf{S}_2\cdot\mathbf{S}_7 \right)| B \rangle_1 | A \rangle_2\right) = 0 \, ,\\
 &\operatorname{Re}\left({}_1\langle A|{}_2\langle A| \left(\mathbf{S}_2\cdot\mathbf{S}_7 \right)| B \rangle_1 | B \rangle_2\right) = 0 \, ,\\
 &\operatorname{Re}\left({}_1\langle A|{}_2\langle B| \left(\mathbf{S}_2\cdot\mathbf{S}_7 \right)| B \rangle_1 | A \rangle_2\right) = 0 \, ,\\
 &\operatorname{Re}\left({}_1\langle A|{}_2\langle B| \left(\mathbf{S}_2\cdot\mathbf{S}_7 \right)| B \rangle_1 | B \rangle_2\right)  = 0 \, ,\\
 &\operatorname{Re}\left({}_1\langle B|{}_2\langle A| \left(\mathbf{S}_2\cdot\mathbf{S}_7 \right)| B \rangle_1 | B \rangle_2\right)  = +\frac{4 \sqrt{3}}{27} \, .
\end{align}
\end{subequations}
\begin{align} \label{AB_matrix_27}
\begin{split}
    \langle \mathbf{S}_2 \cdot \mathbf{S}_7 \rangle =& \begin{pmatrix}
                    -\frac{8}{81} & -\frac{4 \sqrt{3}}{81} & 0 & 0\\
                    -\frac{4 \sqrt{3}}{81} & 0 & 0 & 0\\
                    0 & 0 & \frac{8}{27} & \frac{4 \sqrt{3}}{27}\\
                    0 & 0 & \frac{4 \sqrt{3}}{27} & 0
                \end{pmatrix} .
\end{split}
\end{align}

\begin{subequations}\label{AB_elements_35}
\begin{align}
 &{}_1\langle A|{}_2\langle A| \left(\mathbf{S}_3\cdot\mathbf{S}_5 \right)| A \rangle_1 | A \rangle_2 = +\frac{16}{81} \, ,\\
 &{}_1\langle A|{}_2\langle B| \left(\mathbf{S}_3\cdot\mathbf{S}_5 \right)| A \rangle_1 | B \rangle_2 = 0 \, ,\\
 &{}_1\langle B|{}_2\langle A| \left(\mathbf{S}_3\cdot\mathbf{S}_5 \right)| B \rangle_1 | A \rangle_2 = 0 \, ,\\
 &{}_1\langle B|{}_2\langle B| \left(\mathbf{S}_3\cdot\mathbf{S}_5 \right)| B \rangle_1 | B \rangle_2 = 0 \, ,\\
 &\operatorname{Re}\left({}_1\langle A|{}_2\langle A| \left(\mathbf{S}_3\cdot\mathbf{S}_5 \right)| A \rangle_1 | B \rangle_2\right)  = -\frac{8 \sqrt{3}}{81} \, ,\\
 &\operatorname{Re}\left({}_1\langle A|{}_2\langle A| \left(\mathbf{S}_3\cdot\mathbf{S}_5 \right)| B \rangle_1 | A \rangle_2\right)  = +\frac{8 \sqrt{3}}{81} \, ,\\
 &\operatorname{Re}\left({}_1\langle A|{}_2\langle A| \left(\mathbf{S}_3\cdot\mathbf{S}_5 \right)| B \rangle_1 | B \rangle_2\right) = -\frac{4}{27} \, ,\\
 &\operatorname{Re}\left({}_1\langle A|{}_2\langle B| \left(\mathbf{S}_3\cdot\mathbf{S}_5 \right)| B \rangle_1 | A \rangle_2\right) = -\frac{4}{27} \, ,\\
 &\operatorname{Re}\left({}_1\langle A|{}_2\langle B| \left(\mathbf{S}_3\cdot\mathbf{S}_5 \right)| B \rangle_1 | B \rangle_2\right)  = 0 \, ,\\
 &\operatorname{Re}\left({}_1\langle B|{}_2\langle A| \left(\mathbf{S}_3\cdot\mathbf{S}_5 \right)| B \rangle_1 | B \rangle_2\right)  = 0 \, .
\end{align}
\end{subequations}
\begin{align} \label{AB_matrix_35}
\begin{split}
    \langle \mathbf{S}_3 \cdot \mathbf{S}_5 \rangle =& \begin{pmatrix}
                    \frac{16}{81} & -\frac{8 \sqrt{3}}{81} & \frac{8 \sqrt{3}}{81} & -\frac{4}{27}\\
                    -\frac{8 \sqrt{3}}{81} & 0 & -\frac{4}{27} & 0\\
                    \frac{8 \sqrt{3}}{81} & -\frac{4}{27} & 0 & 0\\
                    -\frac{4}{27} & 0 & 0 & 0
                \end{pmatrix} .
\end{split}
\end{align}

\begin{subequations}\label{AB_elements_36}
\begin{align}
 &{}_1\langle A|{}_2\langle A| \left(\mathbf{S}_3\cdot\mathbf{S}_6 \right)| A \rangle_1 | A \rangle_2 = -\frac{8}{81} \, ,\\
 &{}_1\langle A|{}_2\langle B| \left(\mathbf{S}_3\cdot\mathbf{S}_6 \right)| A \rangle_1 | B \rangle_2 = +\frac{8}{27} \, ,\\
 &{}_1\langle B|{}_2\langle A| \left(\mathbf{S}_3\cdot\mathbf{S}_6 \right)| B \rangle_1 | A \rangle_2 = 0 \, ,\\
 &{}_1\langle B|{}_2\langle B| \left(\mathbf{S}_3\cdot\mathbf{S}_6 \right)| B \rangle_1 | B \rangle_2 = 0 \, ,\\
 &\operatorname{Re}\left({}_1\langle A|{}_2\langle A| \left(\mathbf{S}_3\cdot\mathbf{S}_6 \right)| A \rangle_1 | B \rangle_2\right)  = 0 \, ,\\
 &\operatorname{Re}\left({}_1\langle A|{}_2\langle A| \left(\mathbf{S}_3\cdot\mathbf{S}_6 \right)| B \rangle_1 | A \rangle_2\right)  = -\frac{4 \sqrt{3}}{81} \, ,\\
 &\operatorname{Re}\left({}_1\langle A|{}_2\langle A| \left(\mathbf{S}_3\cdot\mathbf{S}_6 \right)| B \rangle_1 | B \rangle_2\right) = 0 \, ,\\
 &\operatorname{Re}\left({}_1\langle A|{}_2\langle B| \left(\mathbf{S}_3\cdot\mathbf{S}_6 \right)| B \rangle_1 | A \rangle_2\right) = 0 \, ,\\
 &\operatorname{Re}\left({}_1\langle A|{}_2\langle B| \left(\mathbf{S}_3\cdot\mathbf{S}_6 \right)| B \rangle_1 | B \rangle_2\right)  = +\frac{4 \sqrt{3}}{27} \, ,\\
 &\operatorname{Re}\left({}_1\langle B|{}_2\langle A| \left(\mathbf{S}_3\cdot\mathbf{S}_6 \right)| B \rangle_1 | B \rangle_2\right)  = 0 \, .
\end{align}
\end{subequations}
\begin{align} \label{AB_matrix_36}
\begin{split}
    \langle \mathbf{S}_3 \cdot \mathbf{S}_6 \rangle =& \begin{pmatrix}
                    -\frac{8}{81} & 0 & -\frac{4 \sqrt{3}}{81} & 0\\
                    0 & \frac{8}{27} & 0 & \frac{4 \sqrt{3}}{27}\\
                    -\frac{4 \sqrt{3}}{81} & 0 & 0 & 0\\
                    0 & \frac{4 \sqrt{3}}{27} & 0 & 0
                \end{pmatrix} .
\end{split}
\end{align}

\begin{subequations}\label{AB_elements_37}
\begin{align}
 &{}_1\langle A|{}_2\langle A| \left(\mathbf{S}_3\cdot\mathbf{S}_7 \right)| A \rangle_1 | A \rangle_2 = +\frac{16}{81} \, ,\\
 &{}_1\langle A|{}_2\langle B| \left(\mathbf{S}_3\cdot\mathbf{S}_7 \right)| A \rangle_1 | B \rangle_2 = 0 \, ,\\
 &{}_1\langle B|{}_2\langle A| \left(\mathbf{S}_3\cdot\mathbf{S}_7 \right)| B \rangle_1 | A \rangle_2 = 0 \, ,\\
 &{}_1\langle B|{}_2\langle B| \left(\mathbf{S}_3\cdot\mathbf{S}_7 \right)| B \rangle_1 | B \rangle_2 = 0 \, ,\\
 &\operatorname{Re}\left({}_1\langle A|{}_2\langle A| \left(\mathbf{S}_3\cdot\mathbf{S}_7 \right)| A \rangle_1 | B \rangle_2\right)  = +\frac{8 \sqrt{3}}{81} \, ,\\
 &\operatorname{Re}\left({}_1\langle A|{}_2\langle A| \left(\mathbf{S}_3\cdot\mathbf{S}_7 \right)| B \rangle_1 | A \rangle_2\right)  = +\frac{8 \sqrt{3}}{81} \, ,\\
 &\operatorname{Re}\left({}_1\langle A|{}_2\langle A| \left(\mathbf{S}_3\cdot\mathbf{S}_7 \right)| B \rangle_1 | B \rangle_2\right) = +\frac{4}{27} \, ,\\
 &\operatorname{Re}\left({}_1\langle A|{}_2\langle B| \left(\mathbf{S}_3\cdot\mathbf{S}_7 \right)| B \rangle_1 | A \rangle_2\right) = +\frac{4}{27} \, ,\\
 &\operatorname{Re}\left({}_1\langle A|{}_2\langle B| \left(\mathbf{S}_3\cdot\mathbf{S}_7 \right)| B \rangle_1 | B \rangle_2\right)  = 0 \, ,\\
 &\operatorname{Re}\left({}_1\langle B|{}_2\langle A| \left(\mathbf{S}_3\cdot\mathbf{S}_7 \right)| B \rangle_1 | B \rangle_2\right)  = 0 \, .
\end{align}
\end{subequations}
\begin{align} \label{AB_matrix_37}
\begin{split}
    \langle \mathbf{S}_3 \cdot \mathbf{S}_7 \rangle =& \begin{pmatrix}
                    \frac{16}{81} & \frac{8 \sqrt{3}}{81} & \frac{8 \sqrt{3}}{81} & \frac{4}{27}\\
                    \frac{8 \sqrt{3}}{81} & 0 & \frac{4}{27} & 0\\
                    \frac{8 \sqrt{3}}{81} & \frac{4}{27} & 0 & 0\\
                    \frac{4}{27} & 0 & 0 & 0
                \end{pmatrix} .
\end{split}
\end{align}

The subscripts $1$ and $2$ on the states
$\lvert A \rangle$ and $\lvert B \rangle$ refer to the lower and upper
crossed plaquettes, or equivalently to the lower and upper tetrahedra,
in Figs.~\ref{fig:two_chequers} and \ref{fig:two_tetrahedra}.  The
matrices in Eqs.~\eqref{AB_matrix_15}--\eqref{AB_matrix_37} are then
combined with the coupling definitions in
Eqs.~\eqref{eq:checkerboard_coupling_def} and
\eqref{eq:pyrochlore_coupling_def}.  Summing the appropriate
symmetry-related matrices gives the projected next-nearest-neighbour
effective Hamiltonians in
Eqs.~\eqref{eq:eff_Hamiltonian_checkerboard} and
\eqref{eq:eff_Hamiltonian_pyrochlore}.

A useful check is that the irrational coefficients proportional to
$\sqrt{3}$ cancel once the spin-pair matrices are summed within the
symmetry-allowed coupling sectors.  Thus, the final effective
Hamiltonians in the main text contain only rational coefficients.  This
cancellation, which renders only the diagonal and anti-diagonal entries non-zero, is a nontrivial consequence of combining the local
orthonormalization of the nonorthogonal dimer basis with the spatial
symmetries of the checkerboard and pyrochlore lattices.

%

\end{document}